\newtheorem{theor}{Theorem}
\theoremstyle{definition}
\newtheorem{proposition}[theor]{Proposition}
\newtheorem{lemma}[theor]{Lemma}
\newtheorem{cor}[theor]{Corollary}
\newtheorem{define}{Definition}
\newtheorem{example}{Example}[section]
\theoremstyle{remark}
\newtheorem{rem}{Remark}[section]
\newcommand{\BBR}{\mathbb{R}}\newcommand{\BBC}{\mathbb{C}}
\newcommand{\BBN}{\mathbb{N}}
\newcommand{\BBZ}{\mathbb{Z}}
\newcommand{\cE}{\mathcal{E}}
\newcommand{\cEEL}{{\cE}_{\text{\textup{EL}}}}
\newcommand{\cF}{\mathcal{F}}
\newcommand{\cL}{\mathcal{L}}
\newcommand{\cO}{\mathcal{O}}
\newcommand{\bc}{{\mathbf{c}}}
\newcommand{\boldi}{{\boldsymbol{i}}}
\newcommand{\bp}{{\boldsymbol{p}}}
\newcommand{\bq}{{\boldsymbol{q}}}
\newcommand{\bolds}{{\boldsymbol{s}}}
\newcommand{\bu}{{\boldsymbol{u}}}
\newcommand{\bx}{{\boldsymbol{x}}}
\newcommand{\bby}{{\boldsymbol{y}}}
\newcommand{\bz}{{\boldsymbol{z}}}
\newcommand{\bF}{{\boldsymbol{F}}}
\newcommand{\bQ}{{\boldsymbol{Q}}}
\newcommand{\bzeta}{{\boldsymbol{\zeta}}}
\newcommand{\gN}{\mathfrak{N}}
\newcommand{\gM}{\mathfrak{M}}
\newcommand{\veps}{\varepsilon}
\newcommand{\dd}{\partial}
\newcommand{\Id}{{\mathrm d}}
\newcommand{\diftat}[2]{ \left. \frac{\Id}{\Id #1} \right|_{#1=#2} }    
\newcommand{\BV}{{\text{\textup{BV}}}}
\DeclareMathOperator{\rank}{rank}
\DeclareMathOperator{\dvol}{d
vol}
\DeclareMathOperator{\const}{const}
\DeclareMathOperator{\GH}{gh}
\DeclareMathOperator*{\bigotimesk}{{\bigotimes\nolimits_{\Bbbk}}}
\newcommand{\ov}{\overline}
\newcommand{\EL}{{\text{EL}}}
\newcommand{\schouten}[1]{\lshad {#1} \rshad}
\newcommand{\by}[1]{\textrm{{#1}}}
\newcommand{\jour}[1]{\textit{{#1}}}
\newcommand{\vol}[1]{\textbf{{#1}}}
\newcommand{\book}[1]{\textit{{#1}}}
\begin{document} 
\title
{The geometry of variations\\ in Batalin\/--\/Vilkovisky formalism}


\author
{Arthemy V. Kiselev}
\address{
Johann Ber\-nou\-lli Institute for Mathematics and Computer Science, University of Groningen,
P.O.~Box 407, 9700~AK Groningen, The Netherlands} 
\ead{
A.V.Kiselev@rug.nl}



\begin{abstract}
We explain why no sources of divergence are built into the Batalin\/--\/Vilkovisky (BV)~Laplacian, 
whence there is no need to postulate any \textsl{ad hoc} conventions 
such as ``$\boldsymbol{\delta}(0)=0$'' and ``$\log\boldsymbol{\delta}(0)=0$''
within 
BV-\/approach to quantisation of gauge systems. 
Remarkably, the geometry of iterated variations
does not refer at all to the construction of Dirac's 
$\boldsymbol{\delta}$-\/function 
as a limit of smooth kernels. We illustrate the reasoning by re\/-\/deriving --but not just `formally postulating'-- the standard properties of 
BV-\/Laplacian and Schouten bracket and by verifying their basic inter\/-\/relations (e.g., 
cohomology preservation by gauge symmetries of the quantum master\/-\/equation).
\end{abstract}

\pagestyle{plain}

\section*{Introduction}
This is a paper about geometry of variations. 
We formulate 
definitions of the objects and structures which are cornerstones of 
Batalin\/--\/Vilkovisky formalism~\cite{BarnichBrandtHenneaux1995,BV,GitmanTyutin,HenneauxTeitelboim,ZinnJustin1975}. 
To confirm 
the intrinsic self\/-\/regularisation of 
BV-\/Laplacian, 
we explain why there are no divergencies in it (such excessive elements are traditionally encoded by using derivatives 
of Dirac's $\boldsymbol{\delta}$-\/distribution). 
Namely, we specify the geometry in which the following canonical inter\/-\/relations between the variational Schouten bracket~$\lshad\,,\,\rshad$ and BV-\/Laplacian~$\Delta$ are rigorously proven for any BV-\/functionals~$F,G,H$:
\begin{subequations}\label{EqAllIntro}
\begin{align}
\lshad F,G\cdot H\rshad &= \lshad F,G\rshad\cdot H 
+ (-)^{(|F|-1)\cdot|G|} G\cdot\lshad F,H\rshad,
\\
\Delta(F\cdot G) &= \Delta F\cdot G + (-)^{|F|}\lshad F,G\rshad
+(-)^{|F|}F\cdot\Delta G,\label{EqDeviationDerivationIntro}
\\
\Delta\bigl(\lshad F,G\rshad\bigr) &= \lshad\Delta F,G\rshad 
+(-)^{|F|-1}\lshad F,\Delta G\rshad,
\label{EqZimes}
\\
\Delta^2 &= 0\qquad \Longleftrightarrow\qquad \text{Jacobi}\bigl(\lshad\,,\,\rshad\bigr)=0.\label{EqDeltaSquareIntro}
\end{align}
\end{subequations}
There is an immense literature on this subject's intrinsic difficulties and attempts of regularisation of apparent divergencies in it (e.g., see~\cite{Costello2007,CostelloBook,HidingAnomalies,TroostNieuwProeyen,VTSh} vs~\cite{GomisParisSamuel}). While the BV-\/quantisation technique has advanced far from its sources~\cite{BV,BRST}, it is still admitted that it lacks sound mathematical consistency (\cite[\S15]{HenneauxTeitelboim} or~\cite[\S3]{BarnichHabilitation}). The calculus in this field is thus reduced to formal operation with expressions which are expected 
to render 
the theory's main objects and structures. Several \textsl{ad hoc} techniques for cancellation of divergencies, allowing one to strike through calculations and obtain meaningful results, are adopted by repetition; we 
briefly review the plurality of such tricks in what follows.

Our reasoning is independent from such conventional schemes for cancellation of infinities or from other practised roundabouts for regularisation of terms which are believed to be infinite (e.g., by erasing 
`infinite constants' \cite{CattaneoFelderCMP2000}). In particular, we do not pronounce the traditional password
\begin{equation}\label{EqSur}
\boldsymbol{\delta}(0)\mathrel{{:}{=}}0
\end{equation}
which lets one enter the existing paradigm and use its quantum alchemistry for operation with what remains from 
Dirac's $\boldsymbol{\delta}$-\/distribution.\footnote{Another convention is
$\log\boldsymbol{\delta}(0)=0$; 
we show that natural counterparts of the true geometry of variations lead to 
this intuitive convention and simultaneously to~\eqref{EqSur} --- none of the two being actually 
required.}
Our message is this: we do not propose to replace `bad slogans' with `good slogans,' which would mean that a choice of
conventions is still left to the one who attempts regularisation in the BV-\/setup. Such deficiency 
would symptomise that 
the theory remains a formal procedure. We now focus on the true sources of known difficulties. 
By analysing the geometry of variations of functionals at a very basic level, we prove the absence of apparently divergent
essences. The intrinsically regularised 
definitions of the BV-\/Laplacian~$\Delta$ and Schouten bracket~$\lshad\,,\,\rshad$ are the main result of this paper.

The new understanding leaves intact 
but substantiates the bulk of results which have been obtained 
by using various \textsl{ad hoc} 
techniques (that is, explicitly or tacitly referring to the surreal 
equalities $\boldsymbol{\delta}(0)=0$
and $\log\boldsymbol{\delta}(0)=0$); we refer to a detailed review~\cite{BarnichHabilitation} for an account of early developments in BV-\/formalism.
We do not aim at a reformulation or reproduction of any old or recent achievements, accomplishing here a different task.

In fact, we invent nothing new. 
It is the coupling of dual vector spaces which ensures the intrinsic self\/-\/regularisation of BV-\/Laplacian and validity of equalities~\eqref{EqAllIntro}, with~\eqref{EqZimes} in particular. Therefore, it would be redundant to start developing any brand\/-\/new formalism 
(cf.~\cite{VTSh}); on the other hand, we \textsl{prove} properties~\eqref{EqAllIntro} and not just \textsl{postulate} these assertions 
(cf.~\cite{GomisParisSamuel}).

We employ standard notions, constructions, and techniques from the geometry of jet spaces~\cite{TwelveLectures,ClassSym,Olver}. Because the geometry of BV-\/objects is essentially variational, it would be methodologically incomplete to handle them as if the space\/-\/time, that is, the base manifold in the bundles of physical fields, 
were just a point~(\cite{VoronovKhudaverdian2013,SchwarzBV} or~\cite{YKS2008SIGMA}). The language of jet spaces is extensively used in the study of BV-\/models, see~\cite{BarnichHabilitation,BarnichBrandtHenneauxPhysRep,GomisParisSamuel,McCloud}: the bundles of jets of sections usually appear in such traditional contexts as calculation of symmetries or conservation laws. In this paper we apply these geometric techniques at a much more profound level and give rigorous definitions for BV-\/objects. Let us emphasize that 
we do not aim at extending one's ability to write more formulas according to a regularly emended system of accepted algorithms; we explicate the genuine nature of objects and their canonical matchings, not taking any formulas for quasi\/-\/definitions.

This paper is structured 
as follows. Containing a brief 
overview of traditional approaches to regularisation of the BV-\/formalism, 
this introduction concludes with a pa\-ra\-ble; the line of our reasoning is reminiscent to that of \textit{Lettres persanes} by Montesquieu. 

In section~\ref{SecEL} we describe the true geometry of variations; 
we first reveal the correspondence between action
functionals and infinitesimal shifts of classical trajectories or physical fields. An understanding of 
nontrivial 
mechanism of such matching achieved for \textit{one} variation, the picture of \textit{many} variations becomes clear.
This approach resolves the obstructions for regularisation of iterated variations in BV-formalism; we remark that Dirac's 
$\boldsymbol{\delta}$-\/function does not appear in section~\ref{SecDef}
at all.
\footnote{We refer to~\cite{GelfandShilov} for the theory of distributions. Let us specify that singular linear integral operators which emerge in the course of our reasoning will \textbf{not} be approached via parametric families of regular linear integral functionals with piecewise continuous or smooth kernels (in which context the notation ``$\boldsymbol{\delta}(0)$'' for Dirac's \textsl{function} is used in the literature).}

In section~\ref{SecBVzoo} we recall in proper detail the standard construction of Batalin\/--\/Vilkovisky (BV) vector bundles
with canonically conjugate pairs of ghost parity\/-\/even and odd variables. 
In this specific setup we analyse the construction of two distinct couplings of
the BV-\/fibres' ghost parity\/-\/homogeneous vector subspaces with their respective duals. In particular, in section~\ref{SecSigns} we focus on the rule of signs which determines the anti\/-\/commutation of differential one\/-\/forms in the geometry at hand.
Applying the geometric concept of iterated variations in section~\ref{SecVariations}, we represent the left-{} and right variations of functionals in terms of left-{} or right\/-\/directed singular linear integral operators;
this framework ensures the intrinsic regularisation of iterated variations.
We then formulate in section~\ref{SecDefinitions} the 
definitions of BV-\/Laplacian~$\Delta$ and variational
Schouten bracket~$\lshad\,,\,\rshad$ (or \textsl{antibracket}). 
We show 
that these definitions 
are operational, 
amounting to natural, well\/-\/defined reconfigurations of the geometry 
(but not to any hand\/-\/made algorithms for cancellation of divergent terms; for those do not appear at all).
Our main result, which is contained in section~\ref{SecProof}, is an explicit proof --\,that is, starting from 
basic principles\,-- of relations~\eqref{EqAllIntro}. 
In other words, we neither postulate a validity of these properties nor elaborate 
a cunning syllogism the aim of which would be to convince why such assertions should hold provided that one knows when
various (derivatives of) Dirac's $\boldsymbol{\delta}$-functions must be erased in the course of so arguable a reasoning.

For consistency, we first apply the above theory to a standard derivation of the quantum master\/-\/equation from the Schwinger\/--\/Dyson condition that essentially eliminates a dependence on the unphysical, ghost parity\/-\/odd dimensions (see section~\ref{SecMaster}); we also recall here the construction of quantum BV-\/differential. 
The point is that neither divergencies nor \textsl{ad hoc} cancellations occur in the entire argument. 
On the same grounds we address in section~\ref{SecGauge} the quantum BV-\/cohomology preservation by infinitesimal gauge symmetries of the quantum master\/-\/equation. (We refer to~\cite{BV,BRST,GitmanTyutin,HenneauxTeitelboim} and also~\cite{AKZS,KontsevichSoibelman,VTSh} in this context; several methodological comments, which 
highlight our concept, are placed in section~\ref{SecFeynman} along the lines of a well\/-\/known reasoning.)

The paper concludes with a statement that an intrinsic regularisation in the geometry of iterated variations relies on the principle of locality (which manifests also through causality). We argue that a logical complexity of geometric objects grows while they accumulate the (iterated) variations\,; a conversion of such composite\/-\/structure objects into maps which take physical field configurations to numbers entails a decrease of the complexity via a loss of information. Having motivated this claim in section~\ref{SecDef}, we prove that the logic of analytic reasonings may not be interrupted\,; for example, the right\/-\/hand side of~\eqref{EqZimes} is not assembled from the would\/-\/be constituent blocks~$\Delta F$ and~$\Delta G$ for which it is known in advance how they take field configurations to numbers whenever the functionals~$F$ and~$G$ are given.

The paper explicitly answers the question what variations are~--- in particular, what iterated variations are. Moreover, we tacitly describe a geometric mechanism which is responsible for the anti\/-\/commutation of differential one\/-\/forms\,; such mechanism ensures that the results of calculations match  empiric data even if the exterior algebras of forms are introduced by hand. The roots of this principle 
are none other that the ordering of dual vector spaces which stem in the course of variations in models of nonlinear phenomena (this picture is addressed in section~\ref{SecSigns}).

We illustrate our 
approach with elementary starting section~\ref{SecEL} 
in which we inspect the 
matching of geometries --one for an action functional, the other for a field's test 
shift-- in the course of derivation of Euler\/--\/Lagrange equation of motion in field 
theory. The second example on pp.~\pageref{Countercounterexample}--\pageref{pEndCountercounterexample} 
clarifies the idea specifically 
in the BV-\/setup of (anti)\/fields and (anti)\/ghosts. 
We thus provide a pattern for all types of calculations which involve the Schouten 
bracket and BV-\/Laplacian in any model.

\subsection*{Historical context: an overview}
There is a class of significant papers in which the BV-\/formalism is 
developed 
under assumption that the space\/-\/time is a point. Indeed, such hypothesis is equivalent
 to an agreement that the only admissible sections of bundles over space\/-\/time are 
constant; this implies that even if their derivatives are nominally present in some 
formulas, they are always equal to zero. The calculus of variations then reduces to usual differential geometry on the 
bundles' fibres. It must be noted that 
publications containing the above assumption did contribute to the subject 
and in many cases guided its further development (we recall the respective comment in~\cite{VTSh} and refer 
to~\cite{Costello2007,HenneauxTeitelboim,VoronovKhudaverdian2013,YKS2008SIGMA,
SchwarzBV,WittenAntibracket}). Moreover, the no\/-\/derivatives reduction sometimes allows one to jump at conclusions which are correct; an integration by parts over the base manifold~$M^n$ is restored --whenever possible-- at the end of the day. Still this oversimplification is potentially dangerous because variational calculus of integral functionals conceptually exceeds any classical differential geometry on the fibres (see~\cite{Norway13} for discussion and~\cite{TwelveLectures,Galli10}). In the variational setup, the objects and their properties become geometrically different from their analogues on usual manifolds even if the terminology is kept unchanged. 
Here we recall for example that variational multivectors do not split to wedge products of variational one\/-\/vectors and likewise, several Leibniz rules are irreparably lost but this can not be noticed when all derivatives equal zero. In fact, it is the abyss between classical geometry of manifolds and geometry of variations for jet spaces of maps of manifolds which motivated our earlier study~\cite{Galli10}. Yet the misconception is still present in active research, e.g., 
see~\cite{BarnichBialowieza,VoronovKhudaverdian2013,YKS2008SIGMA,MerkulovWillwacher}.

The fact of incompleteness of such heuristic analogies from usual geometry of manifolds is signalled in~\cite{VTSh}. Paradoxically, it is simultaneously not true that a solution of the regularisation problem for BV-\/Laplacian has no analogues in the case of ODE dynamics on manifolds. From section~\ref{SecEL} below it is readily seen that good old techniques persist in the finite\/-\/dimensional ODE geometry at the level of standard linear algebra of dual vector spaces.\footnote{On the other hand, the variational setup highlights the fundamental concept of a physical field as a system with degrees of freedom attached at every point of the space\/-\/time~$M^n$; 
we 
focus on this aspect in what follows.}

The article~\cite{VTSh} is a considerable step towards a solution of the regularisation problem in BV-\/formalism.
A weighted, critical overview of various inconsistencies, \textsl{ad hoc} practices, and roundabouts is summed up there. 
The object of~\cite{VTSh} was to formulate a 
self\/-\/contained analytic concept which would make the variational calculus of functionals free from divergencies
and infinities. Still it remained unclear from~\cite{VTSh} what the generality of underlying geometry is and why such
self\/-\/consistent formalism should actually exist at the level of objects, i.e., beyond a mere ability to write formulas.
In particular, it remained unnoticed that the main motivating example --namely, the canonical BV-\/setup-- itself is the
only class of geometries in which the technique is grounded.
\footnote{The integration of closed algebra of gauge symmetries for the quantum 
master\/-\/equation to a group of transformations of the master\/-\/action~$S^\hbar$ 
remains a separate problem, which is also addressed in~\cite{VTSh}. Suppose that the 
standard cohomological obstructions to such integration vanish (see section~\ref{SecGauge} 
below), whence (i) all 
infinitesimal transformations of the functional~$S^\hbar$ 
are exact, i.e., they are generated by odd ghost\/-\/parity elements~$F$, and also (ii) such transformations can be
extended from the master\/-\/action~$S^\hbar$ to evolution of the observables~$\cO$. We remark that, unlike it is claimed
in~\cite{VTSh}, neither of the two groups of functionals' transformations is induced by any well\/-\/defined change
of BV-\/coordinates; of course, evolutionary vector fields are well\/-\/defined objects in that geometry and one could
study them regardless of these functionals' transformations. We shall recall in section~\ref{SecGauge} the standard
construction of automorphisms for quantum BV-\/cohomology groups; it illustrates our concept because the notion of
quantum gauge symmetries explicitly refers to all basic 
properties of the BV-\/Laplacian and Schouten bracket, see~\eqref{EqAllIntro} on p.~\pageref{EqAllIntro}.}
A correctness but incompleteness of the approach in~\cite{VTSh} means the following in 
practice. Whenever a theorist refers to the formalism of \textit{loc.\ cit.}, Nature immediately creates a new, principally
inobservable essence --a metric field which is denoted by $E(x_1,\ldots,x_n;\Gamma)$ in~\cite{VTSh}-- on top of the
electromagnetic and weak gauge connections, as well as the fields for strong force, gravity, or any other gauge
fields~$\Gamma$. It is perhaps this methodological difficulty which hints us why the approach 
of~\cite{VTSh} is considered ``formal'' by many experts; that conceptual paper remains scarcely known to a wider
community.%
\footnote{An attempt to interpret the formalism of~\cite{VTSh} in terms of the language of PDE geometry (particularly, 
in the context of~\cite{KuperCotangent}, see also~\cite{TwelveLectures,ClassSym,Olver}) was performed
in~\cite{Memorandum1641} and published in abridged form in~\cite{Kiev2003}. The construction of Schouten bracket
in~\cite{Memorandum1641} relies on the notion of variational cotangent bundle~\cite{KuperCotangent} and on classical
approach to the theory of variations. On one hand, this ensures the validity of Jacobi identity for the bracket
(see the \textsl{second} half of Eq.~\eqref{EqDeltaSquareIntro} but \textsl{not} the first one). But on the other hand,
we have showed by a counterexample in~\cite[\S3]{Laplace13} that the old approach fails to relate by~\eqref{EqZimes}
the Schouten bracket to BV-\/Laplacian. In other words, the BV-\/Laplacian did not entirely generate the variational
Schouten bracket, making only Eq.~\eqref{EqDeviationDerivationIntro} but not~\eqref{EqZimes} possible in that geometry
(cf.~\cite{YKS2008SIGMA}).}

To demystify the notion of a ``metric field~$E(x_1,\ldots,x_n;\Gamma)$,'' 
we describe in this paper an elementary geometric mechanism for the long\/-\/expected but still intuitively paradoxical analytic behaviour of variations.
This mechanism implies that Nature is not obliged to respond to the needs of a theorist and create such multi\/-\/entry distributions upon request.

Another line of reasoning, which led to much progress in a revision of BV-\/structures and 
regularisation of divergences, was pursued 
in~\cite{Costello2007,CostelloBook}.
We recall that the language of \textit{loc.\ cit.}\ is 
functional analytic so that the theory's objects are viewed as (Dirac's) distributions (and heat kernels are implemented).
According to~\cite[\S1.8]{Costello2007}, the BV-\/Laplacian~$\Delta$ which is used in physical theories is ill\/-\/defined
because for a given action~$S$ over space\/-\/time~$M^n$ of positive dimension~$n$ the object~$\Delta S$ involves
a multiplication of singular distributions (and thus --a quotation from~\cite{Costello2007} continues-- $\Delta S$~has
the same kind of singularities as appear in one\/-\/loop Feynman diagrams).
The regularisation technique proposed in~\cite{Costello2007,CostelloBook} 
stems from analysis of the distributions' limit behaviour as one approaches the ``physical'' structures by using 
regular ones. 

The resolution to apparent difficulties is that there are several distinct geometric
constructions which yield the same singular linear operators with support on the diagonal
(in what follows we study in detail on which space such operators are defined).
\\
\centerline{\rule{1in}{0.7pt}}

\enlargethispage{1.2\baselineskip}
\noindent%
We now discuss a peculiar, well\/-\/established domain, the very form of existence of which could be hardly believed in. 
In that theory, there is a serious lack of rigorous definitions for the most elementary objects; at the same time, 
there is a rapidly growing number of monumental reviews. 
Whereas the theory's difficulties are clearly inherited from a deficit of boring rigour at the initial stage, 
such hardships are proclaimed the theory's immanent components. 
At expert level it is mandatory to have a firm knowledge of the built\/-\/in difficulties and readily classify 
the descriptive objects which those apparent obstructions bring into the mathematical apparatus. 
(There is no firm guarantee that the (un)\/necessary objects really exist beyond written formulas.) 
The way of handling inconveniences largely amounts not to resolving them by a thorough study of their origins 
but to some \textsl{ad hoc} methods for hiding their presence. 
Doing research is thus substituted by practising a ritual.

However, the community of experts who mature in operation with formulas (a part of 
which are believed to express something objectively existing) maintains a considerable pluralism about a proper way to mask the symptoms of troubles:
\begin{itemize}
\item The radicals declare that 
undefined objects which seem to make trouble must be set equal to zero.
\item The revisionist approach prescribes a postfactum erasing of not the entire objects 
(which are still undefined) but of 
undesirable elements in those objects' description. 
\item A diplomatic viewpoint is that there might be sources of trouble
but their contribution to final results is suppressed as soon as the objects' desired properties are postulated (regardless of the actual presence or absence of such sources and one's 
ability to substantiate those properties).
\end{itemize}
For an external observer, this state\/-\/of\/-\/the\/-\/art could seem atypical for a consistent theory. Indeed, the reliability of its main pillar is a matter of irrational belief.

\section{The geometry of variations}\label{SecEL}
\noindent
Let us first analyse the basic geometry of variations of functionals; by comprehending 
the full setup of a one-time variation, we shall understand the geometry 
of many. Specifically, in this section we reveal the interrelation of bundles in the course of 
integration by parts; we also explain a rigorous construction of iterated variations.

The core of traditional difficulties in this domain is that a use of only fibre bundles $\pi$ of physical fields, which are
subjected to test shifts, is insufficient. We argue that the tangent bundles $T\pi$ to the bundles $\pi$ may not be
discarded (see Fig.~\ref{FigEL}). 
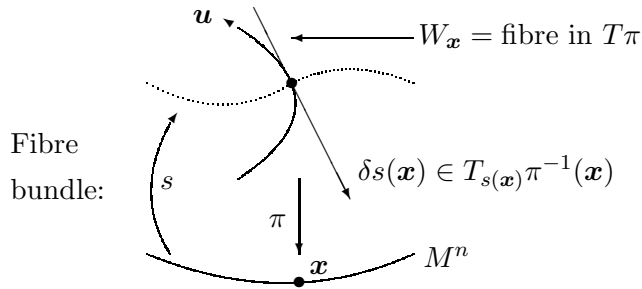
\begin{figure}[htb]
\begin{center}
\unitlength=1mm
\linethickness{0.4pt}
\begin{picture}(101.33,42.67)
\bezier{152}(25.00,10.00)(43.33,2.33)(60.00,10.00)
\put(45.00,6.33){\circle*{1.33}}
\put(45.00,20.00){\vector(0,-1){10.00}}
\bezier{144}(37.00,20.00)(52.00,30.00)(37.00,40.00)
\put(44.00,32.67){\circle*{1.33}}
\put(44.00,32.67){\vector(1,-2){7.67}}
\put(44.00,32.67){\line(-1,2){5.00}}
\put(60.00,38.67){\vector(-1,0){16.00}}
\put(37.00,40.00){\vector(-3,2){2}}       
\bezier{30}(25.00,32.67)(35.33,27.00)(44.00,32.67)
\bezier{30}(44.00,32.67)(51.33,36.33)(60.00,32.67)
\bezier{80}(28.00,10.00)(23.00,18.00)(28.00,27.33)
\put(27.67,27.00){\vector(3,4){1.33}}
\put(31.00,40.67){\makebox(0,0)[lb]{$\bu$}}
\put(60.67,37.33){\makebox(0,0)[lb]{$W_{\bx}=\text{fibre in }T\pi$}}
\put(52.67,18.33){\makebox(0,0)[lb]{$\delta s(\bx)\in T_{s(\bx)}\pi^{-1}(\bx)$}}
\put(46.33,7.33){\makebox(0,0)[lb]{$\bx$}}
\put(61.33,8.33){\makebox(0,0)[lb]{$M^n$}}
\put(41.00,13.67){\makebox(0,0)[lb]{$\pi$}}
\put(26.67,18.67){\makebox(0,0)[lb]{$s$}}
\put(7.00,23.33){\makebox(0,0)[lb]{Fibre}}
\put(7.00,17.33){\makebox(0,0)[lb]{bundle:}}
\end{picture}
\caption{The fibre bundle~$\pi$ of fields~$s$ and vector bundle~$T\pi$ of their variations~$\delta s$.}\label{FigEL}
\end{center}
\end{figure}
For identities~\eqref{EqAllIntro} to hold one must substantiate why higher-order
variational derivatives are (graded-)\/permutable whenever one inspects the response of a given functional to shifts of its
argument along several directions. To resolve the difficulties, we properly enlarge the space of functionals and adjust
a description of the geometry for the functionals' variations: in fact, each variation brings its own copy of the base $M^n$
into the picture (see Fig.~\ref{FigLelystad} on p.~\pageref{FigLelystad}). 

\subsection{Notation}
We now fix some notations, 
in most cases matching that from~\cite{
TwelveLectures} 
(for a more detailed
exposition of these matters, see for example~\cite{TwelveLectures,
ClassSym,Olver}). 

Let $\pi\colon E\to M$ be a smooth fibre bundle%
\footnote{Vector bundles are primary examples but 
we do not actually use the linear vector space structure of their fibres so that $\pi$ could be any smooth fibre bundle.}
with $m$-dimensional fibres $\pi^{-1}(\bx)$ over points $\bx$ of a smooth real 
oriented manifold~$M$ of dimension~$n$; 
we assume that all mappings, including those which determine the smoothness class of manifolds, 
are infinitely smooth.

We let $x^i$ denote local coordinates in a chart $U_{\alpha}\subseteq M^n$ and $u_j$ be the fibre coordinates.
We denote by $[\bu]$ a differential dependence of the fibre variables (specifically in the BV-setup, a differential
dependence $[\bq]$ on physical fields and other ghost parity-even variables, and we denote by $[\bq^{\dagger}]$
that of ghost parity-odd BV-variables).

\begin{rem}\label{RemStartNonGraded}
We suppose that the initially given bundle $\pi$ of physical fields is not graded. In what follows, starting with~$\pi$, we
shall construct new bundles whose fibres are endowed with the $\BBZ_2$-valued ghost parity $\GH(\,\cdot\,)$. However, our
reasoning remains valid for 
superbundles $\pi^{(0|1)}$ over supermanifolds $M^{(n_0|n_1)}$ 
(\cite{BerezinAA,Voronov2002}) 
and to a non\-com\-mu\-ta\-ti\-ve setup of cyclic\/-\/invariant words 
(see~\cite{SQS11,Lorentz12} and references therein), cf.\ Fig.~\ref{FigPants} below.
\end{rem}


We take the infinite jet space $\pi_\infty\colon J^{\infty}(\pi)\to M$ associated with this bundle~\cite{Ehresmann,
Olver}; 
a point from the jet space is then
$\theta=(x^i,\bu,u^k_{x^i},u^k_{x^i x^j},\dots,\bu_\sigma,\dots) \in J^{\infty}(\pi)$,
where $\sigma$~is a multi\/-\/index and we put~$\bu_\varnothing\equiv\bu$. 
If~$s \in \Gamma(\pi)$
is a section of~$\pi$, or a \textsl{field}, we denote by $j^{\infty}(s)$ its infinite jet, which is a section 
$j^{\infty}(s) \in \Gamma(\pi_\infty)$. Its value at $\bx \in M$ is 
$j^{\infty}_\bx(s)=(x^i, s^\alpha(x),\frac{\dd s^\alpha}{\dd x^i}(x),\dots,
\frac{\partial^{|\sigma|}s^\alpha}{\partial x^\sigma}(x),\dots)\in J^{\infty}(\pi)$.

We denote by $\cF(\pi)$ the properly understood algebra of finite differential order smooth functions on the infinite 
jet space $J^{\infty}(\pi)$, see \cite{TwelveLectures,ClassSym} for details. The space of top-degree horizontal forms on
$J^{\infty}(\pi)$ is denoted by $\ov\Lambda^n(\pi)$; let us also assume that at every $\bx\in M$ a volume element
$\dvol(\bx)$ is specified 
so that its pull\/-\/back under $\pi^*_{\infty}$ is an $n$-th degree form in $\ov\Lambda^n(\pi)$, cf.\ Remark~\ref{RemRolesInt} on p.~\pageref{RemRolesInt}.

The highest horizontal cohomology, i.\,e., the space of equivalence classes of
$n$-\/forms from $\ov{\Lambda}^n(\pi)$ modulo the image of the horizontal exterior differential $\ov{\Id}$ on $J^{\infty}(\pi)$,
is denoted by $\ov{H}^n(\pi)$; the equivalence class of $\omega\in\ov\Lambda^n(\pi)$ is denoted by
$\int\omega\in\ov{H}^n(\pi)$. 
We assume that sections $s\in\Gamma(\pi)$ are such that integration of functionals $\Gamma(\pi)\to\Bbbk$ by parts
is allowed and does not
result in any boundary terms (for example, the base manifold is closed, 
or the sections all have compact support, or
decay sufficiently fast towards infinity, or are periodic).

\subsection{Euler\/--\/Lagrange equations}
A derivation of Euler\/--\/Lagrange equations $\cEEL$ for a given action functional $S=\int\cL(\bx,[\bu])\dvol(\bx)$
is a model example which illustrates the correlation of two geometries:%
\footnote{An arrow over a variational derivative indicates the direction along which the shift $\delta s$ is transported
left- or rightmost. While the objects are non-graded commutative, this indication is not important.
It becomes mandatory in the $\BBZ_2$-graded commutative setup (see section~\ref{SecDef}):
likewise, the arrows are also mandatory and fix the direction of rotation for non-commutative cyclic 
words~\cite{SQS11,Lorentz12,KontsevichCyclic}; 
note that our formalism is extended verbatim to the variational calculus of such necklaces and their brackets.}
one for ``trajectories'' $s\in\Gamma(\pi)$ and the other for shifts~$\delta s$.
It is well known that the functional's response to a test shift $\delta s$ of its argument $s\in\Gamma(\pi)$ is described
by the formula~\cite[\S 12]{MMKM}
\begin{equation}\label{EqVariation}
\left.\frac{\Id}{\Id\veps}\right|_{\veps=0}S(s+\veps\cdot\overleftarrow{\delta}\!s)=
\int_M\dvol(\bx)\:\delta s(\bx)\cdot
\left.\frac{\overleftarrow{\delta}\!\cL(\bx,[\bu])}{\delta\bu}\right|_{j^{\infty}_{\bx}(s)}.
\end{equation}
We now claim that this one-step procedure is a correct consequence of definitions but itself not a definition of the
functional's variation. The above formula conceals 
a longer, nontrivial reasoning of which the right-hand side in~\eqref{EqVariation}
is an implication --- provided that the functional $S$ will not be varied by using any other test shifts, i.\,e., if the
correspondence $S\mapsto\cEEL$ yields the object $\cEEL$ of further study (cf.~\cite[\S 13]{MMKM}). Indeed, we notice
that the left-hand side of~\eqref{EqVariation} refers to \textit{three} bundles (namely,
the fibre bundle $\pi$ for a section $s\in\Gamma(\pi)$ whose infinite jet is $j^{\infty}(s)\in\Gamma(\pi_{\infty})$, 
the bundle $\pi_{\infty}$ for the integral functional $S\in\ov{H}^n(\pi)$, and the tangent vector bundle $T\pi$ such that
$\delta s\in\Gamma(T\pi)$ at the graph of $s$ in $\pi$,
see Fig.~\ref{FigEL}. (In what follows, a reference to attachment points $s(\bx)\in\pi^{-1}(\bx)$ will always be implicit
in the notation for $\delta s$: for a given section $s\in\Gamma(\pi)$, the base manifold $M^n$ is the domain of definition
for a test shift $\delta s(\bx,s(\bx))=\delta s(\bx)$ that takes values in $T_{s(\bx)}\pi^{-1}(\bx)$.) Let us figure
out how the domains of definition for the sections $s$ and $\delta s$ merge to one copy of the manifold $M^n$ over which
an integration is performed in the right\/-\/hand side of~\eqref{EqVariation}. Strictly speaking, 
from~\eqref{EqVariation} it is unclear whether the variational derivative,
$$
\frac{\overleftarrow{\delta}\!\cL(\bx,[\bu])}{\delta\bu}=
\sum_{|\sigma|\ge0}
\left(-\frac{\vec{\Id}}{\Id\bx}\right)^{\sigma}\frac{\vec{\dd}\!\cL(\bx,[\bu])}{\dd\bu_{\sigma}},
$$
stems from one (which would be false) or both (true!) copies of the base $M$.

To have a clear vision of the variations' geometry and by this avoid an appearance of phantoms in 
description, we now
vary the action functional $S$ at $s\in\Gamma(\pi)$ along $\delta s\in\Gamma(T\pi)$, commenting on each step we make.
In fact, it suffices to figure out where the objects and structures at hand belong to --- in particular, we should explain
the nature of binary operation $\cdot$ in the right-hand side of conventional formula~\eqref{EqVariation}. The key idea is to
understand what we are actually doing but not what we have got used to think we do in order to obtain an understandable
result~\cite[\S 13]{MMKM}. The discovery is that this ``multiplication of functions'' is a shorthand notation for the 
canonically defined coupling between vectors and covectors from (co)tangent spaces 
$W_{s(\bx)}$ and $W^{\dagger}_{s(\bx)}$,
respectively, at the points $s(\bx)$ of fibres $\pi^{-1}(\bx)$ in the bundle $\pi$.

To encode this linear-algebraic setup, let $i,j$ run from 1 to $m=\dim(\pi^{-1}(\bx))=\rank(T\pi)$ and take a local basis
$\vec{e}_i(\bby)$ 
in the tangent spaces $W_{s(\bby)}=T_{s(\bby)}(\pi^{-1}(\bby))$ at $s(\bby)$ over 
base points $\bby\in M$. Introduce the dual
basis $\vec{e}^{{}\,\dagger j}(\bx)$ in $W^{\dagger}_{s(\bx)}$ attached at $s(\bx)$ over $\bx\in M$. By construction, this 
means that the value
\begin{equation}\label{EqLocality}
\left\langle\vec{e}_i(\bby),\vec{e}^{{}\,\dagger j}(\bx)\right\rangle
\end{equation}
is equal to the Kronecker symbol $\delta_i^j$ if and only if $\bx=\bby$ and the 
vector $\vec{e}_i(\bby)\in W_{p_1}$ and
covector $\vec{e}^{{}\,\dagger j}(\bx)\in W^{\dagger}_{p_2}$ 
are attached at the same point $p_1=p_2$ of the fibre $\pi^{-1}(\bx)$ over $\bx=\bby\in M$.

The \textsl{locality} of this coupling is an absolute geometric postulate: the coupling is not defined whenever $\bx\neq\bby$
or the values $p_1=s_1(\bby)$ and $p_2=s_2(\bx)$ of two local sections $s_1,s_2\in\Gamma(\pi)$ are not equal at $\bx=\bby$.
Physically speaking, the coupling is then not defined because there is no channel of information which would communicate
the value $\delta s^i(\bby)\cdot\vec{e}_i(\bby)$ of excitation of the physical field $s\in\Gamma(\pi)$ at a point $\bby\in M$
to another point $\bx\neq\bby$ of the space-time $M$.


\begin{rem}
Let us remember that the definition of coupling between sections of (co)tangent bundles --- i.\,e., (co)tangent to either 
a given manifold or a given 
bundle $\pi$ which is the case here for Euler\/--\/Lagrange equations ---
forces the congruence $\{\bx=\bby,\ s_1(\bby)=s_2(\bx)\}$ of the (co)\/vectors' attachment points. We notice further that such
congruence mechanism does not refer to any limiting procedure for smooth distributed 
kernels and regular linear operators on the space of (co)vector fields. Indeed, vectors couple with their duals at a given
point regardless of any phantom limiting procedure which would grasp the (co)vector's values at any other points of the
manifold.%
\footnote{We recall that a similar, purely local geometric principle, not referring to the objects' values at non-coinciding
points, works in the definition of Hirota's bilinear derivative.}
\end{rem}

\begin{rem}
The coupling is a matching between test-shift vector fields which are tangent to the fibres of $\pi$ and, on the other hand,
with the elements of $\Gamma(T^*\pi)$ which are determined by the Lagrangian $\cL$. This binary operation yields the 
singular integral operator $\int_M\Id\bby\,\langle\delta s^i(\bby)\vec{e}_i(\bby)|$ with support on the diagonal. Independently,
the same operator can reappear as the limit in a parametric family of regular integral operators with smooth, distributed
kernels. This shows that the same object is constructed by using several algorithms. Yet the analytic behaviour of the limit
is determined not only by the limit itself but also by an algorithm how it is attained. 
Consequently, the object's analytic properties in the course of derivations could be (and actually, indeed they are)
drastically different for different scenarios. This is the key point in a regularisation of the formalism; to achieve this
goal, we properly identify the objects which are de facto handled.
\end{rem}

\begin{rem}\label{RemDconst0}
Referring to a concept of locality of events, this definition of coupling $\langle\,,\,\rangle$ ensures a very interesting
analytic behaviour of the value $\langle\vec{e}_i(\bby),\vec{e}^{{}\,\dagger i}(\bx)\rangle$ of pairing for dual objects
$\vec{e}_i(\bby)$ and $\vec{e}^{{}\,\dagger i}(\bx)$ at fixed $i$. Namely, this value is a constant scalar field which equals
unit $1\in\Bbbk$ 
at all points of the manifold $M$; the scalar field's partial derivatives with respect to $x^j$ or $y^k$, $1\le j,k\le n$,
vanish identically. We shall use this property in what follows (see Remark~\ref{RemNoEffect} on p.~\pageref{RemNoEffect}).
We also note that the logarithm of this coupling's unit value vanishes as well: 
$\log\langle\vec{e}_i(\bby),\vec{e}^{{}\,\dagger i}(\bx)\rangle=0$
whenever the coupling is well defined and $1\le i\le m$.
\end{rem}

Now let us return to the initial setup in context of Euler\/--\/Lagrange equation $\cEEL$ and one-step correspondence
$S\mapsto\cEEL$, see Fig.~\ref{FigEL}. We have that $S=\int\cL(\bx,[\bu])\dvol(\bx)$ is an integral functional; we let
$s\in\Gamma(\pi)$ be a background section (e.\,g., a sought\/-\/for solution of the Euler\/--\/Lagrange stationary 
point equation $\left.\delta S\right|_s=0$) and $\delta s\in\Gamma(T\pi)$ be a test shift of $s$. The linear term in
a response of $S\colon\Gamma(\pi)\to\Bbbk$ to a shift of its argument $s$ along $\delta s$ is
(cf.\eqref{EqLeftRightVariations} on p.~\pageref{EqLeftRightVariations})
\begin{multline}\label{EqStartSum}
\left.\frac{\Id}{\Id\veps}\right|_{\veps=0}S(s+\veps\overleftarrow{\delta}\!s)={}
\\{}=
\sum_{i,j}\sum_{|\sigma|\ge0}\int_M\Id\bby\int_M\dvol(\bx)\left\langle
(\delta s^i)\left(\frac{\smash{\overleftarrow{\dd}}}{\dd\bby}\right)^{\sigma} (\bby)\,\vec{e}_i(\bby),
\vec{e}^{{}\,\dagger j}(\bx)\left.\frac{\overrightarrow{\dd}\!\cL(\bx,[\bu])}{\dd u_{\sigma}^j}\right|_{j^{\infty}_{\bx}(s)}
\right\rangle.
\end{multline}

\begin{rem}\label{RemRolesInt}
The r\^oles of two integral signs in~\eqref{EqStartSum} are different. 
Namely, the 
volume form $\dvol(\bx)$ at
$\bx\in M^n$ comes from the integral functional $S\in\ov{H}^n(\pi)$; 
should a formal choice of the volume form be different,
the Euler\/--\/Lagrange equations would also change.%
\footnote{There are natural classes of geometries in which the Lagrangian $\cL(\bx,[\bu])$ in the action $S$ is a 
well-defined top-degree differential form, e.\,g., if the unknowns $\bu$ are differential one-forms (we recall the
Yang\/--\/Mills or Chern\/--\/Simons gauge theories in this context). Let us remember also that a construction of $\cL$ 
could refer to a choice of volume form $\dvol(\bx)$ on $M^n$. For instance, such is the case when the Hodge structure $*$
is involved (the Yang\/--\/Mills Lagrangian yields an example: $\cL\sim F_{\mu\nu}*F^{\mu\nu}$ in standard notation for
the stress tensor). To avoid excessive case-study, we use a uniform notation thus writing $\dvol(\bx)$ explicitly.

We recall further that the integration measure $\dvol\bigl(\bx,s(\bx)\bigr)
={\sqrt{|\det\bigl(g_{\mu\nu}(\bx,s)\bigr)|}}\Id\bx$ is field\/-\/dependent by virtue of Einstein's general relativity equations which --\.in their right\/-\/hand sides\,-- absorb the energy\/-\/momentum tensor of physical fields~$s\in\Gamma(\pi)$. The volume element will be denoted by~$\dvol(\bx)$ in order to emphasize that the space\/-\/time~$M^n$ is unique: Namely, field\/-\/dependent objects interact at its points only if the local geometry of underlying space\/-\/time is the same near~$\bx\in M^n$ for all objects (see Theorem~\ref{ThLaplaceOnProduct} and Remark~\ref{RemTwoToOneInSchouten} on p.~\pageref{RemTwoToOneInSchouten} for a realisation of this principle for the smooth manifold~$M^n$ endowed with metric tensor~$g_{\mu\nu}$).}
At the same time, the other integral sign $\int\Id\bby$ denotes the singular linear operator $\Gamma(T^*\pi)\to\Bbbk$
with support on the diagonal~\cite{GelfandShilov}; in fact, this notation means that a point $\bby$ runs through 
the entire integration domain $M$.
\end{rem}

\subsection{Integration by parts}\label{SecByParts}
The most interesting things start to happen when one integrates by parts over the domain $M^n$ of test shifts $\delta s$.
(By default, we let the supports of local perturbations $\delta s$ be such that no boundary terms appear in the course of
integration by parts over~$M$.)

For the sake of transparency 
let us first consider a model situation when there is just one derivative falling on $\delta s$ at $\bby$; all higher-order
cases are processed recursively. By the definition of a (partial) derivative $\dd/\dd y^i$, we have that%
\footnote{In the definition of derivative, the calculation of length $|\Delta\bby|$ in denominators refers to the standard
Euclidean metric in the linear vector spaces which determine 
coordinate neighbourhoods near points of the manifold $M$ at hand.}
\begin{multline*}
\int_M\Id\bby\int_M\dvol(\bx)\left\langle
(\delta s)\frac{\overleftarrow{\dd}}{\dd\bby} (\bby)\,\vec{e}(\bby),
\vec{e}^{{}\,\dagger}(\bx)
\left.\frac{\overrightarrow{\dd}\!\cL(\bx,[\bu])}{\dd\bu_{\sigma}}\right|_{j^{\infty}_{\bx}(s)}\right\rangle=\\=
-\int_M\Id\bby\int_M\dvol(\bx)\,\delta s(\bby)\frac{\overrightarrow{\dd}}{\dd\bby}
\left\{\langle\vec{e}(\bby),\vec{e}^{{}\,\dagger}(\bx)\rangle
\left.\frac{\overrightarrow{\dd}\!\cL(\bx,[\bu])}{\dd\bu_{\sigma}}\right|_{j^{\infty}_{\bx}(s)}\right\}.
\end{multline*}
By using a 
definition of the partial derivative which falls on the comultiple of $\delta s$, we obtain the difference%
\footnote{Here and in the equalities below we suppress the indexes $i$ running through $1,\dots,m$ at $\delta s^i(\bby)$ and
$\vec{e}_i(\bby)$ or $\vec{e}^{{}\,\dagger i}(\bx)$, or at $u^i_{\sigma}$ in the derivative which acts on $\cL$; we thus
avoid an agglomeration of formulas.}
\begin{multline*}
{}\stackrel{\text{def}}{=}-\int_M\Id\bby\int_M\dvol(\bx)\,\delta s(\bby)\cdot
\lim_{|\Delta\bby|\to0}\frac{1}{|\Delta\bby|}\Bigl\{
\left\langle\vec{e}(\bby+\Delta\bby),\vec{e}(\bx)\right\rangle
\left.\frac{\overrightarrow{\dd}\!\cL(\bx,[\bu])}{\dd\bu_{\sigma}}\right|_{j^{\infty}_{\bx}(s)}-\\
-\left\langle\vec{e}(\bby),\vec{e}(\bx)\right\rangle
\left.\frac{\overrightarrow{\dd}\!\cL(\bx,[\bu])}{\dd\bu_{\sigma}}\right|_{j^{\infty}_{\bx}(s)}\Bigr\}.
\end{multline*}
The locality postulate for coupling between (co)vectors $\vec{e}$ and $\vec{e}^{{}\,\dagger}$ forces the equality 
$\bby+\Delta\bby=\bx$ in the minuend, which yields the two different points at which the restriction of Lagrangian $\cL$
to the jet $j^{\infty}(s)$ of section $s\in\Gamma(\pi)$ is evaluated:
\begin{multline*}
=-\int_M\Id\bby\int_M\dvol(\bx)\,\delta s(\bby)\cdot\langle\vec{e}(\bby),\vec{e}^{{}\,\dagger}(\bx)\rangle\cdot\\\cdot
\lim_{|\Delta\bby|\to0}\frac1{|\Delta\bby|}
\Bigl\{\left.\frac{\overrightarrow{\dd}\!\cL(\bx,[\bu])}{\dd\bu_{\sigma}}\right|_{j^{\infty}_{\bx+\Delta\bby}(s)}-
\left.\frac{\overrightarrow{\dd}\!\cL(\bx,[\bu])}{\dd\bu_{\sigma}}\right|_{j^{\infty}_{\bx}(s)}\Bigr\}.
\end{multline*}
(Here we use the fact that the scalar product $\langle\,,\,\rangle$, whenever defined, is the Kronecker symbol.) 
We continue the equality,
\begin{equation*}
\stackrel{\text{def}}{=}-\int_M\Id\bby\int_M\dvol(\bx)
\left\langle\delta s(\bby)\,\vec{e}(\bby),\vec{e}^{{}\,\dagger}(\bx)
\frac{\overrightarrow{\dd}}{\dd\bx}\left(
\left.\frac{\overrightarrow{\dd}\!\cL(\bx,[\bu])}{\dd\bu_{\sigma}}\right|_{j^{\infty}_{\bx}(s)}\right)\right\rangle.
\end{equation*}
We finally recall that the total derivative $\Id/\Id\bx$ is defined\footnote{By definition,
$(\vec{\Id} f/\Id x^i){\bigr|}_{j^{\infty}(s)}(\bx)=\bigl(\vec{\dd}/\dd x^i (
f{\bigr|}_{j^{\infty}(s)})\bigr)(\bx)$ for differential functions~$f$, 
see~\cite{TwelveLectures,ClassSym,Olver}.}
via an application of $\dd/\dd\bx$ to restriction to
infinite jets $j^{\infty}(s)$ of sections $s$ at base points $\bx$. Therefore,
the above expression is equal to
\begin{equation*}
\stackrel{\text{def}}{=}\int_M\Id\bby\int_M\dvol(\bx)
\left\langle\delta s(\bby)\,\vec{e}(\bby),\vec{e}^{{}\,\dagger}(\bx)
\left.\left(\left(-\frac{\vec{\Id}}{\Id\bx}\right)
\frac{\overrightarrow{\dd}\!\cL(\bx,[\bu])}{\dd\bu_{\sigma}}\right)\right|_{j^{\infty}_{\bx}(s)}\right\rangle.
\end{equation*}
This shows that an integration by parts over the base $M$ in the geometry of test shift $\delta s$ reappears as
integration by parts in the bundle where lives the background section~$s\in\Gamma(\pi)$. 

Repeating the integration by parts $|\sigma|\ge0$ times in each term of the sum in~\eqref{EqStartSum}, we obtain
the expression
\begin{equation*}
\sum_{i,j}\sum_{|\sigma|\ge0}\int_M\Id\bby\int_M\dvol(\bx)
\left\langle\delta s^i(\bby)\,\vec{e}_i(\bby),\vec{e}^{{}\,\dagger j}(\bx)
\left.\left(\left(-\frac{\vec{\Id}}{\Id\bx}\right)^{\sigma}
\frac{\overrightarrow{\dd}\!\cL(\bx,[\bu])}{\dd\bu^j_{\sigma}}\right)\right|_{j^{\infty}_{\bx}(s)}\right\rangle.
\end{equation*}
Let us recall once more that the coupling's support is the diagonal in $M\times M$, at points of which the value
$\langle\vec{e}_i(\bby),\vec{e}^{{}\,\dagger j}(\bx)\rangle$ is the Kronecker symbol $\delta_i^j$.
Consequently, we arrive~at
\begin{equation*}
{}=\sum_{i,j}\sum_{|\sigma|\ge0}\int_M\dvol(\bx)\,\delta s^i(\bx)\cdot
\left.\left(\left(-\frac{\vec{\Id}}{\Id\bx}\right)^{\sigma}
\frac{\overrightarrow{\dd}\!\cL(\bx,[\bu])}{\dd\bu^i_{\sigma}}\right)\right|_{j^{\infty}_{\bx}(s)}.
\end{equation*}
This is formula~\eqref{EqVariation}; it is familiar from any textbook on variational principles of classical mechanics
(e.\,g., see~\cite[\S 12\/--\/13]{MMKM}).

A standard reasoning shows that, whenever a response of the functional's value $S(s)$ to a test shift of $s$ along any
direction $\delta s$ vanishes, the Euler\/--\/Lagrange equation holds:
\begin{equation}\label{EqEqEL}
\left.\frac{\overleftarrow{\delta}\!S}{\delta\bu}\right|_{j^{\infty}(s)}=0.
\end{equation}
Its left-hand side belongs to the space $\Gamma(T^*\pi)$ of sections of the cotangent bundle to $\pi$.

\begin{rem}\label{RemELEqLinear}
This conclusion tells us that traditional attempts of a brute-force labelling of equations in a given system~\eqref{EqEqEL}
by using the unknowns $\bu$ is not geometric. Indeed, the equations' left-hand sides are sections of a \textsl{vector} bundle,
thus forming linear $\Bbbk$-vector spaces so that  addition is well defined for the equations within a system. On the other
hand, the fibres in the bundle $\pi$ can be smooth manifolds (i.\,e., not necessarily being vector spaces) so that one may
not add points of those fibres; for such operation is in general not defined at all. Even if $\pi$ is a vector bundle,
the fibres of which are endowed with linear vector space structure, the two structures are not related.
\end{rem}

\begin{rem}\label{RemNoEffect}
The integration by parts transforms a derivative $\dd/\dd\bby$ along one copy of the base $M$ to the minus derivative
$-\dd/\dd\bx$ along the other copy. This produces no visible effect on the mechanism which ensures a restriction onto
the diagonal in $M\times M$, i.\,e., there appears no would-be third term in the Leibniz rule for the product which is
defined only on the diagonal. A desperate prescription~\eqref{EqSur} was introduced in the literature in order to mimick
this paradoxical analytic behaviour of the coupling between elements of dual bases.
\end{rem}

\subsection{Why are variations permutable\,?}\label{SecSynonyms}
Having outlined the matching of geometries in the course of one sequence of integrations by parts for \textsl{one} fixed
pair $M\times M\ni(\bby,\bx)$ of copies of the base manifold, we emphasize that such integrations must be performed last,
i.\,e., only when the objects at hand are finally viewed as maps $\Gamma(\pi)\to\Bbbk$.

Should one haste in absence of clear understanding of what is actually being done and for which purpose, further
calculation of higher-order variations could predictably but uncontrollably lead to meaningless, manifestly erroneous
conclusions (e.\,g., compare left- and right-hand sides in~\eqref{Eq2Ways} below).

Namely, there exist integral functionals which determine equal maps $\Gamma(\pi)\to\Bbbk$ but, belonging to different spaces, 
behave differently in the course of variations, should one attempt any. 
We say that such functionals are 
\textsl{synonyms}; for instance, see Example~\ref{Countercounterexample} in the next section for a nontrivial synonym $\Delta G$ of the zero
functional (cf.\ Fig.~\ref{FigSynonyms}). 
\begin{figure}[htb]
\begin{center}{\unitlength=1mm
\linethickness{0.4pt}
\begin{picture}(50.00,42.00)
\put(0.00,42.00){\line(0,-1){4.00}}
\put(0.00,40.00){\line(1,0){50.00}}
\put(50.00,38.00){\line(0,1){4.00}}
\put(40.00,38.00){\vector(0,-1){28.00}}
\put(35.00,7.00){\line(0,-1){4.00}}
\put(35.00,5.00){\line(1,0){10.00}}
\put(45.00,3.00){\line(0,1){4.00}}
\put(-12,3.33){\llap{Map:}}
\put(-10,3.33){$\Gamma(\pi_{\BV})\to\Bbbk$}
\put(47.00,3){\makebox(0,0)[lb]{${}\ne0.$}}
\put(-1,30){\llap{Obj:}}
\put(2.67,28.00){\makebox(0,0)[lb]{$\overbrace{[\![\underbrace{\ \ F\ \ }{},\underbrace{\ \ \Delta G\ \ }{}]\!]}{}$}}
\put(41.33,22.00){\makebox(0,0)[lb]{$\int$}}
\put(5.33,26.33){\line(0,-1){2.33}}
\put(5.33,25.33){\line(1,0){5.00}}
\put(10.33,25.33){\line(0,1){1.00}}
\put(10.33,26.33){\line(0,-1){2.33}}
\put(17.00,26.33){\line(0,-1){2.33}}
\put(17.00,24.00){\line(0,1){1.33}}
\put(17.00,25.33){\line(1,0){9.00}}
\put(26.00,25.33){\line(0,1){1.00}}
\put(26.00,26.33){\line(0,-1){2.33}}
\put(20.67,22.67){\vector(0,-1){13.67}}
\put(20,3.33){\makebox(0,0)[lb]{0.}}
\put(17.33,14.00){\makebox(0,0)[lb]{$\int$}}
\end{picture}
}\end{center}
\caption{The synonyms~$\Delta G$ of zero functional yield constant maps 
$0\colon\Gamma(\pi_{\BV})\to\Bbbk$ yet they can nontrivially contribute to larger structures 
such as $\lshad F,\Delta G\rshad$, see Example~\ref{Countercounterexample}
on p.~\pageref{Countercounterexample}.}\label{FigSynonyms}
\end{figure}
Informally speaking, the composite structure objects with repeated integrals over products 
$M\times M\times\ldots\times M$ of the base retain a kind of memory of the way how they were obtained from primary objects 
such as the action $S$.
Let us illustrate these claims.

\begin{example}
Let $\delta s_1\in\Gamma(T\pi)$ be a test shift at $s\in\Gamma(\pi)$ for an integral functional 
$S=\int\cL(\bx,[\bu])\dvol(\bx)$ with density $\cL$ of positive differential order. (That is, we suppose that some 
positive-order derivatives are always present in densities of all representatives of the cohomology class 
$S\in\ov{H}^n(\pi)$; this assumption is not to any extent restrictive but it allows us to not take  into account
$\ov{\Id}$-exact terms whose orders may not be bounded.) By using $S$, let us construct two new integral functionals.
First, we set
$$
F=\sum_i\sum_{|\sigma|\ge0}\int\dvol(\bx)\,\delta s_1^i(\bx)\cdot\left(-\frac{\vec{\Id}}{\Id\bx}\right)^{\sigma}
\left(\frac{\overrightarrow{\dd}\!\cL(\bx,[\bu])}{\dd u^i_{\sigma}}\right)\in\ov{H}^n(\pi),
$$
so that the mapping $F\colon\Gamma(\pi)\to\Bbbk$ is defined at $s\in\Gamma(\pi)$ by restriction of the integrand 
to the jet $j^{\infty}(s)$ and then by actual integration over $M$.

Let the other functional $G\in\ov{H}^{2n}(\pi,T\pi)$ be such that its value at the same section $s\in\Gamma(\pi)$ is
\begin{equation*}
G(s)=\sum_{i,j}\sum_{|\sigma|\ge0}\int_M\Id\bby\int_M\dvol(\bx)
\left\langle
(\delta s_1^i)\left(\frac{\overleftarrow{\dd}}{\dd\bby}\right)^{\sigma}(\bby)\,\vec{e}(\bby),
\vec{e}^{{}\,\dagger j}(\bx)\left.\frac{\overrightarrow{\dd}\!\cL(\bx,[\bu])}{\dd u^j_{\sigma}}
\right|_{j^{\infty}_{\bx}(s)}\right\rangle.
\end{equation*}
From the previous section it is clear that $F$ and $G$ are indistinguishable as mappings to $\Bbbk$ for every 
$s\in\Gamma(\pi)$. Yet their variations, i.\,e., the responses to an extra shift $\delta s_2\in\Gamma(T\pi)$, are
different. Indeed, they are equal to, first,
\begin{multline*}
\left(\left.\frac{\Id}{\Id\veps_2}\right|_{\veps_2=0}F\right)(s+\veps_2\overleftarrow{\delta}\!\!s_2)=
\sum_{i_1,i_2}\sum_{\substack{|\sigma_1|\ge0\\|\sigma_2|\ge0}}
\int_M\dvol(\bx)\,
\delta s_2^{i_2}(\bx)\delta s_1^{i_1}(\bx)\cdot\\
\cdot\left.\left\{\left(-\frac{\overrightarrow{\Id}}{\Id\bx}\right)^{\sigma_2}
\frac{\overrightarrow{\dd}}{\dd u^{i_2}_{\sigma_2}}
\left(\left(-\frac{\overrightarrow{\Id}}{\Id\bx}\right)^{\sigma_1}
\frac{\overrightarrow{\dd}\!\cL(\bx,[\bu])}{\dd u^{i_1}_{\sigma_1}}\right)\right\}\right|_{j^{\infty}_{\bx}(s)}.
\end{multline*}
The above formula corresponds to a step-by-step calculation within a na\"\i ve approach to the geometry of variations.
However, the genuine value of second variation of the integral functional $S$ along $\delta s_1$ and then $\delta s_2$
at a section $s$ is
\begin{multline*}
\left(\left.\frac{\Id}{\Id\veps_2}\right|_{\veps_2=0}G\right)(s+\veps_2\overleftarrow{\delta}\!\!s_2)=\sum_{\substack{i_1,i_2\\j_1,j_2}}\sum_{\substack{|\sigma_1|\ge0\\|\sigma_2|\ge0}}
\int_M\Id\bby_2\int_M\Id\bby_1\int_M\dvol(\bx) \\
\left\{
(\delta s^{i_2}_2)\left(\frac{\overleftarrow{\dd}}{\dd\bby_2}\right)^{\sigma_2}(\bby_2)\,\langle\vec{e}_{i_2}(\bby_2), 
\vec{e}^{{}\,\dagger j_2}(\bx)\rangle
\cdot 
(\delta s^{i_1}_1)\left(\frac{\overleftarrow{\dd}}{\dd\bby_1}\right)^{\sigma_1}(\bby_1)\,\langle\vec{e}_{i_1}(\bby_1)
,\vec{e}^{{}\,\dagger j_1}(\bx)\rangle
\right\}\\
\cdot\left.\frac{\overrightarrow{\dd}^2\cL(\bx,[\bu])}{\dd u^{j_2}_{\sigma_2}\dd u^{j_1}_{\sigma_1}}\right|_{j^{\infty}_{\bx}(s)}.
\end{multline*}
The analytic distinction between the operators
\begin{equation}\label{Eq2Ways}
\underbrace{
\left(-\frac{\overrightarrow{\Id}}{\Id\bx}\right)^{\sigma_2}\circ\frac{\overrightarrow{\dd}}{\dd u^{i_2}_{\sigma_2}}
\circ
\left(-\frac{\overrightarrow{\Id}}{\Id\bx}\right)^{\sigma_1}\circ\frac{\overrightarrow{\dd}}{\dd u^{i_1}_{\sigma_1}}
}_{\text{na\"\i ve approach}}
\quad\text{ and }\quad
\underbrace{
\left(-\frac{\overrightarrow{\Id}}{\Id\bx}\right)^{\sigma_1\cup\sigma_2}\circ
\frac{\overrightarrow{\dd}^2}{\dd u^{i_2}_{\sigma_2}\dd u^{i_1}_{\sigma_1}}
}_{\text{geometric theory}}
\end{equation}
reveals why in positive-order Lagrangian models it is forbidden to haste, which would imply that the derivatives along
distinct copies of $M$ for variations $\delta s_1,\ \ldots,\ \delta s_k$ are too early transformed to derivatives along the
functional's own base. Such a conceptual error would repercuss with inexplicable, redundant terms in variations to-follow.
\end{example}

On the other hand, as soon as the product-bundle geometry of iterated variations is properly realized --- so that all
restrictions to the diagonals are postponed as late as possible, --- the variations become (graded-)permutable.%
\footnote{\label{FootMKEW}An idea that iterated variations must be taken at nominally different points $\bx$ and $\bby$ has been in the air
for a long time (let us refer to~\cite[\S 1]{KontsevichVishikLong} which contains due credits to E.~Witten).
A somewhat less obvious fact is that those different points belong to different copies of the manifold $M$ in the product
bundle ${\pi\times T\pi\times\ldots\times T\pi}$ over ${M\times M\times\ldots\times M}$.}
Namely, denote by $|u^i|$, $1\le i\le m$, the overall $\BBZ_2$-valued parities of the fibre coordinates $u^i$; the ghost
parity $\GH(u^i)$ or individual $\BBZ$-{} or $\BBZ_2$-valued gradings in the bundle $\pi$ contribute additively to $|u^i|$ 
and then a residue modulo 2 is taken. Suppose that $\delta s_1=(\delta s_1^{i_1})$ and $\delta s_2=(\delta s_2^{i_2})$ are
test shifts and $S=\int\cL(\bx,[\bu])\dvol(\bx)$ is an integral functional which maps a section $s\in\Gamma(\pi)$ to $\Bbbk$.
Then, \emph{after} the integrations by parts in the product-bundle geometry $\pi\times T\pi\times T\pi$ which is described
above, there remains
\begin{multline*}
\sum_{i_1,i_2}\sum_{\substack{|\sigma_1|\ge0\\|\sigma_2|\ge0}}\int_M\dvol(\bx)\,\delta s_2^{i_2}(\bx)\delta s_1^{i_1}(\bx)
\left.\left(\left(-\frac{\overrightarrow{\Id}}{\Id\bx}\right)^{\sigma_1\cup\sigma_2}
\frac{\overrightarrow{\dd}^2\cL(\bx,[\bu])}
{\dd u^{i_2}_{\sigma_2}\dd u^{i_1}_{\sigma_1}}\right)\right|_{j^{\infty}_{\bx}(s)}\\
=
\sum_{i_1,i_2}\sum_{\substack{|\sigma_1|\ge0\\|\sigma_2|\ge0}}
(-)^{|u^{i_1}|\cdot|u^{i_2}|}
\int_M\dvol(\bx)\,\delta s_1^{i_1}(\bx)\delta s_2^{i_2}(\bx)
\left.\left(\left(-\frac{\overrightarrow{\Id}}{\Id\bx}\right)^{\sigma_1\cup\sigma_2}
\frac{\overrightarrow{\dd}^2\cL(\bx,[\bu])}
{\dd u^{i_1}_{\sigma_1}\dd u^{i_2}_{\sigma_2}}\right)\right|_{j^{\infty}_{\bx}(s)}.
\end{multline*}
Likewise, higher-order iterated variations with $k\ge2$ test shifts $\delta s_1,\dots,\delta s_k$ are (gra\-ded-)\/permutable
with the same rule of signs for permutations of order in which the (gra\-ded) partial derivatives
$\overrightarrow{\dd}/\dd u^{i_1}_{\sigma_1},\dots,\overrightarrow{\dd}/\dd u^{i_k}_{\sigma_k}$ fall from the left on the 
density $\cL$ of the functional $S$. (A case of $\BBZ_2$-graded base manifold $M^{(n_0|n_1)}$ would bring more signs which
are also captured in a standard way.)

Let there be $k\ge2$ variations $\delta s_1,\dots,\delta s_k\in\Gamma(T\pi)$. We finally have that
\begin{multline}\label{EqkthVariation}
\left.\frac{\Id}{\Id\veps_k}\right|_{\veps_k=0}\circ\ldots\circ\left.\frac{\Id}{\Id\veps_1}\right|_{\veps_1=0}
S(s+\veps_1\overleftarrow{\delta}\!\!s_1+\ldots+\veps_k\overleftarrow{\delta}\!\!s_k)=\\=
\sum_{\substack{i_1,\dots,i_k\\j_1,\dots,j_k}}\sum_{\substack{|\sigma_1|\ge0\\\dots\\|\sigma_k|\ge0}}
\int_M\Id\bby_k\ldots\int_M\Id\bby_1\int_M\dvol(\bx)\cdot{} \\
\left\{
(\delta s_k^{i_k})\left(\frac{\overleftarrow{\dd}}{\dd\bby_k}\right)^{\sigma_k}(\bby_k)
\left\langle\vec{e}_{i_k}(\bby_k),\vec{e}^{{}\,\dagger j_k}(\bx)\right\rangle
\cdot\ldots\cdot 
(\delta s_1^{i_1})\left(\frac{\overleftarrow{\dd}}{\dd\bby_1}\right)^{\sigma_1}(\bby_1)
\left\langle\vec{e}_{i_1}(\bby_1),\vec{e}^{{}\,\dagger j_1}(\bx)\right\rangle
\right\}\\
\cdot\left.\frac{\overrightarrow{\dd}^k\cL(\bx,[\bu])}{\dd u^{j_k}_{\sigma_k}\ldots\dd u^{j_1}_{\sigma_1}}
\right|_{j^{\infty}_{\bx}(s)}.
\end{multline}
Whenever any $k-1$ variation(s) are fixed in the above formula, the co-multiple $|\,\rangle$ of the remaining, $\ell$th
variation $\delta s_{\ell}=\langle\delta s^{i_{\ell}}_{\ell}(\bby_k)\vec{e}_{i_{\ell}}(\bby_{\ell})|$ is an element of
the cotangent vector space $T^*_{s(\bx)}\pi^{-1}(\bx)=V_{\bx}^{\dagger}$ at the point $s(\bx)$ in the fibre $\pi^{-1}(\bx)$
over a base point $\bx\in M^n$.

\begin{rem}
The composite object in the left-hand side of equality~\eqref{EqkthVariation} is an integral functional in the bundle
${\pi\times T\pi\times\ldots\times T\pi}$ which properly contains the geometry of $k$ variations from $\Gamma(T\pi)$,
see Fig.~\ref{FigLelystad}.
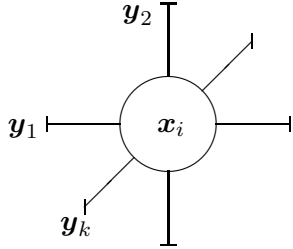
\begin{figure}[htb]
\begin{center}{\unitlength=1mm
\begin{picture}(40,30)(-20,-15) 
\put(0,0){\circle{12}}
\put(-1.5,-1){$\bx_i$}
\put(6.25,0){\line(1,0){9.75}}
\put(-6.25,0){\line(-1,0){9.75}}
\put(0,6.25){\line(0,1){9.75}}
\put(0,-6.25){\line(0,-1){9.75}}
\put(-11,-11){\line(1,1){6.5}}
\put(4.45,4.45){\line(1,1){6.55}}
\put(-16,-1){\line(0,1){2}}
\put(16,-1){\line(0,1){2}}
\put(-1,16){\line(1,0){2}}
\put(-1,-16){\line(1,0){2}}
\put(-11,-12){\line(0,1){2}}
\put(11,10){\line(0,1){2}}
\put(-17,-1){\llap{$\bby_1$}}
\put(-2,14){\llap{$\bby_2$}}
\put(-10,-14){\llap{$\bby_k$}}
\end{picture}
}\end{center}
\caption{Each variation $\delta s_1$,\ $\ldots$,\ $\delta s_k$ brings its own copy
of the base $M^n\ni\bby_\ell$ into the product bundle 
$\pi\times T\pi\times\ldots\times T\pi$ 
over $M\times M\times\ldots\times M$.}\label{FigLelystad}
\end{figure}
This construction lives not on a Whitney sum ${\pi\mathbin{{\times}_M}T\pi\mathbin{{\times}_M}\ldots\mathbin{{\times}_M}T\pi}$ over the base manifold~$M$;
that would force an untimely restriction to the diagonal in the product 
${M\times M\times\ldots\times M}$ of bases and 
hence reproduce the old difficulties of the theory.
\end{rem}

\subsection{The spaces of functionals}
The integral functionals $S\in\ov{H}^n(\pi)$, which we have been dealing with until now, are building blocks in a wider class
of mappings $\Gamma(\pi)\to\Bbbk$. By viewing elements of $\Gamma(\pi)$ as ``points'' and functionals from
$\ov{H}^n(\pi)$ as ``elementary functions'' (see~\cite{ClassSym} and references therein), we consider pointwise-defined
(formal sums of) products of such maps, e.\,g., we let
$$(S_1\cdot S_2)(s)\stackrel{\text{def}}{=}S_1(s)\cdot S_2(s)$$
for any two already defined functionals $S_1$ and $S_2$; the binary operation $\cdot$ for their values at $s\in\Gamma(\pi)$
is the usual multiplication of $\Bbbk$-numbers ($\Bbbk=\BBR$ or $\BBC$). By definition, we put
$$
\ov{\gN}^n(\pi,T\pi)=\bigoplus_{\ell=1}^{+\infty}\bigotimesk_{i=1}^{\ell}\bigoplus_{k=0}^{+\infty}
\ov{H}^{n(1+k)}(\pi\times\underbrace{T\pi\times\ldots\times T\pi}_{k\text{ variations}}).
$$
This space contains the linear subspace of \textsl{local functionals},
$$
\ov{\gM}^n(\pi)=\bigoplus_{\ell=1}^{+\infty}\bigotimesk_{i=1}^{\ell}\ov{H}^n(\pi),$$
for instance, such as the standard weight factor $\exp(\tfrac{\boldi}{\hbar}S^{\hbar})$ in BV-models with quantum BV-action
$S^{\hbar}$ (see section~\ref{SecGauge} below, cf.~\cite{BeilinsonDrinfeld}).
The larger space $\ov{\gN}^n(\pi,T\pi)\supsetneq\ov{\gM}^n(\pi)$
harbours local functionals \emph{and} their variations of arbitrarily high order. The (products of) integral functionals
in $\ov{\gM}^n(\pi)\supset\ov{H}^n(\pi)$ could be viewed as primary objects. In the course of variations, their descendants 
in $\ov{\gN}^n(\pi,T\pi)$ absorb new test shifts and retain the information about initial building blocks from $\ov{H}^n(\pi)$.
This memory governs the analytic behaviour of descendants in operations such as 
calculation of the BV-Laplacian or taking
the Schouten bracket; we also refer to sections~\ref{SecSynonyms} above
and~\ref{SecMaster} in what follows. The composite structure of the bundle 
$\pi\times T\pi\times\ldots\times T\pi$ is crucial whenever one wants to not only describe 
initial setup such as a given
BV-model but to perform rigorous calculations in it, handling higher\/-\/order variations of objects 
(e.\,g., third\/-\/order
variations occur in~\eqref{EqZimes} on p.~\pageref{EqZimes}, see also Example~\ref{Countercounterexample} 
on p.~\pageref{Countercounterexample} below, ---~and the order is equal to
four in property~\eqref{EqDeltaSquareIntro} for the BV-Laplacian $\Delta$ to be a differential). The geometric approach to 
(graded-)permutable variations of functionals makes such calculations well-defined and proofs free from 
any \textsl{ad hoc} regularisation recipes.

\section{The geometry of Batalin\/--\/Vilkovisky formalism}\label{SecDef}
\noindent%
The geometry of variations which we analysed in the previous section was not specific to a bundle $\pi$ of unknowns.
In this section we first recall a construction of the BV-superbundle whose fibres are endowed with $\BBZ_2$-valued
ghost parity. By definition, the BV-bundle $\boldsymbol{\pi}_{\BV}^{(0|1)}=\pi^*_{\infty}(\boldsymbol{\zeta}_\infty^{(0|1)})$
is induced from the Whitney sum
$
\boldsymbol{\zeta}^{(0|1)}=\zeta_0\mathbin{{\times}_M}\zeta_1\mathbin{{\times}_M}\ldots\mathbin{{\times}_M}\zeta_{\lambda}
\mathbin{{\times}_M}
\Pi\widehat{\zeta}_0\mathbin{{\times}_M}\Pi\widehat{\zeta}_1\mathbin{{\times}_M}\ldots\mathbin{{\times}_M}
\Pi\widehat{\zeta}_{\lambda}
$
of some $\BBZ_2$-graded vector bundles over $M$ (in what follows we sum up the construction of 
$\zeta_0,\ldots,\zeta_{\lambda}$
and their parity-reversed duals
$\Pi\widehat\zeta_0,\ldots,\Pi\widehat\zeta_{\lambda}$)
by the infinite jet bundle $\pi_{\infty}\colon J^{\infty}(\pi)\to M$ associated with the smooth fibre bundle $\pi$ 
of physical fields.%
\footnote{A subtle point, which we reconsider in section~\ref{SecBVzoo} (see also Remark~\ref{RemELEqLinear}), is that
the \emph{fibre} bundle $\pi$ is often identified with the \emph{vector} bundle component $\zeta_0$ in
$\boldsymbol{\zeta}^{(0|1)}$. Nevertheless, it is the construction of induced bundle 
$\pi^*_{\infty}(\zeta_0\mathbin{{\times}_M}\ldots)$ 
by using which the physical fields and their derivatives are remembered by the Euler\/--\/Lagrange equations (referred to
$\zeta_0$), Noether's identities (in $\zeta_1$), and higher geberations of syzygies from $\zeta_2,\ldots,\zeta_{\lambda}$
(if any).}

\subsection{The BV-zoo}\label{SecBVzoo}
Let a fibre bundle $\pi$ of physical fields over the base manifold $M^n$ be given and denote by $\phi$ the fibre coordinates
in it. Suppose that 
$${S_0=\int\cL_0(\bx,[\phi])\dvol(\bx)\in\ov{H}^n(\pi)}$$
is the action of a field model under study. By using the theory and techniques from section~\ref{SecEL} we know how one
derives, via the stationary point condition $\overleftarrow{\delta S}{\bigr|}_s=0$ at $s\in\Gamma(\pi)$ the Euler\/--\/Lagrange
equations of motion $\cE_{\EL}=\{\overleftarrow{\delta S_0}/\delta\phi=0\}$ whose left-hand sides belong to the
$C^{\infty}(J^{\infty}(\pi))$-module of sections $P_0=\Gamma(\pi^*_{\infty}(\zeta_0))$ for the cotangent bundle $\zeta_0$
to $\pi$ such that 
$\overleftarrow{\delta S_0}/\delta\phi|_{j^{\infty}(s)}\cdot\dvol(\cdot)\in\Gamma(T^*\pi)\otimes_{C^{\infty}(M)}\Lambda^n(M)$
for any field configuration $s\in\Gamma(\pi)$.

We recall from Remark~\ref{RemELEqLinear} that by following a misfortunate but long-established tradition it is the unknowns
$\phi$ in $\pi$ but not the global coordinates $\bF$ in the fibre of cotangent bundle $T^*\pi$ to $\pi$ which are used to
parametrise the equations within Euler\/--\/Lagrange system $\cE_{\EL}$ at points of the graph of a 
section~$\phi\in\Gamma(\pi)$.

If the model at hand is gauge-invariant, then it admits an off-shell differential dependence
$\boldsymbol{\Phi}(\bx,[\phi];[\bF])\equiv0\in\Gamma((\pi_{\infty}\mathbin{{\times}_M}\zeta_{0,\infty})^*(\zeta_1))$
between the left-hand sides $\bF$ of equations $\cE_{\EL}$. We recall further
that the dependence of Noether's identities $\boldsymbol{\Phi}$ on (the derivatives of) $\bF$ is \emph{linear} for
Euler\/--\/Lagrange systems $\cE_{\EL}$; the generators $\bp(\bx,[\phi])\in\widehat P_1=\Gamma(\pi^*_{\infty}(\widehat\zeta_1))$
of Noether's gauge symmetries for $S_0$ are sections of the bundle $\widehat\zeta_1$ which is induced from the dual to
$\zeta_1$ with respect to the top-degree horizontal form-valued coupling $\langle\,,\,\rangle$. Indeed, if
$$
0\equiv\left\langle\bp,\boldsymbol{\Phi}(\bx,[\phi];[\bF])\right\rangle
$$
and $\boldsymbol{\Phi}$ is linear in $\bF$ or its finite-order derivatives,
$$\boldsymbol{\Phi}(\bx,[\phi];[\bF])=\ell^{(\bF)}_{\boldsymbol{\Phi}}(\bF)\equiv0,$$
then an integration by parts yields that
$$
0\cong\left\langle(\ell^{\,(\bF)}_{\boldsymbol{\Phi}})^{\dagger}(\bp),\delta S_0/\delta\phi\right\rangle\cong
\vec{\dd}^{\,(\phi)}_{(\ell^{(\bF)}_{\boldsymbol{\Phi}})^{\dagger}(\bp)}(S_0).
$$
This shows that the evolutionary vector field $\vec{\dd}^{\,(\phi)}_{A(\bp)}$
with $A=(\ell^{(\bF)}_{\boldsymbol{\Phi}})^{\dagger}$ and $\bp=\bp(\bx,[\phi])$ is a Noether symmetry of the action $S_0$.
By reading the above equalities backwards, one obtains the linear Noether relations
$\boldsymbol{\Phi}=A^{\dagger}(\bF)$ between the Euler\/--\/Lagrange equations of motion.

Likewise, there could in principle appear higher generations of linear identities\linebreak
${\Psi_2(\bx,[\phi],[\bF];[\boldsymbol{\Phi}])\equiv0}$, $\dots$, 
${\Psi_{\lambda}(\bx,[\phi],[\bF],[\boldsymbol{\Phi}],\dots,[\Psi_{\lambda-2}];[\Psi_{\lambda-1}])\equiv0}$
which hold for all~$\phi$, sections~$\bF$ in~$\zeta_0$, and so on up to the coordinates $\Psi_{\lambda-2}$.
Each $i$th generation of such identities arises with the respective vector bundle $\zeta_i$
with fibre dimension~$m_i$; 
the total number of generations
is bounded from above by a constant $\lambda\in\BBN\cup\{0\}$ due to Hilbert's theorem on syzygies~\cite{Eisenbud}:
$0\le i\le\lambda\le n$, where $n$ is the dimension of base manifold $M^n$. For example, we have that $\lambda=1$
for Yang\/--\/Mills theory, and $\lambda=2$ for gravity over a fourfold $M^4$.

We denote by $\bF$ (alas! at once identifying this global $m$-tuple in $\zeta_0$ for the equations with the local field 
variables $\phi$), and by $\boldsymbol{\gamma}^{\dagger}$,\ $\bc^{\dagger}$,\ $\dots$,\ $\bc_{\lambda}^{\dagger}$ the global fibre coordinates
in $\zeta_1$ for Noether's identities, and so on up to $\zeta_{\lambda}$, respectively (see Fig.~\ref{FigBVSetup}).
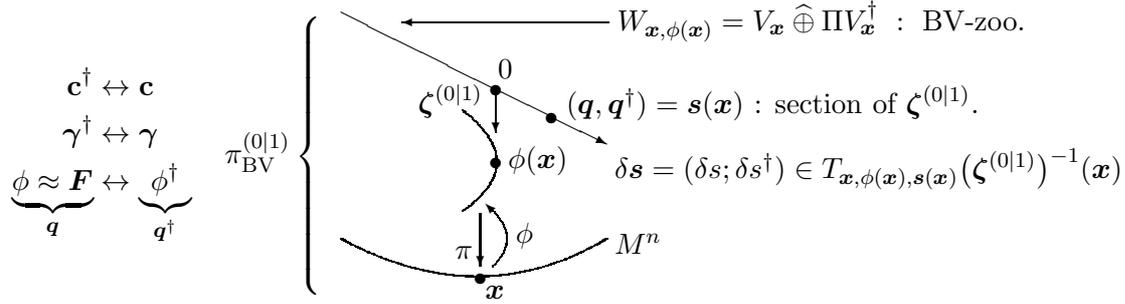
\begin{figure}[htb]
$\begin{aligned}
\bc^{\dagger}&\leftrightarrow\bc\\
\boldsymbol{\gamma}^{\dagger}&\leftrightarrow\boldsymbol{\gamma}\\
\underbrace{\phi\approx\bF}_{\bq}&\leftrightarrow\lefteqn{\underbrace{\phi^{\dagger}}_{\bq^{\dagger}}}
\end{aligned}
\qquad\pi_{\BV}^{(0|1)}\left\{
\text{
\unitlength=1mm
\linethickness{0.4pt}
\begin{picture}(102.00,20.00)
\put(-5,-20){\begin{picture}(0,0)
\bezier{160}(5.00,10.00)(23.33,0.00)(40.00,10.00)
\put(41,8){\makebox(0,0)[lb]{$M^n$}}
\put(23.33,4.67){\circle*{1.33}}
\put(24,2){\makebox(0,0)[lb]{$\bx$}}
\put(23.33,13.33){\vector(0,-1){7.00}}
\bezier{88}(21.00,13.67)(30.00,20.00)(21.00,27.00)
\put(25.33,20.00){\circle*{1.33}}
\put(5.00,40.00){\vector(2,-1){35.00}}
\put(25.33,29.67){\circle*{1.33}}
\put(25.33,29.67){\vector(0,-1){5.67}}
\put(40.33,38.67){\vector(-1,0){27.33}}
\put(41.33,36.17){\makebox(0,0)[lb]{$W_{\bx,\phi(\bx)}=V_{\bx}
\mathbin{\widehat{\oplus}}\Pi V_{\bx}^{\dagger}\ :\text{ BV-zoo.}$}}
\put(25.67,31.33){\makebox(0,0)[lb]{$0$}}
\put(41.00,16.5){\makebox(0,0)[lb]{$\delta\bolds=
   (\delta s;\delta s^{\dagger})\in T_{\bx,\phi(\bx),\bolds(\bx)}\bigl(\boldsymbol{\zeta}^{(0|1)}\bigr)^{-1}(\bx)$}}
\put(27.00,18.33){\makebox(0,0)[lb]{$\phi(\bx)$}}
\put(15.17,26.00){\makebox(0,0)[lb]{$\boldsymbol{\zeta}^{(0|1)}$}}
\put(20.00,7.33){\makebox(0,0)[lb]{$\pi$}}
\bezier{40}(25.00,6.33)(28.67,10.00)(25.00,13.67)
\put(25.00,13.67){\vector(-1,1){1.00}}
\put(28.00,8.67){\makebox(0,0)[lb]{$\phi$}}
\put(32.67,26.00){\circle*{1.33}}
\put(34.67,25.67){\makebox(0,0)[lb]{$(\bq,\bq^{\dagger})=\bolds(\bx)$ : section of $\boldsymbol{\zeta}^{(0|1)}$.}}
\end{picture}}\end{picture}
}
\right.$
\caption{The fibre bundle $\pi$ of physical fields~$\phi$, the bundle~$\boldsymbol{\zeta}^{(0|1)}$ of
BV-\/variables~$(\bq,\bq^{\dagger})$, and the vector bundle~$T\boldsymbol{\zeta}^{(0|1)}$ of
their variations~$\delta\bolds=(\delta s;\delta s^{\dagger})$.}\label{FigBVSetup}
\end{figure}

In turn, each vector bundle $\zeta_0$,\ $\dots$,\ $\zeta_{\lambda}$ brings its $\langle\,,\,\rangle_i$-\/dual $\widehat\zeta_i$
into the picture. (Note that the equations $\overleftarrow{\delta S_0}{\bigr|}_s=0$ upon $s\in\Gamma(\pi)$ for
$S_0=\int\cL(\bx,[\phi])\cdot\dvol(\bx)$ and all equations' linear-differential descendants retain the volume form
$\dvol(\bx)$ from the model's action $S_0$ at all points~$\bx\in M^n$.)

We now reverse the parity of linear vector space fibres in 
$\widehat\zeta_0$,\ $\dots$,\ $\widehat\zeta_{\lambda}$ by introducing
the $\BBZ_2$-valued ghost parity $\GH(\cdot)$ and considering the odd \textsl{neighbours} 
$\Pi\widehat\zeta_0$,\ $\dots$,\ $\Pi\widehat\zeta_{\lambda}$ of the dual vector bundles (see~\cite{Galli10, Voronov2002} and also
Appendix~A in~\cite{Norway13} for discussion). Let us denote by 
$\phi^{\dagger}$,\ $\boldsymbol{\gamma}$,\ $\bc$,\ $\ldots$,\ $\bc_{\lambda}$ the ghost parity-odd global coordinates 
along linear vector space fibres in 
$\Pi\widehat\zeta_0$,\ $\dots$,\ $\Pi\widehat\zeta_{\lambda}$, respectively.
These variables' proper names are easily recognized from 
the standard notation: $\phi$ replacing $\bF$ are the fields and $\phi^{\dagger}$ are odd-parity \textsl{antifields},
$\boldsymbol{\gamma}$ are the odd \textsl{ghosts} and $\boldsymbol{\gamma}^{\dagger}$ are the parity-even \textsl{antighosts},
whereas the canonically conjugate variables $\bc\leftrightarrow\bc^{\dagger}$, \dots, 
$\bc_{\lambda}\leftrightarrow\bc_{\lambda}^{\dagger}$ are higher ghost-antighost pairs of opposite ghost parities 
(resp., odd and even). We denote by $\bq$ the agglomeration of ghost parity-even variables and by $\bq^{\dagger}$ their
respective canonically conjugate parity-odd neighbours.%
\footnote{Consider Feynman's path integral $\int_{\Gamma(\zeta^0)}[D\bq]\,\cO([\bq],[\bq^{\dagger}])$ of an 
observable~$\cO$ over the space of ghost parity-even sections.
The BV-\/Laplacian $\Delta$ is the tool which ensures the integral's effective independence from the unphysical
ghost parity-odd variables~$\bq^{\dagger}$, see section~\ref{SecMaster}.}

\begin{rem}\label{RemBiGraded}
Let us emphasize that by using the word ``parity'' we always refer to the ghost parity $\GH(\,\cdot\,)$ of objects.%
\footnote{By construction, the ghost parities of canonically conjugate BV-variables are complementary modulo 2, that is,
to each even-parity variable $q$ there corresponds its odd-parity dual neighbour $q^{\dagger}$. Of course, there remains
much freedom in a choice of the integer ghost numbers followed by the group homomorphism 
$(-)^{\GH(\,\cdot\,)}\colon\BBZ\to\BBZ_2$. For example, let $(q,q^{\dagger})$ be a pair of conjugate BV-variables; 
then one balances $\GH(q)=\GH(q^{\dagger})\pm1$ or $\GH(q)=-\GH(q^{\dagger})\pm1$,
 or by using any other integers such that
one is even and the other is odd. Obviously any shift by an even integer (e.\,g., 
$\GH(q)\mapsto-\GH(q)=\GH(q)-2\cdot\GH(q)$)
does not alter any values in the parity group $\BBZ_2$; this is no more than another way to describe the same theory.}
In this paper we aim at understanding the geometry of variations so that the graded arithmetic and algebra of derivations
play auxiliary 
r\^oles. However, as soon as the interaction of geometries is properly fixed,
their extension to a $\BBZ_2$-graded setup of superbundle $\pi\colon E^{(m_0+n_0|m_1+n_1)}\to M^{(n_0|n_1)}$ 
of physical fields (possibly, over a base supermanifold $M^{(n_0|n_1)})$ makes no conceptual difficulty
(\cite{BerezinAA}, see also~\cite{HenneauxTeitelboim} and references therein). The theory then becomes bi\/-\/graded: 
it involves 
(i) the $\BBZ_2$-grading $|\cdot|$ in the ring of field coordinates, which echoes in the $\BBZ_2$-grading of 
Euler\/--\/Lagrange equations of motion, Noether identities, etc., (the model's action functional $S_0$
has even grading by default), and 
(ii) the ghost parity $\GH(\cdot)$, see~\cite{Voronov2002}. 

The $\BBZ_2$-grading $|\cdot|$ and the ghost parity
$\GH(\cdot)$ are independent from each other. We denote by $\bq=\bq^{(0|1)}$ the ghost parity-even BV-fibre variables, 
which are then grouped in even- and odd-grading components. Likewise, the ghost parity-odd BV-variables
$\bq^{\dagger}=(\bq^{\dagger})^{(0|1)}$ are arranged in exactly the same way. By construction, the values of
$\BBZ_2$-gradings for
canonically conjugate variables $(\bq,\bq^{\dagger})$ coincide: we have that
$|\bq|=|\bq^{\dagger}|$ and $\GH(\bq^{\dagger})\equiv\GH(\bq)+1\mod2$.
\end{rem}

Next, we take the Whitney sum
$$\boldsymbol{\zeta}^{(0|1)}\stackrel{\text{def}}{=}\zeta_0\mathbin{{\times}_M}\zeta_1\mathbin{{\times}_M}\ldots\mathbin{{\times}_M}\zeta_{\lambda}
\mathbin{{\times}_M}\Pi\widehat\zeta_0\mathbin{{\times}_M}\Pi\widehat\zeta_1\mathbin{{\times}_M}\ldots\mathbin{{\times}_M}\Pi\widehat\zeta_{\lambda}$$
of the double set of dual bundles with opposite ghost parities of fibre coordinates. 
Fin\-al\-ly, let us lift the Whitney sum of infinite jets of those bundles, 
putting it over the bun\-dle of physical fields by using a pull-back under $\pi_{\infty}$. We denote the
resulting bundle over the total space $J^{\infty}(\pi)\to M$ by 
\[
\pi_{\BV}^{(0|1)}=\pi_\infty^*\bigl(\boldsymbol{\zeta}_\infty^{(0|1)}\bigr).
\]
The fibre $W_{\bx}=V_{\bx}\mathbin{\widehat{\oplus}}\Pi V_{\bx}^{\dagger}$ of $\boldsymbol{\zeta}^{(0|1)}$
admits the canonical decomposition in two dual halves of opposite parities;%
\footnote{To highlight this duality between ghost parity-even vector space $V_{\bx}$ and ghost parity-odd subspace
$\Pi V_{\bx}^{\dagger}$ in $W_{\bx}$, we use the notation $\widehat{\oplus}$ for their direct sum; whenever a coordinate
in $V_{\bx}$ is rescaled by $\const$ times, the respective conjugate variable in $\Pi V_{\bx}^{\dagger}$ is transformed
inverse-proportionally by $\const^{-1}$ times, see Remark~\ref{RemRescaleBoth} below.}
this is shown in Fig.~\ref{FigDual}.
\begin{figure}[htb]
\begin{center}{\unitlength=0.5mm
\linethickness{0.4pt}
\begin{picture}(90.00,35.00)
\put(20.00,20.00){\circle*{1.33}}
\put(20.00,20.00){\vector(1,0){20.00}}
\put(20.00,20.00){\vector(-1,0){20.00}}
\put(20.00,20.00){\vector(0,1){20.00}}
\put(20.00,20.00){\vector(0,-1){20.00}}
\put(27.33,31.00){\makebox(0,0)[lb]{$V_x$}}
\put(-3,18.33){\llap{$W_{\bx}={}$}}
\put(42.67,16){\makebox(0,0)[lb]{$\widehat{\bigoplus}$}}
\put(55.00,20.00){\vector(1,0){18.00}}
\put(95.00,20.00){\vector(-1,0){18.00}}
\put(75.00,40.00){\vector(0,-1){18.00}}
\put(75.00,0.00){\vector(0,1){18.00}}
\put(75.00,20.00){\circle{1.33}}
\put(84.00,30.67){\makebox(0,0)[lb]{$(\Pi)V_x^{\dagger}$}}
\end{picture}
}\end{center}
\caption{The BV-\/fibre is a direct sum of dual vector spaces; one is parity\/-\/even
and the other is proclaimed ghost parity\/-\/odd.}\label{FigDual}
\end{figure}
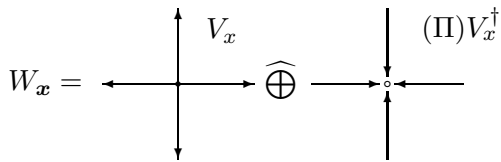

Bearing in mind that the fields $\phi$ are artifically incorporated into the newly built fibre by $\zeta_0$, we shall 
omit 
an ever-present reference to points $(\bx,\phi(\bx))$ of jets of sections of the initial bundle $\pi$ when dealing with
variations $\delta\bolds=(\delta s;\delta s^{\dagger})$ for sections $s$ of $\boldsymbol{\zeta}^{(0|1)}$ at $\phi(\bx)$,
see Fig.~\ref{FigBVSetup}. 

\subsection{The signs convention in Nature}\label{SecSigns}
The construction of canonically conjugate pairs of global coordinates $(\bq,\bq^{\dagger})$ in the fibres
$W_{\bx}=V_{\bx}\mathbin{\widehat{\oplus}}\Pi V_{\bx}^{\dagger}$ refers to a choice of the smooth field of dual bases in the two 
subspaces of even and odd ghost parity. Suppose that $\vec{e}_i(\bx)$ is a frame in $V_{\bx}$ and 
$\vec{e}^{{}\,\dagger i}(\bx)$ is its dual in $\Pi V_{\bx}^{\dagger}$, where the index $i$ runs from 1 to the 
total dimension of even- and odd-parity component in the fibre of $\boldsymbol{\zeta}^{(0|1)}$; we denote by
$N=m+m_1+\ldots+m_{\lambda}$ each of the two dimensions so that the fibre of the Whitney sum $\boldsymbol{\zeta}^{(0|1)}$
has superdimension $(N|N)$.

Let us recall that it is the parity of coordinates $\bq^{\dagger}$ but not of the vectors $\vec{e}^{{}\,\dagger i}$
in a basis which is reversed by the operation $\Pi$. The odd-parity component in the vector bundle
$\boldsymbol{\zeta}^{(0|1)}$ is topologically indistinguishable from 
$\widehat{\zeta}_0\mathbin{{\times}_M}\ldots\mathbin{{\times}_M}\widehat{\zeta}_{\lambda}$
but the rules become new for arithmetic in the algebra of coordinate functions on the total space. Therefore, we let the
notation $\vec{e}^{{}\,\dagger i}(\bx)$ be identical for the same bases in $V_{\bx}^{\dagger}$ and $\Pi V_{\bx}^{\dagger}$.

\begin{rem}\label{RemTwoCouplings}
The presence of \emph{two} dual vector spaces, $V_{\bx}$ and $(\Pi)V_{\bx}^\dagger$, standardly implies that there are \emph{two}
couplings,
\begin{equation}\label{EqTwoCouplings}
\langle\,,\,\rangle\colon V_{\bx}\times(\Pi)V_{\bx}^{\dagger}\to\Bbbk \quad \text{and}\quad
\langle\,,\,\rangle\colon(\Pi)V_{\bx}^{\dagger}\times V_{\bx}\to\Bbbk;
\end{equation}
we denote both operations in the same way because the order of arguments uniquely determines the choice. Let us remember
also that it is not the linear vector space fibres of the superbundle $\boldsymbol{\zeta}^{(0|1)}$ over the bundle $\pi$
of physical fields but it is the tangent spaces
$$T_{(\bx,\phi(\bx),s(\bx))}\left(V_{\bx}\mathbin{\widehat{\oplus}}\Pi V_{\bx}^{\dagger}\right)\cong
V_{\bx}\mathbin{\widehat{\oplus}}\Pi V_{\bx}^{\dagger}$$
to those fibres which harbour the variations $\delta\bolds=(\delta s;\delta s^{\dagger})$ of sections $s$ of the BV-bundle.

A reason to study the geometry of variations in tangent spaces to the fibres is clear from section~\ref{SecEL}. In fact,
although we have substantiated in section~\ref{SecBVzoo} that Euler\/--\/Lagrange equations and their descendants do form
linear vector spaces, this structure is incidental for the BV-formalism while Feynman path integration is not yet begun.
The guiding geometric principle is that linear vector spaces appear only in the course of inspection of functionals'
responses to infinitesimal test shifts of their arguments.

Couplings~\eqref{EqTwoCouplings} are defined only if the linear vector spaces $V_{\bx}\ni\delta s(\bx)$ and
$\Pi V_{\bx}^{\dagger}\ni\delta s^{\dagger}(\bx)$ are located over the same point $\bx\in M^n$ of the base manifold, and
over it they are attached as the two components of tangent space 
$T_{s(\bx)}\bigl(\boldsymbol{\zeta}^{(0|1)}\bigr)^{-1}\bigl(\bx,\phi(\bx)\bigr)$,
at the same point $s(\bx)=s\left(\bx,\phi(\bx)\right)$ of fibre in the superbundle $\boldsymbol{\zeta}^{(0|1)}$ over a point
$(\bx,\phi(\bx))$ of the total space for the bundle $\pi$ of physical fields (see Fig.~\ref{FigBVSetup}).

A distinction between the vector space $V_{\bx}$ and its parity-reversed dual nontrivially determines the couplings' values 
whenever they are defined. Namely, each of the two finite-dimensional vector spaces is reflexive,
\begin{equation}\label{EqTakeDual}
\left((V_{\bx})^{\dagger}\right)^{\dagger}\cong V_{\bx}\quad \text{and}\quad
\left((\Pi V_{\bx}^{\dagger})^{\dagger}\right)^{\dagger}\cong\Pi V_{\bx}^{\dagger},
\end{equation}
but these isomorphisms are not always identity mappings. 
We have that
\begin{equation}\label{EqChoiceSign}
\left\langle\vec{e}_i(\bx),\vec{e}^{{}\,\dagger j}(\bx)\right\rangle=\boldsymbol{\delta}^j_i\quad\text{yet}\quad
\left\langle\vec{e}^{{}\,\dagger j}(\bx),\vec{e}_i(\bx),\right\rangle=-\boldsymbol{\delta}^j_i,
\end{equation}
where $\boldsymbol{\delta}^j_i$ is the Kronecker symbol whose value is the unit iff $i=j$ and which is set equal to zero
otherwise.
\end{rem}

\begin{rem}
We claim that this mechanism is responsible, in particular, for the skew-symmetry of various Poisson brackets (e.\,g., of the
parity-odd Schouten bracket). Let us emphasize that this is a principle of order between geometric objects; the concept is
not restricted to the BV-setup which we study here. Actually, Eq.~\eqref{EqChoiceSign} is the fundamental reason for 
differential $1$-\/forms to anticommute%
\footnote{That is, this argument reveals why a mathematical axiom that differential forms do anticommute in the course of
calculations leads to verifiable and relevant theoretic predictions which match
experimental data.}
(in the class of geometries for which a coupling is defined between the linear vector spaces of co-multiples under the
wedge product $\wedge$; for instance, such is the case of the Helmholtz criterion 
$\psi=\delta S/\delta\bq$ $\Leftrightarrow$ $\vec{\ell}_{\psi}^{\,(\bq)}=\bigl(\vec{\ell}_{\psi}^{\,(\bq)}\bigr)^{\dagger}$
for images of the variational derivative~\cite{TwelveLectures,Olver}). Physically speaking, the binary count by ``a vector space,''
``not the former, hence its dual,'' and ``not the dual, but the initial space's image under central symmetry''
builds on the notion of order and realizes the law of the excluded middle.
\end{rem}

\subsection{Left\/-{} and right\/-\/variations via operators}\label{SecVariations}
Suppose that $$S=\int\cL(\bx,[\bq],[\bq^{\dagger}])\dvol(\bx)$$ is an integral functional $\Gamma(\pi_{\BV})\to\Bbbk$.
Let us focus on the correspondence between test shifts 
$\delta\bolds=(\delta s;\delta s^{\dagger})=\delta s^i\cdot\vec{e}_i+\delta s_i^{\dagger}\cdot\vec{e}^{{}\,\dagger i}$
of BV-fields $s\in\Gamma(\pi_{\BV})$ and, on the other hand, left- or right-acting linear singular integral operators
$\overleftarrow{\delta}\!\!\bolds$ and $\overrightarrow{\delta}\!\!\bolds$ which yield  the functional's responses to shifts
of its argument~$\bolds$. By definition, we put
\begin{subequations}\label{EqDefVariations}
\begin{multline}
\overrightarrow{\delta}\!\!\bolds=\int_M\Id\bby\,\Bigl\{
(\delta s^i)\left(\frac{\overleftarrow{\dd}}{\dd\bby}\right)^{\sigma}(\bby)\cdot
\left\langle(\vec{e}^{{}\,\dagger i})^{\dagger} (\bby),\vec{e}^{{}\,\dagger j}(\cdot)\right\rangle\frac{\overrightarrow{\dd}}
{\dd q^j_{\sigma}}+{}\\
{}+(\delta s^{\dagger}_i)\left(\frac{\overleftarrow{\dd}}{\dd\bby}\right)^{\sigma}(\bby)\cdot
\left\langle(\vec{e}_i)^{\dagger}(\bby),\vec{e}_j(\cdot)\right\rangle\frac{\overrightarrow{\dd}}
{\dd q^{\dagger}_{j,\sigma}}\Bigr\} 
\end{multline}
and
\begin{multline}
\overleftarrow{\delta}\!\!\bolds=\int_M\Id\bby\,\Bigl\{
\frac{\overleftarrow{\dd}}{\dd q^j_{\sigma}}
\left\langle\vec{e}^{{}\,\dagger j}(\cdot),{}^{\dagger}(\vec{e}^{{}\,\dagger i})(\bby)\right\rangle 
\left(\frac{\overrightarrow{\dd}}{\dd\bby}\right)^{\sigma}(\delta s^i)(\bby)+{}\\
{}+\frac{\overleftarrow{\dd}}{\dd q^{\dagger}_{j,\sigma}}
\left\langle\vec{e}_j(\cdot),{}^{\dagger}(\vec{e}_i)(\bby)\right\rangle\left(\frac{\overrightarrow{\dd}}
{\dd\bby}\right)^{\sigma}(\delta s^{\dagger}_i)(\bby)\Bigr\}. 
\end{multline}
\end{subequations}
The above formulas for directed operators $\overrightarrow{\delta}\!\!\bolds$ and $\overleftarrow{\delta}\!\!\bolds$ 
contain new notation $(\vec{e}_i)^{\dagger},\ (\vec{e}^{{}\,\dagger i})^{\dagger}$ and
${}^{\dagger}(\vec{e}_i),\ {}^{\dagger}(\vec{e}^{{}\,\dagger i})$,
also referring to an important sign convention which fully determines those adjoint objects. Namely, let us agree that
over every $\bx\in M^n$ the covectors
$$
\left.
\vec{e}^{{}\,\dagger i}(\bx)
\left(\frac{\overrightarrow{\dd}}{\dd q^i_{\sigma}}\cL(\bx,[\bq],[\bq^{\dagger}])\right)
\right|_{j^{\infty}_{\bx}(\bolds)}+
\left.\vec{e}_i(\bx)\left(\frac{\overrightarrow{\dd}}{\dd q^{\dagger}_{i,\sigma}}\cL(\bx,[\bq],[\bq^{\dagger}])\right)
\right|_{j^{\infty}_{\bx}(\bolds)}
$$
and
$$
\left.\left(\cL(\bx,[\bq],[\bq^{\dagger}])\frac{\overleftarrow{\dd}}{\dd q^i_{\sigma}}\right)
\right|_{j^{\infty}_{\bx}(\bolds)}\vec{e}^{{}\,\dagger i}(\bx)+
\left.\left(\cL(\bx,[\bq],[\bq^{\dagger}])\frac{\overleftarrow{\dd}}{\dd q^{\dagger}_{i,\sigma}}\right)
\right|_{j^{\infty}_{\bx}(\bolds)}\vec{e}_i(\bx)
$$
are expanded in the cotangent space 
$T^*_{\bolds(\bx)}W_{\bx}\cong V^{\dagger}_{\bx}\mathbin{\widehat{\oplus}}(TV^{\dagger}_{\bx})^{\dagger}$
with respect to the original basis $(+\vec{e}^{{}\,\dagger i},+\vec{e}_i)$; note the signs (any other convention here would
nohow alter the theory's content but it would (in)\/appropriately modify the signs in~\eqref{EqAlmostComplex} below).
The normalization of left- and right-adjoint objects $(\vec{e}_i)^{\dagger},\ (\vec{e}^{{}\,\dagger i})^{\dagger}$ and
${}^{\dagger}(\vec{e}_i),\ {}^{\dagger}(\vec{e}^{{}\,\dagger i})$ is immediate under assumption that the couplings'
equations yield~\eqref{EqStartSum} and then~\eqref{EqVariation} 
after integration by parts --- no extra sign factors appear 
in those formulas. This requirement determines the table
\begin{equation}\label{EqAlmostComplex}
\begin{aligned}
(\vec{e}^{{}\,\dagger i})^{\dagger}&=\phantom{+}\vec{e}_i,\\
(\vec{e}_i)^{\dagger}&=-\vec{e}^{{}\,\dagger i},
\end{aligned}\qquad
\begin{aligned}
{}^{\dagger}(\vec{e}^{{}\,\dagger i})&=-\vec{e}_i,\\
{}^{\dagger}(\vec{e}_i)&=\phantom{+}\vec{e}^{{}\,\dagger i},
\end{aligned}
\end{equation}
so that the following defining relations hold:
\begin{equation*}
\left\langle(\vec{e}_i)^{\dagger},\vec{e}_i\right\rangle=
\left\langle\vec{e}_i,{}^{\dagger}(\vec{e}_i)\right\rangle=
\left\langle(\vec{e}^{{}\,\dagger i})^{\dagger},\vec{e}^{{}\,\dagger i}\right\rangle=
\left\langle\vec{e}^{{}\,\dagger i},{}^{\dagger}(\vec{e}^{{}\,\dagger i})\right\rangle
=+1.
\end{equation*}
Let us notice 
that the left- and right-acting operation ${}^{\dagger}$ provides the analogue of left and right
$\langle\,,\,\rangle$-dual in this ordered world; the first column in~\eqref{EqAlmostComplex} determines a clockwise
rotation in the oriented plane spanned by $\vec{e}_i\prec\vec{e}^{{}\,\dagger i}$, whereas taking the adjoints
${}^{\dagger}(\cdot)\colon\vec{e}_i\mapsto\vec{e}^{{}\,\dagger i}$ and $\vec{e}^{{}\,\dagger i}\mapsto-\vec{e}_i$ induces
the counterclockwise rotation in that plane as shown in Fig.~\ref{FigMill}.
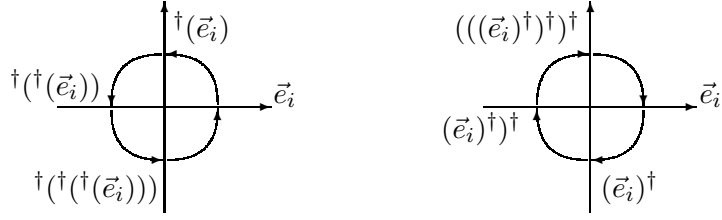
\begin{figure}[htb]
\begin{center}{\unitlength=0.7mm
\linethickness{0.4pt}
\begin{picture}(160.00,27.00)(0,18)
\put(40.00,10.00){\vector(0,1){40}}
\put(20.00,30.00){\vector(1,0){40.00}}
\bezier{76}(50.00,30.67)(50.00,40.00)(40.67,40.00)
\bezier{76}(30.00,29.33)(30.00,20.00)(39.33,20.00)
\bezier{76}(40.67,20.00)(50.00,20.00)(50.00,29.33)
\bezier{76}(39.33,40.00)(30.00,40.00)(30.00,30.67)
\put(30.00,32.00){\vector(0,-1){1.67}}
\put(38.00,20.00){\vector(1,0){1.67}}
\put(50.00,28.00){\vector(0,1){1.67}}
\put(42.00,40.00){\vector(-1,0){1.67}}
\put(41.67,42.00){\makebox(0,0)[lb]{${}^{\dagger}(\vec{e}_i)$}}
\put(60.5,30){\makebox(0,0)[lb]{$\vec{e}_i$}}
\put(39.00,11.67){\makebox(0,0)[rb]{${}^{\dagger}(^{\dagger}(^{\dagger}(\vec{e}_i)))$}}
\put(11.00,31.00){\makebox(0,0)[lb]{${}^{\dagger}(^{\dagger}(\vec{e}_i))$}}
\put(120.00,10.00){\vector(0,1){40.00}}
\put(100.00,30.00){\vector(1,0){40.00}}
\bezier{76}(130.00,30.67)(130.00,40.00)(120.67,40.00)
\bezier{76}(110.00,29.33)(110.00,20.00)(119.33,20.00)
\bezier{76}(120.67,20.00)(130.00,20.00)(130.00,29.33)
\bezier{76}(119.33,40.00)(110.00,40.00)(110.00,30.67)
\put(110.00,28.00){\vector(0,1){1.67}}
\put(122.00,20.00){\vector(-1,0){1.67}}
\put(130.00,32.00){\vector(0,-1){1.67}}
\put(118.00,40.00){\vector(1,0){1.67}}
\put(118,42.00){\makebox(0,0)[rb]{$(((\vec{e}_i)^{\dagger})^{\dagger})^{\dagger}$}}
\put(140.5,30){\makebox(0,0)[lb]{$\vec{e}_i$}}
\put(122,11.67){\makebox(0,0)[lb]{$(\vec{e}_i)^{\dagger}$}}
\put(92.00,29.00){\makebox(0,0)[lt]{$(\vec{e}_i)^{\dagger})^{\dagger}$}}
\end{picture}
}\end{center}
\caption{The orientation $\vec{e}_i\prec\vec{e}^{\,\dagger i}$ and configuration
of the left-{} and right-{} duals with respect to the couplings~$\langle\ ,\,\rangle$.}\label{FigMill}
\end{figure}

\begin{example}
Identities~\eqref{EqAlmostComplex} show up in the directed variations
$\left.\overleftarrow{\delta}\!\!S\right|_s^{\delta\bolds}=\overrightarrow{\delta}\!\!\bolds(S)(s)$ and
$\left.\overrightarrow{\delta}\!\!S\right|_s^{\delta\bolds}=(S)\overleftarrow{\delta}\!\!\bolds(s)$ 
of an integral functional $S=\int\cL(\bx,[\bq],[\bq^{\dagger}])\cdot\dvol(\bx)$. Namely, we have that
\begin{subequations}
\label{EqLeftRightVariations}
\begin{multline}
\left.\overleftarrow{\delta}\!\!S\right|_s^{(\delta s,\delta s^{\dagger})}=\\=
\int_M\Id\bby\int_M\dvol(\bx)\Biggl\{
(\delta s^i)\left(\frac{\overleftarrow{\dd}}{\dd\bby}\right)^{\sigma}(\bby)
\left\langle\vec{e}_i(\bby),\vec{e}^{{}\,\dagger j}(\bx)\right\rangle
\left.\left(\frac{\overrightarrow{\dd}}{\dd q^j_{\sigma}}
\cL(\bx,[\bq],[\bq^{\dagger}])\right)\right|_{j^{\infty}_{\bx}(s)}+\\+
(\delta s^{\dagger}_i)\left(\frac{\overleftarrow{\dd}}{\dd\bby}\right)^{\sigma}(\bby)
\left\langle-\vec{e}^{{}\,\dagger i}(\bby),\vec{e}_j(\bx)\right\rangle
\left.\left(
\frac{\overrightarrow{\dd}}{\dd q^{\dagger}_{j,\sigma}}\cL(\bx,[\bq],[\bq^{\dagger}])
\right)\right|_{j^{\infty}_{\bx}(s)}
\Biggr\}
\end{multline}
and
\begin{multline}
\left.\overrightarrow{\delta}\!\!S\right|_s^{(\delta s,\delta s^{\dagger})}=\\=
\int_M\Id\bby\int_M\dvol(\bx)\Biggl\{
\left.\left(
\cL(\bx,[\bq],[\bq^{\dagger}])\frac{\overleftarrow{\dd}}{\dd q^j_{\sigma}}
\right)
\right|_{j^{\infty}_{\bx}(s)}
\left\langle\vec{e}^{{}\,\dagger j}(\bx),-\vec{e}_i(\bby)\right\rangle
\left(\frac{\overrightarrow{\dd}}{\dd\bby}\right)^{\sigma}(\delta s^i)(\bby)+{}\\+
\left.\left(
\cL(\bx,[\bq],[\bq^{\dagger}])\frac{\overleftarrow{\dd}}{\dd q^{\dagger}_{j,\sigma}}
\right)\right|_{j^{\infty}_{\bx}(s)}
\left\langle\vec{e}_j(\bx),\vec{e}^{{}\,\dagger i}(\bby)\right\rangle
\left(\frac{\overrightarrow{\dd}}{\dd\bby}\right)^{\sigma}(\delta s^{\dagger}_i)(\bby)\Biggr\}.
\end{multline}
\end{subequations}
The operators $\overrightarrow{\delta}\!\!\bolds$ and $\overleftarrow{\delta}\!\!\bolds$ act via ghost-parity graded Leibniz' 
rule on formal products of integral functionals (and on their inages under other infinitesimal variation operators as well),
so that the two operators are defined on the entire space $\ov{\gN}^n(\pi_{\BV},T\pi_{\BV})$, see section~\ref{SecSchouten} 
below.
\end{example}

\begin{rem}
A reversion $\overleftarrow{\delta}\!\!\bolds\rightleftarrows\overrightarrow{\delta}\!\!\bolds$
of the direction along which such an operator acts means that the initially given operator (for definition, let it be
$\overleftarrow{\delta}\!\!\bolds$ which acts to the left) is \emph{destroyed} and in its place the other, 
opposite-direction operator is created (here it would be $\overrightarrow{\delta}\!\!\bolds$). Note that the variation 
$\delta\bolds\in\Gamma(T\pi)$ itself stays unchanged; it is the two realizations of this object via 
$\overleftarrow{\delta}\!\!\bolds$ and then via $\overrightarrow{\delta}\!\!\bolds$ which differ.
(This concept of test shifts as primary geometric objects which contain information about the operators will be essential
in Definition~\ref{DefSchouten} of the variational Schouten bracket.)
\end{rem}

\begin{rem}\label{RemRescaleBoth}
The postulate of duality between $\vec{e}_i(\bx)$ and $\vec{e}^{{}\,\dagger i}(\bx)$ correlates their transformation laws
under dilations: a rescaling $\vec{e}_i\mapsto\const\cdot\vec{e}_i$ with $\const\in\Bbbk\setminus\{0\}$ determines the
inverse-proportional mapping $\vec{e}^{{}\,\dagger i}\mapsto\const^{-1}\cdot\vec{e}^{{}\,\dagger i}$ of respective dual
vectors. (Likewise, the coordinates in $V_{\bx}$ and $\Pi V_{\bx}^{\dagger}$ are then rescaled by 
$q^i\mapsto\const^{-1}\cdot q^i$ and $q_i^{\dagger}\mapsto\const\cdot q_i^{\dagger}$. respectively.)
\end{rem}

Consider a variation $\delta\bolds=(\delta s;\delta s^{\dagger})\in\Gamma(T\boldsymbol{\zeta}^{(0|1)})$ of a BV-section
$s\in\Gamma(\boldsymbol{\zeta}^{(0|1)})$ over a given field configuration $\phi\in\Gamma(\pi)$ in the BV-bundle
$\boldsymbol{\pi}^{(0|1)}_{\BV}$.
The infinitesimal variation vectors $\delta\bolds=(\delta s;\delta s^{\dagger})$ can be naturally split to ghost 
parity-homogeneous components:
\begin{equation}\label{EqSplitVariations}
\delta\bolds=(\delta s;0)+(0;\delta s^{\dagger}).
\end{equation}
Here we explicitly use the linear vector space structure in fibres of the tangent bundle $T\boldsymbol{\zeta}^{(0|1)}$.
Let us recall 
that the two homogeneous variations
$$\delta s(\bx)=\delta s^i(\bx)\cdot\vec{e}_i(\bx)\quad\text{and}\quad
\delta s^{\dagger}(\bx)=\delta s^{\dagger}_i(\bx)\cdot\vec{e}^{{}\,\dagger i}(\bx)$$
in the right-hand side of~\eqref{EqSplitVariations} are the canonically dual to each other.

Moreover, by Remark~\ref{RemRescaleBoth} it is then possible to have $\delta s$ and $\delta s^{\dagger}$ normalized, for
every $i$ running from 1 to the dimension $N$, by the equalities
\begin{equation}\label{EqNormalize}
\delta s^i(\bx)\cdot\delta s_i^{\dagger}(\bx)\equiv+1
\end{equation}
at every $\bx\in M^n$ where the smooth fields of dual bases $\vec{e}_i$ and $\vec{e}^{\,\dagger i}$ are defined for the section
$s$. From now on, let us deal only with such \textsl{normalized} variations. This implies that the coupling of these
geometric objects are ``invisible'' but still the order in which the co-multiples $\delta s$ and $\delta s^{\dagger}$
occur in~\eqref{EqChoiceSign} does determine the signs in various formulas (e.\,g., in the definition of Schouten
bracket, see p.~\pageref{DefSchouten} below).

\subsection{Definitions of the BV-\/Laplacian and Schouten bracket}\label{SecDefinitions}
We now combine the geometry of graded-permutable iterated variations, which we explored in section~\ref{SecEL} and 
which absorbs a new copy of the underlying base manifold $M^n$ for each new infinitesimal test shift
$\delta\bolds(\bx)\in T_{s(\bx)}W_{\bx}$ of the functionals' arguments at $\bx\in M^n$, with the algebra of two
couplings~\eqref{EqTwoCouplings} between ghost parity-homogeneous halves of infinitesimal variations in the BV-setup
$T_{s(\bx)}W_{\bx}\cong V_{\bx}\mathbin{\widehat{\oplus}}\Pi V_{\bx}^{\dagger}$; the absolute locality of such coupling events is a
fundamental principle. 

To avoid an agglomeration of formulas and to match the notation with that in section~\ref{SecEL}, we omit an explicit
reference to field configuration $\{\phi(\bx),\,\bx\in M^n\}$, indicating only the base points $\bx\in M^n$. We also
denote by $\pi_{\BV}$ the composite-structure superbundle over $M^n$ (see Fig.~\ref{FigBVSetup}) so that the notation for 
the vector bundle of BV-sections' infinitesimal variations is $T\pi_{\BV}$. However, let us remember that only the linear
BV-fibre variables $(\bq,\bq^{\dagger})$ but not the physical fields $\phi$ are subjected to variations at points
$s(\bx)\in\bigl(\boldsymbol{\zeta}^{(0|1)}\bigr)^{-1}(\bx,s(\bx))$ over $(\bx,\phi(\bx))\in\pi^{-1}(\bx)$. 
A brute force labelling of Euler\/--\/Lagrange equations by the respective unknowns is an act of will by the one who writes
formulas but it is not a prescription from the model's geometry.

This section contains rigorous, self-regularizing definitions of the BV-Laplacian and Schouten
bracket for integral functionals from $\ov{H}^n(\pi_{\BV})\subsetneq\ov{\gM}^n(\pi_{\BV})\subsetneq
\ov{\gN}^n(\pi_{\BV},T\pi_{\BV})$.
We shall extend the definition to the space $\ov{\gN}^n(\pi_{\BV},T\pi_{\BV})$ of products of integral functionals, possibly with
earlier-absorbed variations, in the subsequent sections of this paper. We then establish the main properties of these
structures and prove relations between them.
We note that the definitions which we give here are operational: each of them is a surgery for the couplings and their
reconfiguration algorithm. (The locality postulate ensures the restrictions onto diagonals in the product 
$M\times\ldots\times M$ so that those recombinations make sense at every point of $M$.)
\enlargethispage{0.7\baselineskip}

\subsubsection{The BV-\/Laplacian $\Delta$}
Let us first introduce some shorthand notation. Let $F=\int f(\bx,[\bq],[\bq^{\dagger}])\cdot\dvol(\bx)$
be an integral functional and $\delta\bolds=(\delta s;0)+(0;\delta s^{\dagger})$ be a variation's splitting in two ghost
parity-homogeneous variations. From section~\ref{SecEL} we know that each of the two is referred to its own copy of the
base: let it be $\delta s(\bby_1)$ and $\delta s^{\dagger}(\bby_2)$ so that formula~\eqref{EqStartSum} defines the response of
$F$ to an infinitesimal shift of its argument along each of the two directions. 

\begin{define}\label{DefBV}
Let $\delta\bolds\in\Gamma(T\pi_{\BV})$ be a test shift normalized by~\eqref{EqNormalize} and then split to the sum
$(\delta s;0)+(0;\delta s^{\dagger})$ of ghost parity-homogeneous, $\langle\,,\,\rangle$-dual halves. 
The BV-\emph{Laplacian} is the linear operator $\Delta\colon\ov{H}^n(\pi_{\BV})\to\ov{\gN}^n(\pi_{\BV},T\pi_{\BV})$;
for a ghost parity\/-\/homogeneous integral functional $F\in\ov{H}^n(\pi)$ and its argument $s$, the operator $\Delta$
is an algorithm for reconfiguration of couplings in the second variation
\begin{multline*}
\left.\frac{\Id}{\Id\veps}\right|_{\veps=0}\left.\frac{\Id}{\Id\veps^{\dagger}}\right|_{\veps^{\dagger}=0}
F(s+\veps\cdot\overleftarrow{\delta}\!\!s+\veps^{\dagger}\cdot\overleftarrow{\delta}\!\!s^{\dagger})=
\sum_{\substack{i_1,i_2\\j_1,j_2}}\sum_{\substack{|\sigma_1|\ge0\\|\sigma_2|\ge0}}
\int_M\Id\bby_1\int_M\Id\bby_2\int_M\dvol(\bx)\\
\left\{
\begin{matrix}\phantom{\hookrightarrow}
(\delta s^{i_1})\left(\frac{\overleftarrow{\dd}}{\dd\bby_1}\right)^{\sigma_1}(\bby_1)\,
\langle\phantom{+}\vec{e}_{i_1}(\bby_1),\vec{e}^{{}\,\dagger j_1}(\bx)\rangle\hookleftarrow\\
\hookrightarrow(\delta s^{\dagger}_{i_2})\left(\frac{\overleftarrow{\dd}}{\dd\bby_2}\right)^{\sigma_2}(\bby_2)
\langle-\vec{e}^{{}\,\dagger i_2}(\bby_2),\vec{e}_{j_2}(\bx)\rangle\phantom{\hookleftarrow}
\end{matrix}
\right\}\cdot
\left.\frac{\overrightarrow{\dd^2}f(\bx,[\bq],[\bq^\dagger])}{\dd q^{j_1}_{\sigma_1}
\dd q^{\dagger}_{j_2,\sigma_2}}\right|_{j^\infty_{\bx}(s)}\,.
\end{multline*}
This second variation's integrand contains the couplings $\langle\,,\,\rangle$:
$$T_{s(\bby_1)}V_{\bby_1}\times T^*_{s(\bx)}(\Pi)V_{\bx}\to\Bbbk\quad\text{ and }\quad
T_{s(\bby_2)}\Pi V_{\bby_2}^{\dagger}\times T^*_{s(\bx)}(\Pi)V_{\bx}^{\dagger}\to\Bbbk
$$
which are defined only if the attachment points coincide for these (co)vectors;
an optional presence of the parity reversion operator indicates a possibility of having ghost parity-odd functional $F$.

At the moment when the object~$\Delta F$ under construction --\,or a larger object of 
which~$\Delta F$ is an element, see~\eqref{EqZimes}\,-- is evaluated at 
a section~$s\in\Gamma(\pi_\BV)$, the integrations by parts carry the derivatives away 
from the variations' components: 
$\overleftarrow{\dd}/\dd\bby_i\mapsto\overrightarrow{\dd}/\dd\bby_i$ as explained in 
section~\ref{SecByParts}.
The third step in definition of~$\Delta$ acting on~$F$ is a surgery algorithm
for an on\/-\/the\/-\/diagonal reattachment of the couplings, see Figure~\ref{Fig1234}.
\begin{figure}[htb]
\[
\begin{array}{rrll}
\langle\,{}^1\mars\,| &  & |\,{}^3\venus\,\rangle & { }\\
{ } & \langle\,{}^2\venus\,| & & |\,{}^4\mars\,\rangle
\end{array}
\qquad\longmapsto\qquad
\begin{array}{rlrl}
\langle\,{}^1\mars\,| &  & \langle\,{}^3\venus\,| & { }\\
{ } & |\,{}^2\venus\,\rangle & & |\,{}^4\mars\,\rangle
\end{array}
\]
\caption{The on\/-\/the\/-\/diagonal coupling of variations versus taking the trace of
bi\/-\/linear form.}\label{Fig1234}
\end{figure}
In other words, \emph{after} the integration by parts 
the surgery 
yields the following:
\begin{multline}\label{EqDefBV}
(\Delta F)\Bigr|_s^{\delta\bolds} = 
\sum_{\substack{i_1,i_2\\j_1,j_2}}\sum_{\substack{|\sigma_1|\ge0\\|\sigma_2|\ge0}}
\int_M\Id\bby_1\int_M\Id\bby_2\int_M\dvol(\bx)\cdot\\
\cdot\left\{
\delta s^{i_1}(\bby_1)\left(-\frac{\overrightarrow{\dd}}{\dd\bby_1}\right)^{\sigma_1} 
\underbrace{\langle\vec{e}_{i_1}(\bby_1),-\vec{e}^{{}\,\dagger i_2}(\bby_2)\rangle}_{-1}\cdot
\delta s^{\dagger}_{i_2}(\bby_2)\left(-\frac{\overrightarrow{\dd}}{\dd\bby_2}\right)^{\sigma_2}\right\}
\cdot\\
\cdot\left\{
\underbrace{\langle \vec{e}^{{}\,\dagger j_1}(\bx), \vec{e}_{j_2}(\bx)\rangle}_{-1}\cdot
\left.\frac{\vec{\dd^2}f(\bx,[\bq],[\bq^\dagger])}{\dd q^{j_1}_{\sigma_1}
\dd q^{\dagger}_{j_2,\sigma_2}}\right|_{j^\infty_{\bx}(s)}
\right\}\,.
\end{multline}
Note that the left-to-right 
order in $\left\langle\vec{e}_{i_1}(\bby_1),\vec{e}^{{}\,\dagger j_1}(\bx)\right\rangle\cdot \left\langle-\vec{e}^{{}\,\dagger i_2}(\bby_2),\vec{e}_{j_2}(\bx)\right\rangle$
is preserved by the respective couplings' arguments in
$\langle\vec{e}_{i_1}(\bby_1),-\vec{e}^{{}\,\dagger i_2}(\bby_2)\rangle\cdot
\langle \vec{e}^{{}\,\dagger j_1}(\bx), \vec{e}_{j_2}(\bx)\rangle$, cf.\ Fig.~\ref{Fig1234}.
\end{define}

\begin{rem}
Until the moment when the integrations by parts are performed in $\Delta F$, the derivatives $\dd/\dd\bby_1$ and
$\dd/\dd\bby_2$ refer to different copies of the manifold $M^n$ in the base $M^n\times M^n\times M^n$ of the product bundle
$\pi_{\BV}\times T\pi_{\BV}\times T\pi_{\BV}$. This implies that the two variations of $F$ in the definition of $\Delta$
are graded\/-\/permutable between each other and with 
all other variations falling on $f(\bx,[\bq],[\bq^{\dagger}])$ whenever
$\Delta F$ is a constituent element of a larger object (e.\,g., see~(\ref{EqZimes}--\ref{EqDeltaSquareIntro}) 
on p.~\pageref{EqZimes}).
\end{rem}

\begin{rem}\label{RemKeepNotation}
To keep track of multiple copies of the
base $M^n$ for functionals and variations (here $\bx\in M^n,\ \bby_1\in M^n,\ \bby_2\in M^n$) in the course of integration by
parts (see section~\ref{SecByParts}), we indicate the respective variations' bases by explicitly writing
$\bq(\bby_1)$ and $\bq^{\dagger}(\bby_2)$ in the denominators \emph{and} we denote by
$\dd/\dd\bby_1$ and $\dd/\dd\bby_2$ the derivatives which now fall on the functional's density 
$f(\bx,[\bq],[\bq^{\dagger}])$~--- for instance, we do so in Example~\ref{Countercounterexample} on
p.~\pageref{Countercounterexample} below. Namely, we put
\begin{subequations}\label{EqHomogVariations}
\begin{align}
\left.\frac{\overleftarrow{\delta}\!\!{f}(\bx,[\bq],[\bq^\dagger])}{\delta q^\alpha(\bby_1)}\right|_{j^\infty_{\bx}(s)} &=
\sum\limits_{|\sigma_1|\geqslant0}\Bigl(-\frac{\vec{\dd}}{\dd\bby_1}\Bigr)^{\sigma_1}
\left.\left(\frac{\vec{\dd}f(x,[\bq],[\bq^\dagger])}{\dd q^\alpha_{\sigma_1}}
\right)\right|_{j^\infty_{\bx}(s)}=\\&=\notag
\sum_{|\sigma_1|\ge0}\left.\left(\left(-\frac{\vec{\Id}}{\Id\bby_1}\right)^{\sigma_1}
\frac{\vec{\dd}f(\bx,[\bq],[\bq^{\dagger}]}{\dd q^{\alpha}_{\sigma_1}}\right)\right|_{j^{\infty}_{\bx}(s)}
\\
\intertext{and} 
\left.\frac{\overleftarrow{\delta}\!\!{f}(\bx,[\bq],[\bq^\dagger])}{\delta q^\dagger_\beta(\bby_2)}\right|_{j^\infty_{\bx}(s)} &=
\sum\limits_{|\sigma_2|\geqslant0}\Bigl(-\frac{\vec{\dd}}{\dd\bby_2}\Bigr)^{\sigma_2}
\left.\left(\frac{\vec{\dd}f(\bx,[\bq],[\bq^\dagger])}{\dd q^\dagger_{\beta,\sigma_2}}
\right)\right|_{j^\infty_{\bx}(s)}=\\&=\notag
\sum_{|\sigma_2|\ge0}\left.\left(\left(-\frac{\vec{\Id}}{\Id\bby_2}\right)^{\sigma_2}
\frac{\vec{\dd}f(\bx,[\bq],[\bq^{\dagger}]}{\dd q^{\dagger}_{\beta,\sigma_2}}\right)\right|_{j^{\infty}_{\bx}(s)}
\end{align}
\end{subequations}
for the ghost parity-homogeneous components of variational derivative. At every point $(\bx,\phi(\bx),s(\bx))$ of the total
space for the bundle $\pi_{\BV}$, and for a given functional $F$ which is assumed ghost parity-homogeneous, we have that
\[
\left.\frac{\overleftarrow{\delta}\!\!{f}(\bx,[\bq],[\bq^\dagger])}{\delta q^\alpha(\bby_1)}\right|_{j^\infty_{\bx}(s)}
 \in T^*_{s(\bx)}(\Pi)V_{\bx}
   \qquad\text{and}\qquad 
\left.\frac{\overleftarrow{\delta}\!\!{f}(\bx,[\bq],[\bq^\dagger])}{\delta q^\dagger_\beta(\bby_2)}\right|_{j^\infty_{\bx}(s)}
 \in T^*_{s(\bx)}(\Pi)V^\dagger_{\bx}.
\]
Let us remember 
that an attribution of denominators to $\bby_1$ or $\bby_2$ is a matter of notation in~\eqref{EqHomogVariations}; 
whenever happening, everything happens at $\bx\in M^n$.
\end{rem}

\begin{lemma}\label{LAnyChoice}
The BV-Laplacian $\Delta$ is independent of a choice of the variation $\delta\bolds$ normalized by ~\eqref{EqNormalize}.
\end{lemma}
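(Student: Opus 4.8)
The plan is to show that the test shift $\delta\bolds$ enters the right-hand side of~\eqref{EqDefBV} only through the product $\delta s^{i}(\bx)\cdot\delta s^{\dagger}_{i}(\bx)$ of its two ghost parity-homogeneous halves, and that this product is pinned to $+1$ by the normalisation~\eqref{EqNormalize}; everything else in~\eqref{EqDefBV}---the derivatives of the density $f$, the couplings of the dual frames, and the volume form $\dvol(\bx)$---carries no memory of which normalised variation was used. Consequently the value $(\Delta F)|_s$ is the same for every admissible $\delta\bolds$, and since $F$ and $s$ are arbitrary the operator $\Delta$ itself is independent of that choice.

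First I would isolate the $\delta\bolds$-dependent factor. The variation appears in~\eqref{EqDefBV} only as the bare components $\delta s^{i_1}(\bby_1)$ and $\delta s^{\dagger}_{i_2}(\bby_2)$: by Remark~\ref{RemKeepNotation} the operators $(-\overrightarrow{\dd}/\dd\bby_1)^{\sigma_1}$ and $(-\overrightarrow{\dd}/\dd\bby_2)^{\sigma_2}$ are bookkeeping for the derivatives that, upon evaluation at $s$, fall on the density $f(\bx,[\bq],[\bq^{\dagger}])$ as the total derivatives $(-\vec{\Id}/\Id\bx)^{\sigma_1\cup\sigma_2}$ of section~\ref{SecByParts}, and \emph{not} on the components of $\delta\bolds$. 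Hence $\delta s^{i_1}$ and $\delta s^{\dagger}_{i_2}$ remain undifferentiated. The transport of derivatives onto $f$ is precisely the integration-by-parts mechanism already established, which relies only on the locality of the diagonal-supported couplings (Remark~\ref{RemDconst0}) and is entirely insensitive to the shape of $\delta\bolds$.

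Next I would evaluate the surviving factor on the diagonal. The coupling $\langle\vec{e}_{i_1}(\bby_1),-\vec{e}^{{}\,\dagger i_2}(\bby_2)\rangle$ is defined only when its (co)vectors are attached at one and the same point, so at the moment of evaluation it forces $\bby_1=\bby_2$ together with $i_1=i_2=i$; the two variation components are thereby sampled at a single base point with a single index. What then multiplies the $f$-part of~\eqref{EqDefBV} is exactly $\delta s^{i}\cdot\delta s^{\dagger}_{i}$, and~\eqref{EqNormalize} gives $\delta s^{i}(\bx)\cdot\delta s^{\dagger}_{i}(\bx)\equiv+1$ at every $\bx\in M^n$ for each $i$. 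Thus the entire $\delta\bolds$-dependence collapses to the constant $+1$, whether or not the index $i$ is summed, and $(\Delta F)|_s$ is expressed through $f$, the dual frames, and $\dvol(\bx)$ alone.

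The only delicate point is the claim of the second paragraph: that after integration by parts no derivative is left acting on $\delta s^{i}$ or $\delta s^{\dagger}_{i}$, so that the bare---and hence normalised---product appears. This is guaranteed by the convention of Remark~\ref{RemKeepNotation} and by the surgery of Definition~\ref{DefBV}, which route every $\bby_{\ell}$-derivative onto the density $f$ at $\bx$. Granting this, the lemma is immediate, and it is moreover robust under the rescaling freedom of Remark~\ref{RemRescaleBoth}: a dilation $\vec{e}_i\mapsto\const\cdot\vec{e}_i$, $\vec{e}^{{}\,\dagger i}\mapsto\const^{-1}\cdot\vec{e}^{{}\,\dagger i}$ rescales $\delta s^{i}$ and $\delta s^{\dagger}_{i}$ inverse-proportionally and leaves their normalised product---and therefore $\Delta F$---unchanged.
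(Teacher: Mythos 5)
Your proposal is correct and follows essentially the same route as the paper, whose entire proof is the one-line observation that after the integrations by parts the normalised products~\eqref{EqNormalize} of the dual components are the same at all points of their common domain; you have simply unpacked that sentence by tracing where $\delta\bolds$ enters~\eqref{EqDefBV}, noting that the $\bby_\ell$-derivatives are routed onto the density~$f$, and invoking the diagonal restriction so that only the pinned product $\delta s^{i}\cdot\delta s^{\dagger}_{i}\equiv+1$ survives. The added remark on invariance under the rescaling of Remark~\ref{RemRescaleBoth} is a harmless and consistent elaboration.
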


Indeed, whenever the integrations by parts are performed, products~\eqref{EqNormalize} of the dual components are always
the same at all points 
of the intersection of their domains of definition.%
\footnote{The assertion of Lemma~\ref{LAnyChoice} extends to the variational Schouten bracket, which is a derivative
structure with respect to the BV-Laplacian (see Definition~\ref{DefSchouten} on p.~\pageref{DefSchouten}). Moreover, the independence
of a specific choice of variations implies that their coefficients $(\delta s_1,\delta s_1^{\dagger})$ and 
$(\delta s_2,\delta s_2^{\dagger})$, which are built into $\Delta$ and $\lshad\,,\,\rshad$, can be swapped, not altering 
an object that contains these test shifts $\delta\bolds_1$ and $\delta\bolds_2$ (see the proof of Lemma~\ref{LemmaBaseLapSchouten}
on p.~\pageref{LemmaBaseLapSchouten}).
}
We illustrate the definition of BV-\/Laplacian~$\Delta$ by using Fig.~\ref{FigPants};
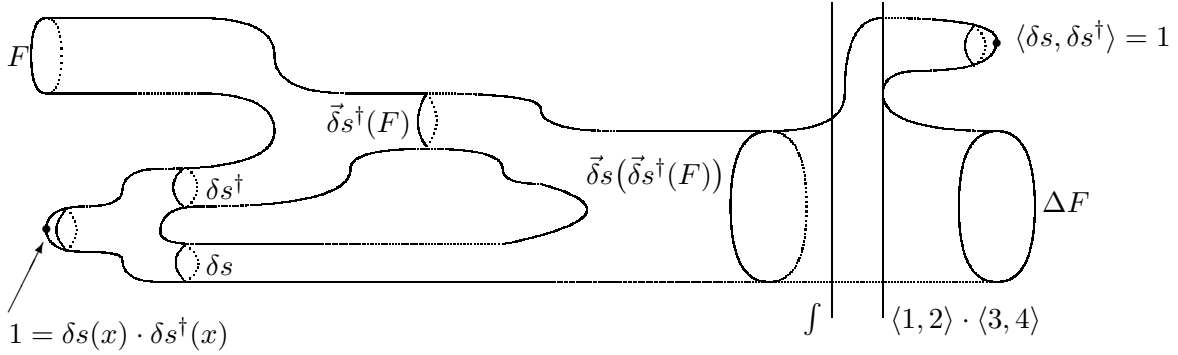
\begin{figure}[htb]
\begin{center}{\unitlength=1mm
\linethickness{0.4pt}
\begin{picture}(136.00,43.00)(10,0)
\bezier{40}(5.00,12.00)(5.00,15.00)(10.00,15.00)
\bezier{40}(5.00,12.00)(5.00,9.00)(10.00,9.00)
\bezier{40}(10.00,9.00)(15.00,9.00)(15.00,8.00)
\bezier{28}(10.00,15.00)(15.00,15.00)(15.00,17.33)
\bezier{32}(15.00,17.00)(15.00,20.00)(20.00,20.00)
\bezier{80}(20.00,20.00)(35.00,20.00)(35.00,25.00)
\bezier{60}(35.00,25.00)(35.00,30.00)(25.00,30.00)
\bezier{80}(25.00,30.00)(7.33,30.00)(5.00,30.00)
\bezier{10}(5.00,30.00)(7.00,30.00)(7.00,35.00)
\bezier{10}(7.00,35.00)(7.00,40.00)(5.00,40.00)
\bezier{38}(5.00,40.00)(3.00,40.00)(3.00,35.00)
\bezier{38}(3.00,35.00)(3.00,30.00)(5.00,30.00)
\bezier{80}(5.00,40.00)(15.67,40.00)(25.00,40.00)
\bezier{60}(25.00,40.00)(35.00,40.00)(35.00,35.00)
\bezier{60}(35.00,35.00)(35.00,30.00)(45.00,30.00)
\bezier{40}(45.00,30.00)(49.33,30.00)(55.00,30.00)
\bezier{32}(55.00,30.00)(52.67,26.00)(55.00,23.00)
\bezier{12}(55.00,30.00)(57.67,26.33)(55.00,23.00) 
\bezier{68}(55.00,30.00)(70.00,30.00)(70.00,28.00)
\bezier{52}(70.00,28.00)(70.00,25.00)(80.00,25.00)
\bezier{80}(80.00,25.00)(87.67,25.00)(100.00,25.00)
\bezier{100}(100.00,25.00)(95.00,25.00)(95.00,15.00)
\bezier{30}(100.00,25.00)(105.00,25.00)(105.00,15.00)
\bezier{100}(100.00,5.00)(95.00,5.00)(95.00,15.00)
\bezier{30}(100.00,5.00)(105.00,5.00)(105.00,15.00)
\bezier{60}(100.00,25.00)(110.00,25.00)(110.00,30.00)
\bezier{60}(110.00,30.00)(110.00,40.00)(115.00,40.00)
\bezier{72}(115.00,40.00)(130.00,40.00)(130.00,37.00)
\bezier{56}(130.00,37.00)(130.00,33.00)(120.00,33.00)
\bezier{32}(120.00,33.00)(115.00,33.00)(115.00,30.00)
\bezier{80}(115.00,30.00)(115.00,25.00)(130.00,25.00)
\bezier{100}(125.00,15.00)(125.00,25.00)(130.00,25.00)
\bezier{100}(130.00,25.00)(135.00,25.00)(135.00,15.00)
\bezier{100}(135.00,15.00)(135.00,5.00)(130.00,5.00)
\bezier{100}(130.00,5.00)(125.00,5.00)(125.00,15.00)
\bezier{440}(130.00,5.00)(32.00,5.00)(20.00,5.00)
\bezier{32}(20.00,5.00)(15.00,5.00)(15.00,8.00)
\put(5.00,12){\circle*{1}}
\bezier{32}(20.00,11.67)(20.00,15.00)(25.00,15.00)
\bezier{100}(25.00,15.00)(45.00,15.00)(45.00,20.00)
\bezier{52}(45.00,20.00)(45.00,22.67)(55.00,22.67)
\bezier{52}(55.00,22.67)(65.00,22.67)(65.00,20.00)
\bezier{28}(65.00,20.00)(65.00,18.00)(70.00,18.00)
\bezier{140}(70.00,18.00)(84.33,14.00)(65.00,10.00)
\bezier{160}(65.00,10.00)(27.33,10.00)(25.00,10.00)
\bezier{28}(25.00,10.00)(20.00,10.00)(20.00,11.67)
\bezier{28}(23.00,15.00)(20.33,17.67)(23.00,20.00)
\bezier{10}(23.67,20.00)(26.00,17.67)(23.67,15.00) 
\bezier{28}(23.33,10.00)(20.67,7.67)(23.33,5.00)
\bezier{10}(23.33,10.00)(26.67,7.33)(23.33,5.00) 
\bezier{28}(7.67,14.67)(5.00,11.92)(7.67,9.17)
\bezier{10}(7.67,14.67)(10.33,11.92)(7.67,9.17)
\bezier{28}(127,33.67)(124.67,36.33)(127,39)
\bezier{10}(127,33.67)(130,36.33)(127,39)
\put(108.33,0.33){\line(0,1){41.67}}
\put(115.00,42.00){\line(0,-1){41.67}}
\put(130.00,36.67){\circle*{1}}
\put(-0.2,33.67){\makebox(0,0)[lb]{$F$}}
\put(53,24.00){\makebox(0,0)[rb]{$\vec{\delta}s^{\dagger}(F)$}}
\put(94.33,16.67){\makebox(0,0)[rb]{$\vec{\delta}s\bigl(\vec{\delta}s^{\dagger}(F)\bigr)$}}
\put(104.5,-2){\makebox(0,0)[lb]{$\int$}}
\put(116.00,-2){\makebox(0,0)[lb]{$\langle1,2\rangle\cdot\langle3,4\rangle$}}
\put(136.00,14.33){\makebox(0,0)[lb]{$\Delta F$}}
\put(132.33,35.33){\makebox(0,0)[lb]{$\langle\delta s,\delta s^{\dagger}\rangle=1$}}
\put(26,16){\makebox(0,0)[lb]{$\delta s^{\dagger}$}}
\put(26,6){\makebox(0,0)[lb]{$\delta s$}}
\put(0.00,-4.00){\makebox(0,0)[lb]{$1=\delta s(x)\cdot\delta s^{\dagger}(x)$}}
\put(0,1){\vector(1,2){4.6}}
\end{picture}
}\end{center}
\caption{A variational update of the cyclic wor(l)d from~\cite{KontsevichCyclic}:
the (anti)\/words $\delta s$ and~$\delta s^\dagger$ are pasted into a necklace~$F$
according to the graded Leibniz rule. Then they annihilate in such a way that
the respective loose ends of the string join, the cyclic order of gems preserved;
this yields~$\Delta F$.}\label{FigPants}
\end{figure}
let us notice that it properly renders the assertion of Lemma~\ref{LAnyChoice} in a wider,
noncommutative setup of~\cite{KontsevichCyclic} and~\cite{SQS11,Lorentz12} 
(see Remark~\ref{RemStartNonGraded} on p.~\pageref{RemStartNonGraded}).

\begin{cor}\label{CorTowardsConventional}
In particular, we obtain the equality for immediate numeric value of $\Delta F$ at $s$.
Namely, we have that 
\begin{multline*}
(\Delta F)(s)=\sum_{i_1,i_2} \sum_{\substack{|\sigma_1|\geq0\\ |\sigma_2|\geq0}} \int_M\dvol(\bx)\,\\
\left.\left\{\delta s^{i_1}(\bby_1)\cdot\boldsymbol{\delta}^{i_2}_{i_1}
\cdot\delta s^{\dagger}_{i_2}(\bby_2)\cdot
\left. \left(\left(-\frac{\vec{\Id}}{\Id\bx}\right)^{\sigma_1\cup\sigma_2}
\frac{\vec{\dd}^2 f(\bx,[\bq],[\bq^{\dagger}])}{\dd q^{i_1}_{\sigma_1}  
\dd q^{\dagger}_{i_2,\sigma_2}  
}\right)
\right|_{j^{\infty}_{\bx}(s)}\right\}\right|_{\substack{\bby_1\,=\,\bx\\\bby_2\,=\,\bx}}\in\Bbbk.
\end{multline*}
By taking one sum containing Kronecker's $\boldsymbol{\delta}$-\/symbol, one arrives at a conventional formula with a
summation over the diagonal:
\begin{multline}\label{EqEvalBVAtSection}
(\Delta F)(s)=\sum_{i=1}^N\sum_{\substack{|\sigma_1|\ge0\\|\sigma_2|\ge0}}
\int_M\dvol(\bx)\,
\left.
\left(
\left(-\frac{\vec{\Id}}{\Id\bx}\right)^{\sigma_1\cup\sigma_2}
\frac{\vec{\dd}^2 f(\bx,[\bq],[\bq^\dagger])}{\dd q^i_{\sigma_1}\dd q^{\dagger}_{i,\sigma_2}}
\right)
\right|_{j^{\infty}_{\bx(s)}}
\mathrel{\stackrel{\text{def}}{=}}{}
\\
{}\mathrel{\stackrel{\text{def}}{=}}\sum_{i=1}^N\int_M\dvol(\bx)\,
\frac{\overleftarrow{\delta^2}f(\bx,[\bq],[\bq^\dagger])}{\delta q^i\delta q^{\dagger}_i}\,.
\end{multline}
We refer to footnote~\ref{FootMKEW} on p.~\pageref{FootMKEW} in this context.
\end{cor}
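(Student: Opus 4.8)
The plan is to read off the numeric value $(\Delta F)(s)\in\Bbbk$ directly from the defining formula~\eqref{EqDefBV}, substituting the two couplings by their values and contracting the resulting Kronecker symbols; no new geometry is needed beyond Section~\ref{SecEL} and the sign table in Section~\ref{SecSigns}. First I would record that, by the locality postulate together with~\eqref{EqChoiceSign}, each of the two couplings manufactured by the surgery is a signed Kronecker symbol supported on the diagonal: on $\{\bby_1=\bby_2\}$ one has $\langle\vec{e}_{i_1}(\bby_1),-\vec{e}^{{}\,\dagger i_2}(\bby_2)\rangle=-\boldsymbol{\delta}^{i_2}_{i_1}$, while $\langle\vec{e}^{{}\,\dagger j_1}(\bx),\vec{e}_{j_2}(\bx)\rangle=-\boldsymbol{\delta}^{j_1}_{j_2}$. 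The two minus signs multiply to $+1$, so the numeric value carries no net sign from the couplings.

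Next I would evaluate~\eqref{EqDefBV} at a fixed section $s$, which means performing the two singular integrations $\int_M\Id\bby_1\int_M\Id\bby_2$ that turn $\Delta F$ into a map $\Gamma(\pi_{\BV})\to\Bbbk$. The diagonal support of the first coupling collapses the two $\bby$-copies onto one, and evaluation of the density at $j^{\infty}_{\bx}(s)$ pins the surviving copy to $\bx$, i.e.\ it enforces the restriction $\bby_1=\bby_2=\bx$ written in the intermediate display. At the same time I would invoke Section~\ref{SecByParts}: the operators $(-\vec{\dd}/\dd\bby_1)^{\sigma_1}$ and $(-\vec{\dd}/\dd\bby_2)^{\sigma_2}$ are carried off the variation components onto the $\bx$-copy and, on the diagonal, merge into the single total-derivative operator $(-\vec{\Id}/\Id\bx)^{\sigma_1\cup\sigma_2}$ acting on $\vec{\dd}^2 f/\dd q_{\sigma_1}\,\dd q^{\dagger}_{\sigma_2}$, exactly as in the passage from the naive to the geometric operator in~\eqref{Eq2Ways}.

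Now the quadruple sum over $i_1,i_2,j_1,j_2$ is cut down by the two Kronecker symbols. Because each variation component is seated against the derivative slot it probes (in the second variation of Definition~\ref{DefBV} the factor $\delta s^{i_1}$ stands against $\dd/\dd q^{j_1}$ and $\delta s^{\dagger}_{i_2}$ against $\dd/\dd q^{\dagger}_{j_2}$), the symbol $\boldsymbol{\delta}^{j_1}_{j_2}$ lets me relabel the derivative indices by the probe indices, leaving the single surviving symbol $\boldsymbol{\delta}^{i_2}_{i_1}$ and reproducing the intermediate display in the statement. Summing against this last Kronecker symbol sets $i_2=i_1=i$, so the two variation factors combine into $\delta s^{i}(\bx)\cdot\delta s^{\dagger}_{i}(\bx)$; by the normalization~\eqref{EqNormalize} this product equals $+1$ for each $i$ separately, and Lemma~\ref{LAnyChoice} guarantees the outcome is independent of the chosen normalized $\delta\bolds$. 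What remains is $\sum_{i=1}^{N}\sum_{\sigma_1,\sigma_2}\int_M\dvol(\bx)\,\bigl((-\vec{\Id}/\Id\bx)^{\sigma_1\cup\sigma_2}\,\vec{\dd}^2 f/\dd q^{i}_{\sigma_1}\,\dd q^{\dagger}_{i,\sigma_2}\bigr)\big|_{j^{\infty}_{\bx}(s)}$, which is the first line of~\eqref{EqEvalBVAtSection}; the definition of the ghost-parity-homogeneous variational derivatives~\eqref{EqHomogVariations} then rewrites it as $\sum_{i}\int_M\dvol(\bx)\,\overleftarrow{\delta^2}f/\delta q^{i}\delta q^{\dagger}_{i}$.

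The one genuinely delicate step is the index-and-sign bookkeeping of the previous paragraph. I must make sure the normalization~\eqref{EqNormalize} is applied strand by strand, to the matched pair $\delta s^{i}\delta s^{\dagger}_{i}$ that sits on the \emph{same} index as the conjugate derivative pair $(q^{i},q^{\dagger}_{i})$; otherwise an independent summation of $\delta s^{i}\delta s^{\dagger}_{i}\equiv1$ over $i$ would manufacture a spurious factor $N$ and destroy the identity. Equivalently, one must verify that the two couplings collapse the four indices to a \emph{single} diagonal index rather than to two independent ones, and that the two signs from~\eqref{EqChoiceSign} cancel. Once this is checked, the remaining operations---the $\bby$-integrations, the transport of the total derivatives, and their merger into $\sigma_1\cup\sigma_2$---are the routine manipulations already established in Section~\ref{SecEL} and Section~\ref{SecByParts}.
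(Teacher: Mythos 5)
Your proposal is correct and follows exactly the route the paper intends: the corollary is read off from~\eqref{EqDefBV} by evaluating the two reconfigured couplings via~\eqref{EqChoiceSign} (net sign $(-1)\cdot(-1)=+1$), restricting to the diagonal, merging the transported derivatives into $(-\vec{\Id}/\Id\bx)^{\sigma_1\cup\sigma_2}$, and contracting the surviving Kronecker symbol against the per\/-\/index normalization~\eqref{EqNormalize}. Your explicit check that the four indices collapse to a \emph{single} diagonal index (so that no spurious factor of~$N$ appears) is precisely the point the paper leaves implicit, and you have verified it correctly.
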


\begin{rem}
The conventional formula
$$\left.\frac{\overleftarrow{\delta^2}f(\bx,[\bq],[\bq^\dagger])}{\delta\bq(\bby_1)\delta\bq^{\dagger}(\bby_2)}
\right|_{\substack{\bby_1\,=\,\bx\\\bby_2\,=\,\bx}}
$$
itself is not the definition of a density of the BV-Laplacian $\Delta F$ for an integral functional
$F=\int f(\bx,[\bq],[\bq^{\dagger}])\cdot\dvol(\bx)$.
Not containing any built-in sources of divergence, the geometric definition and its implication~\eqref{EqEvalBVAtSection}
yield identical results only when one calculates the numeric value $(\Delta F)(s)\in\Bbbk$~--- but not earlier:
structurally different objects~\eqref{EqDefBV} and~\eqref{EqEvalBVAtSection} belong to non\/-\/isomorphic spaces
(so that the former contains more information then the latter), and their analytic behaviour is also different,
see Example~\ref{Countercounterexample} on p.~\pageref{Countercounterexample}.
\end{rem}

The following two examples are quoted from~\cite{Laplace13}; they show that the structure~$\Delta$ defined above coincides --\,but only in the simplest situation\/-- with the one which is in\-tu\-i\-ti\-ve\-ly known from the literature. We refer to the main Example~\ref{Countercounterexample} on p.~\pageref{Countercounterexample} which illustrates the multiple\/-\/base geometry in a logically more complex situation of~\eqref{EqZimes}.

\begin{example}\label{ex:YM}
Take a compact, semisimple Lie group~$G$ with Lie algebra~$\mathfrak{g}$ and consider the corresponding Yang\/--\/Mills theory.
Write $A^a_i$ for the (coordinate expression of) the gauge potential $A$~-- a lower index $i$ because $A$ is a one\/-\/form on
the base manifold (i.\,e., a covector), and an upper index $a$ because $A$ is a vector in the Lie algebra $\mathfrak{g}$ of
the Lie group $G$. Defining the field strength $\mathcal{F}$ by 
$\mathcal{F}^a_{ij} = \partial_i A^a_j - \partial_j A^a_i + f^a_{bc}A^b_i A^c_j$ where $f^a_{bc}$ are the structure
constants of the Lie algebra~$\mathfrak{g}$, the Yang\/--\/Mills action is\footnote{The action functional~$S_{\text{YM}}$ is referred to Minkowski flat coordinates such that $\dvol(\bx)=\sqrt{|-1|}\,\Id^4 x$ in the weak gauge field limit.}
\[
S_{\text{YM}} = \tfrac14\int\mathcal{F}^a_{ij}\mathcal{F}^{a,ij}\,\Id^4 x,
\]
and the full BV-\/action $S_\BV$ is\footnote{We denote by $A_a^{i\dagger}$ the 
parity\/-\/odd \textsl{antifields}, by $\gamma^a$ the odd \textsl{ghosts}, and
by $\gamma^\dagger_a$ the parity\/-\/even \textsl{antighosts}.}
\[
S_\BV = S_{\text{YM}}
+ \int A_a^{i\dagger}(\tfrac{\Id}{\Id x^i}
\gamma^a + f_{bc}^a A_i^b\gamma^c) \,\Id^4x
- \tfrac12\int f_{ab}^c\gamma^a\gamma^b\gamma^\dagger_c\,\Id^4x.
\]
Let us calculate the BV-Laplacian of this functional. By Corollary~\ref{CorTowardsConventional}, the
only terms which survive in $\Delta(S_{\text{BV}})$ are those which contain both $A$ and $A^\dagger$, or both $\gamma$ and~$\gamma^\dagger$. 
Therefore,   
\begin{align*}
\Delta(S_{\text{BV}})
 &= \int\left(
    \frac{\overleftarrow{\delta}}{\delta A_j^d}\frac{\overleftarrow{\delta}}{\delta A^{j\dagger}_d}(f_{bc}^a A_a^{i\dagger}A_i^b\gamma^c)
   - \frac12\frac{\overleftarrow{\delta}}{\delta\gamma^\dagger_d}\frac{\overleftarrow{\delta}}{\delta\gamma^d}(f^c_{ab}\gamma^a\gamma^b\gamma^\dagger_c)
   \right)\Id^4x \\
 &= \int\left(
   \frac{\overleftarrow{\delta}}{\delta A_j^d}(f_{bc}^d A_j^b\gamma^c)
   - \frac12\frac{\overleftarrow{\delta}}{\delta\gamma^\dagger_d}(f^c_{db}\gamma^b\gamma^\dagger_c - f^c_{ad}\gamma^a\gamma^\dagger_c)
   \right)\Id^4x \\
 &= \int\left(
   f_{dc}^d\gamma^c - \tfrac12\bigl(f^d_{db}\gamma^b - f^d_{ad}\gamma^a\bigr)
   \right)\Id^4x 
  = 0.
\end{align*}
Let us note also that,
since the BV-action $S_{\text{BV}}$ is by construction such that the horizontal cohomology class of $\lshad{S_\text{BV},S_\text{BV}}\rshad$ is zero,
as one easily checks by using Definition~\ref{DefSchouten} below,
the functional $S_\text{BV}$ satisfies 
quantum master\/-\/equation~\eqref{QME} tautologically: both sides 
are, by independent calculations, equal to zero~--- should one inspect those values at any section~$s$ of the BV-\/bundle.
\end{example}

\begin{example}\label{ex:CF}
Consider the nonlinear Poisson sigma model introduced in~\cite{CattaneoFelderCMP2000}. Since its fields are not all purely even, we 
have to generalize all of our reasoning so far to a $\BBZ_2$-graded setup~--- which is, as noted in Remark~\ref{RemBiGraded}, tedious but
straightforward. A verification that $\Delta(S_\text{CF})(s)=0$ for the 
BV-\/action $S_\text{CF}$ of this model and a section~$s$ of the respective BV-\/bundle would, up to minor
differences in conventions and notations, proceed just as it does in that paper itself, in section 3.2 thereof~--- except that no
infinite constants or Dirac's $\boldsymbol{\delta}$-\/function appear.
\end{example}

\begin{rem}
The BV-Laplacian $\Delta$ is extended by using Leibniz' rule from the space $\ov{H}^n(\pi_{\BV})$ of building blocks in 
$\ov{\gM}^n(\pi_{\BV})$ to the space $\ov{\gN}^n(\pi_{\BV},T\pi_{\BV})$, see Theorem~\ref{ThLaplaceOnProduct}
on p.~\pageref{ThLaplaceOnProduct}. The couplings' (re)attachment algorithm then
results in formula~\eqref{EqDeviationDerivationIntro} on p.~\pageref{EqAllIntro}, which is 
taken as a \emph{definition} of the variational Schouten bracket $\lshad\,,\,\rshad$, see~\cite{YKS2008SIGMA}. 
In turn, 
that structure's extention from 
$\ov{H}^n(\pi_{\BV})\times\ov{H}^n(\pi_{\BV})$ to $\ov{\gN}^n(\pi_{\BV},T\pi_{\BV})\times\ov{\gN}^n(\pi_{\BV},T\pi_{\BV})$
is immediate (see Theorem~\ref{ThSchoutenOnProduct} below).

The correspondence between $\Delta$ and $\lshad\,,\,\rshad$ is furthered to an equivalence between the property
$\Delta^2=0$ of BV-Laplacian to be a differential and, on the other hand, Jacobi's identity for the variational
Schouten bracket. We emphasize that the latter \textsl{can be} verified within the old approach~\cite{KuperCotangent}
to geometry of variations. (We refer to~\cite{Lorentz12} for a proof; its crucial idea is that with evolutionary vector
fields it does not matter under ``whose'' total derivatives, $\Id/\Id\bx$ or $\Id/\Id\bby_i$, such fields dive.)
Nevertheless, the traditional paradigm fails to reveal that the operator $\Delta$ is a differential because of a necessity
to have the variations graded-permutable and for that, to distinguish between the functionals' and variations' domains
of definition. Our geometric approach resolves that obstruction and ensures the validity of identities~\eqref{EqZimes}
and~\eqref{EqDeltaSquareIntro} (see Theorems~\ref{ThLaplaceOnSchouten} and~\ref{ThBVDifferential}, respectively).
\end{rem}

\subsubsection{The variational Schouten bracket $\lshad\,,\,\rshad$}\label{SecSchouten}
The parity-odd Laplacian $\Delta$ is the parent object%
\footnote{In particular, the definition of BV-Laplacian logically precedes the construction of Schouten bracket in
BV-formalism (although such parity-odd variational Poisson bracket is often introduced through postulated 
formula~\eqref{EqFormulaSchouten} in the context of Hamiltonian dynamics and infinite\/-\/dimensional completely integrable
systems~\cite{Dorfman,GelfandDorfman,Lstar,KuperCotangent,Magri}). Indeed, the entire Schouten\/-\/bracket machinery of
(quantum) BV-\/cohomology groups and their automorphisms, which we consider in secction~\ref{SecGauge}, stems from quantum
master-equation~\eqref{QME}, see p.~\pageref{QME}.
}
which induces the variational Schouten bracket $\lshad\,,\,\rshad$. Namely, the bracket appears in the course of that
operator's extension from the space $\ov{H}^n(\pi_{BV})\ni F$ to the space
$\ov{\gN}^n(\pi_{\BV},T\pi_{\BV})\supseteq\ov{\gM}^n(\pi_{\BV})$ of local functionals $F_1\cdot\ldots\cdot F_{\ell}$
(it is possible that $F_i$'s already contain some normalized variations).

A distinction between \emph{left} and \emph{right} in the directed operators $\overleftarrow{\delta}\!\!\bolds$ and
$\overrightarrow{\delta}\!\!\bolds$, the orientation $\vec{e}_i\prec\vec{e}^{{}\,\dagger i}$ in the composite BV-fibres
$W_{\bx}\cong V_{\bx}\mathbin{\widehat{\oplus}}\Pi V_{\bx}^{\dagger}$ equipped with two couplings~\eqref{EqTwoCouplings}, and the 
ordering of variations $\delta\bolds_1,\,\ldots,\,\delta\bolds_k$ specify the logic of operational 
Definition~\ref{DefSchouten}, which is given in this section.

\begin{rem}
For the sake of brevity, we extend the BV-Laplacian~$\Delta$ from the space $\ov{H}^n(\pi_{\BV})$ of integral
functionals $F_1,\,\dots\,,F_{\ell}$ to the space $\ov{\gM}^n(\pi_{\BV})$ of local functionals such as
$F_1\cdot\,\dots\,\cdot F_{\ell}$, the factors of which do not explicitly contain any built-in variations. To further this
extension verbatim onto the full space $\ov{\gN}^n(\pi_{\BV},T\pi_{\BV})\supsetneq\ov{\gM}^n(\pi_{\BV})$, one must remember
that it is forbidden to break the order in which the directed variation operators $\overrightarrow{\delta}\!\!\bolds_k$ and
$\overleftarrow{\delta}\!\!\bolds_k$ appear in the (ordered collection of) objects at hand. (Such concept is illustrated by
the third term in~\eqref{EqSpreadTwoOverTwo} below.)

Likewise, we extend $\Delta$ to products of just two factors; in the case of arbitrary number $\ell\ge2$ of building blocks
$F_1,\,\dots\,,F_{\ell}$ one proceeds inductively by using the ghost parity-graded Leibniz rule, then extending $\Delta$ 
onto the vector space $\ov{\gN}^n(\pi_{\BV},T\pi_{\BV})$ by linearity.
\end{rem}

Let $F=\int f(\bx_1,[\bq],[\bq^{\dagger}])\dvol(\bx_1)$ and $G=\int g(\bx_2,[\bq],[\bq^{\dagger}])\dvol(\bx_2)$ be integral
functionals $\Gamma(\pi_{\BV})\to\Bbbk$ and let $\delta\bolds=(\delta s;\delta s^{\dagger})$ be a normalized test shift of
their product's argument $s\in\Gamma(\pi_{\BV})$. We now define the operator $\Delta$ 
acting on the element $F\cdot G$ at $\bolds$ by variations
first along $(0;\delta s^{\dagger})$ and then along $(\delta s;0)$.

According to~\eqref{EqLeftRightVariations}, the object to start with is
\begin{multline*}
\int_M\Id\bby_1\int_M\Id\bby_2\sum_{\substack{i_1,i_2\\j_1,j_2}}\sum_{\substack{|\sigma_1|\ge0\\|\sigma_2|\ge0}}\Biggl\{
(\delta s^{i_1})\left(\frac{\overleftarrow{\dd}}{\dd\bby_1}\right)^{\sigma_1}(\bby_1)
\left\langle\vec{e}_{i_1}(\bby_1),\vec{e}^{{}\,\dagger j_1}(\cdot)\right\rangle
\frac{\overrightarrow{\dd}}{\dd q^{j_1}_{\sigma_1}}\circ\\\circ
(\delta s^{\dagger}_{i_2})\left(\frac{\overleftarrow{\dd}}{\dd\bby_2}\right)^{\sigma_2}(\bby_2)
\left\langle-\vec{e}^{{}\,\dagger i_2}(\bby_2),\vec{e}_{j_2}(\cdot)\right\rangle
\frac{\overrightarrow{\dd}}{\dd q^{\dagger}_{j_2,\sigma_2}}\Biggr\}\\
\left(\int f(\bx_1,[\bq],[\bq^{\dagger}])\dvol(\bx_1)\cdot\int g(\bx_2,[\bq],[\bq^{\dagger}])\dvol(\bx_2\right)(s).
\end{multline*}
Their order preserved, the directed operators $\overrightarrow{\delta}\!\!s$ and $\overrightarrow{\delta}\!\!s^{\dagger}$ 
spread over the two factors $F$ and $G$ by the binomial formula because of the Leibniz rule for graded derivations
$\overrightarrow{\dd}/\dd q^{j_1}_{\sigma_1}$ and $\overrightarrow{\dd}/\dd q^{\dagger}_{j_2,\sigma_2}$. Note that whenever
the ghost parity-odd object $\overrightarrow{\dd}/\dd q^{\dagger}_{j_2,\sigma_2}$ overtakes the density $f$ of ghost
parity $\GH(F)$, there appears an overall sign factor $(-)^{\GH(F)}$. 
We thus obtain
\begin{multline}\label{EqSpreadTwoOverTwo}
(\overrightarrow{\delta}\!\!s\circ\!\overrightarrow{\delta}\!\!s^{\dagger})(F)+(-)^{\GH(F)}\overrightarrow{\delta}\!\!s(F)
\overrightarrow{\delta}\!\!s^{\dagger}(G)+\overrightarrow{\delta}\!\!s
\xrightarrow{{}\cdot\overrightarrow{\delta}\!\!s^{\dagger}(F)\cdot{}}G
\\
+(-1)^{\GH(F)}F\cdot(\overrightarrow{\delta}\!\!s\circ\!\overrightarrow{\delta}\!\!s^{\dagger})(G).
\end{multline}
The next step is to push right through $F$ its single variations in the middle two terms of the above expression. This
yields the equality
\begin{multline}\label{EqFourTerms}
{}=
(\overrightarrow{\delta}\!\!s\circ\!\overrightarrow{\delta}\!\!s^{\dagger})(F)\cdot G+(-)^{\GH(F)}
\left\{(F)\overleftarrow{\delta}\!\!s\cdot\!\overrightarrow{\delta}\!\!s^{\dagger}(G)+\overrightarrow{\delta}\!\!s
\xrightarrow{{}\cdot(F)\overleftarrow{\delta}\!\!s^\dagger\cdot{}}G\right\}\\
{}+(-)^{\GH(F)}F\cdot(\overrightarrow{\delta}\!\!s\circ\!\overrightarrow{\delta}\!\!s^{\dagger})(G).
\end{multline}
We emphasize that the operators $\overleftarrow{\delta}\!\!s$ and $\overleftarrow{\delta}\!\!s^{\dagger}$ in the
variations $(F)\overleftarrow{\delta}\!\!s=\overrightarrow{\delta_{\bq}F}$ and
$(F)\overleftarrow{\delta}\!\!s^{\dagger}=\overrightarrow{\delta_{\bq^{\dagger}}F}$
are temporarily redirected to the left so that the middle terms in~\eqref{EqFourTerms} are $(-)^{\GH(F)}$ times
\begin{equation}\label{EqSchoutenInput}
\mathstrut\smash{
\overrightarrow{\delta_{\bq}F}\cdot\overleftarrow{\delta_{\bq^{\dagger}}F}+\overleftarrow{\delta_{\bq}}
\xrightarrow{\overrightarrow{\delta_{\bq^{\dagger}}F}\cdot}\overline{G}}\ ;
\end{equation}
this is the input datum for a traditional definition of the variational Schouten bracket (e.\,g., 
see~\cite{CattaneoFelderCMP2000} vs~\cite{YKS2008SIGMA}). Let us remember that the BV-fibres orientation 
$\bq\prec\bq^{\dagger}$ expressed by~\eqref{EqTwoCouplings} is built into the last term of~\eqref{EqSchoutenInput} even if
it is written as follows,
$$(F)\overleftarrow{\delta_{\bq}}\cdot\overrightarrow{\delta_{\bq^{\dagger}}}(G)+
(F)\overleftarrow{\delta_{\bq^{\dagger}}}\cdot\overrightarrow{\delta_{\bq}}(G).
$$
Should this be the notation for input, one then usually proclaims that ``differential 1-forms anticommute'' so that
$\langle\delta\bq^{\dagger}\wedge\delta\bq\rangle=-\langle\delta\bq\wedge\delta\bq^{\dagger}\rangle=-1$ in
$\lshad F,G\rshad=\langle\overrightarrow{\delta F}\wedge\overleftarrow{\delta G}\rangle$.

We now are almost in a position to (re)configure the couplings in the four terms of~\eqref{EqFourTerms}. The first term will
of course become $\Delta F\cdot G$, and the last will provide $(-)^{\GH(F)}F\cdot\Delta G$; one is here allowed to integrate
by parts (as explained in section~\ref{SecByParts}) in order to shake the derivatives off $\delta s^{i_1}$ and
$\delta s^{\dagger}_{i_2}$ prior to evaluation of couplings in the resulting object's numeric value at its argument $s$. Yet there
remains one more logical step to be done with~\eqref{EqSchoutenInput}: let us reverse back 
$\overleftarrow{\delta}\!\!s\mapsto\overrightarrow{\delta}\!\!s$ and
$\overleftarrow{\delta}\!\!s^{\dagger}\mapsto\overrightarrow{\delta}\!\!s^{\dagger}$ so that on one hand, the vertical
differentials fall on $F$ but on the other hand, the normalization of the basis which stands near
$\delta s^i(\bby_1)$ and $\delta s^{\dagger}_i(\bby_2)$ is the \emph{first} not second column in~\eqref{EqAlmostComplex}.
This yields the following integrand of 
$(-)^{\GH(F)}\int_M\Id\bby_1\int_M\Id\bby_2\int_M\dvol(\bx_1)\int_M\dvol(\bx_2)$,
with a summation over $i_1,i_2,j_1,j_2$, and $|\sigma_1|\ge0,\ |\sigma_2|\ge0$,
\begin{align}\label{EqPriorToReconfigSchouten}
{}&\left.
\left(f(\bx_1,[\bq],[\bq^{\dagger}])\frac{\overleftarrow{\dd}}{\dd q^{j_1}_{\sigma_1}}\right)
\right|_{j^{\infty}_{\bx_1}(s)}
\langle\vec{e}^{{}\,\dagger j_1}(\bx_1),+\vec{e}_{i_1}(\bby_1)\rangle
\left(\tfrac{\overrightarrow{\dd}}{\dd\bby_1}\right)^{\sigma_1}(\delta s^{i_1})(\bby_1)\cdot\\
{}&{}\qquad{}
\cdot(\delta s^{\dagger}_{i_2})\left(\tfrac{\overleftarrow{\dd}}{\dd\bby_2}\right)^{\sigma_2}(\bby_2)
\langle-\vec{e}^{{}\,\dagger i_2}(\bby_2),\vec{e}_{j_2}(\bx_2)\rangle
\left.\left(
\frac{\overrightarrow{\dd}}{\dd q^{\dagger}_{j_2,\sigma_2}}g(\bx_2,[\bq],[\bq^{\dagger}])
\right)\right|_{j^{\infty}_{\bx_2}(s)}+{} \notag\\
{}+& 
\left.\left(f(\bx_1,[\bq],[\bq^{\dagger}])\frac{\overleftarrow{\dd}}{\dd q^{\dagger}_{j_1,\sigma_1}}\right)
\right|_{j^{\infty}_{\bx_1}(s)}\cdot{} \notag\\
{}&\Biggl\{
\begin{matrix}
&(\delta s^{i_1})\left(\tfrac{\overleftarrow{\dd}}{\dd\bby_1}\right)^{\sigma_2}(\bby_1)
\langle\vec{e}_{i_1}(\bby_1)|&&|\vec{e}^{{}\,\dagger j_2}(\bx_2)\rangle\\
\langle\vec{e}_{j_1}(\bx_1)|&&|-\vec{e}^{{}\,\dagger i_2}(\bby_2)\rangle
\left(\tfrac{\overrightarrow{\dd}}{\dd\bby_2}\right)^{\sigma_1}(\delta s_{i_2}^{\dagger})(\bby_2)\rangle&
\end{matrix}
\Biggr\}\cdot \notag\\
{}&\mbox{\hbox to 97mm {{ }\hfil { }}}\left.\left(
\frac{\overrightarrow{\dd}}{\dd q^{j_2}_{\sigma_2}}g(\bx_2,[\bq],[\bq^{\dagger}])
\right)\right|_{j^{\infty}_{\bx_2}(s)}.\notag
\end{align}
The integrations by parts are performed and couplings are reconfigured at the end of the day
in 
exactly same manner as it has been done in Definition~\ref{DefBV}; 
let us recall that we now define the BV-\/Laplacian on a larger space. Namely, the variations 
couple with the dual variations whereas the differentials of functionals' densities attach to each other.

\begin{define}\label{DefSchouten}
The \emph{variational Schouten bracket} of two integral functionals
\[
{F=\int f(\bx_1,[\bq],[\bq^{\dagger}]\cdot\dvol(\bx_1)} 
\qquad
\text{and} 
\qquad
{G=\int g(\bx_2,[\bq],[\bq^{\dagger}]\cdot\dvol(\bx_2)}
\]
is the on\/-\/the\/-\/diagonal couplings surgery which, by using a normalized test shift
$\delta\bolds=(\delta s;0)+(0;\delta s^{\dagger})\in\Gamma(T\pi_{\BV})$,
yields the functional from $\ov{\gN}^n(\pi_{\BV},T\pi_{\BV})$ whose 
construction 
at a BV-\/section $s\in\Gamma(\pi_{\BV})$ is\footnote{Note that the directions of $\dd/\dd\bby_i$ are reversed so that the minus signs appear. We emphasize that, prior to the evaluation of reconfigured couplings, 
the (co)\/vectors at~$\bx_j$ channel the partial derivatives to~$f$ or~$g$ according to the couplings' old arrangement.}
\begin{align*}
{}&\int_M\Id\bby_1\int_M\Id\bby_2\int_M\Id\bx_1 
\int_M\dvol(\bx_2)
\Biggl[\left.\left(
f(\bx_1,[\bq],[\bq^{\dagger}])\frac{\overleftarrow{\dd}}{\dd q^{j_1}_{\sigma_1}}
\right)\right|_{j^{\infty}_{\bx_1}(s)}\\
{}&\Biggl\{
\begin{matrix}
\left(-\frac{\overleftarrow{\dd}}{\dd\bby_1}\right)^{\sigma_1}\delta s^{i_1}(\bby_1)
\overbrace{
\langle\vec{e}_{i_1}(\bby_1),-\vec{e}^{{}\,\dagger i_2}(\bby_2)\rangle
}^{-1}
\,\delta s^{\dagger}_{i_2}(\bby_2)\left(-\frac{\overrightarrow{\dd}}{\dd\bby_2}\right)^{\sigma_2}
\\
\underbrace{
\langle\vec{e}^{{}\,\dagger j_1}(\bx_1)|\qquad\qquad\qquad\mathstrut,
\mathstrut\qquad\qquad\qquad|\vec{e}_{j_2}(\bx_2)\rangle
}_{-1}
\end{matrix}
\Biggr\}\left.\left(\frac{\overrightarrow{\dd}}{\dd q^{\dagger}_{j_2,\sigma_2}}g(\bx_2,[\bq],[\bq^{\dagger}])
\right)\right|_{j^{\infty}_{\bx_2}(s)}\\ 
\end{align*}
\begin{multline*}
{}
+\left.\left(
f(\bx_1,[\bq],[\bq^{\dagger}])
\frac{\overleftarrow{\dd}}{\dd q^{\dagger}_{j_1,\sigma_1}}
\right)\right|_{j^{\infty}_{\bx_1}(s)}\\
{}
{}\qquad\quad\Biggl\{
\begin{matrix}
\delta s^{i_1}(\bby_1)\left(-\tfrac{\overrightarrow{\dd}}{\dd\bby_1}\right)^{\sigma_2}
\overbrace{ \langle\vec{e}_{i_1}(\bby_1),(-\vec{e}^{{}\,\dagger i_2})(\bby_2)\rangle }^{-1}
\cdot\left(-\tfrac{\overleftarrow{\dd}}{\dd\bby_2}\right)^{\sigma_1}\delta s^{\dagger}_{i_2}(\bby_2)\\
\underbrace{\langle\vec{e}_{j_1}(\bx_1)|\qquad\qquad
\qquad\qquad,\mathstrut\qquad\qquad\qquad\qquad|\vec{e}^{{}\,\dagger j_2}(\bx_2)\rangle}_{+1}
\end{matrix}
\Biggr\}\\
{}
\mbox{\hbox to 97mm {{ }\hfil { }}}
\left.\left(
\frac{\overrightarrow{\dd}}{\dd q^{j_2}_{\sigma_2}}g(\bx_1,[\bq],[\bq^{\dagger}])
\right)\right|_{j^{\infty}_{\bx_2}(s)}
\Biggr].
\end{multline*}
Note that the inner couplings between variations provide a restriction to the diagonal $\bby_1=\bby_2$ and yield the
singular integral operators which then act to the right via multiplication by $-1$ only if $\bby_2=\bx_2$ and
$\bby_1=\bx_1$, respectively. The outer coupling then furnishes the main diagonal $\bx_1=\bby_1=\bby_2=\bx_2$,
restricting the objects further to the same BV-fibre point in the total space of the BV-bundle.
This reveals why over each point of the base $M^n$ the (derivatives of the) densities $f$ and $g$ are restricted to the
infinite jet of the same section $s$; this also means that, since the moment when the couplings
are reconfigured, the volume element $\dvol(\bx_1)$ is discarded because appears a new
singular linear integral operator with a standard sign~$\int\Id\bx_1$.
\end{define}

We conclude the reasoning and sum up the definitions and notations in the following theorem.

\begin{theor}\label{ThLaplaceOnProduct}
The BV-Laplacian $\Delta$ is the linear operator
$$\ov{\gN}^n(\pi_{\BV},T\pi_{\BV})\times\ov{\gN}^n(\pi_{\BV},T\pi_{\BV})\to\ov{\gN}^n(\pi_{\BV},T\pi_{\BV})$$
which acts on products of functionals $F$ and $G\in\ov{\gN}^n(\pi_{\BV},T\pi_{\BV})$
by the rule
\begin{equation}\label{EqLaplacianOnSchouten}
\Delta(F\cdot G)=\Delta(F)\cdot G+(-)^{\GH(F)}\lshad F,G\rshad+(-)^{\GH(F)}F\cdot\Delta G.
\end{equation}
The variational Schouten bracket $\lshad\,,\,\rshad$ measures the deviation for the BV-Laplacian $\Delta$ from being a derivation.\\
$\bullet$\quad 
After integration by parts, Definition~\textup{\ref{DefSchouten}} implies the renouned coordinate formula
\begin{multline}\label{EqFormulaSchouten}
\lshad F,G\rshad=\int\dvol(\bx)
\Biggl(
\frac{\overrightarrow{\delta}\!f(\bx,[\bq],[\bq^{\dagger}])}{\delta\bq}\cdot
\frac{\overleftarrow{\delta}\!g(\bx,[\bq],[\bq^{\dagger}])}{\delta\bq^{\dagger}}-\\-
\frac{\overrightarrow{\delta}\!f(\bx,[\bq],[\bq^{\dagger}])}{\delta\bq^{\dagger}}\cdot
\frac{\overleftarrow{\delta}\!g(\bx,[\bq],[\bq^{\dagger}])}{\delta\bq}
\Biggr).
\end{multline}
\end{theor}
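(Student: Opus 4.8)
The plan is to obtain the theorem as the endpoint of the Leibniz-rule computation carried out in the paragraphs leading to Definition~\ref{DefSchouten}: no new geometric input is required beyond that calculation, only its careful packaging together with a diagonal restriction for the coordinate formula~\eqref{EqFormulaSchouten}.

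First I would apply the second-variation construction of $\Delta$ from Definition~\ref{DefBV} to the single functional $F\cdot G\in\ov{\gM}^n(\pi_{\BV})$, varying it along $(0;\delta s^{\dagger})$ and then along $(\delta s;0)$. Since the directed operators $\overrightarrow{\delta}\!\!s$ and $\overrightarrow{\delta}\!\!s^{\dagger}$ act by the ghost parity-graded Leibniz rule on a product, the binomial expansion produces precisely the four summands displayed in~\eqref{EqSpreadTwoOverTwo} and then in~\eqref{EqFourTerms}; the factor $(-)^{\GH(F)}$ appears each time the odd operator $\overrightarrow{\dd}/\dd q^{\dagger}$ has to overtake the density $f$ of $F$.

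Next I would match these four terms against the right-hand side of~\eqref{EqLaplacianOnSchouten}. The two extreme summands $(\overrightarrow{\delta}\!\!s\circ\overrightarrow{\delta}\!\!s^{\dagger})(F)\cdot G$ and $(-)^{\GH(F)}F\cdot(\overrightarrow{\delta}\!\!s\circ\overrightarrow{\delta}\!\!s^{\dagger})(G)$ become, after the on-the-diagonal couplings surgery prescribed by Definition~\ref{DefBV}, exactly $\Delta(F)\cdot G$ and $(-)^{\GH(F)}F\cdot\Delta(G)$. For the two cross terms I would redirect the variation operators that land on $F$, as in~\eqref{EqSchoutenInput}, and then reverse them back so that the bases near $\delta s^i$ and $\delta s^{\dagger}_i$ are normalized by the first (not the second) column of~\eqref{EqAlmostComplex}; this reproduces expression~\eqref{EqPriorToReconfigSchouten}, and the couplings surgery of Definition~\ref{DefSchouten} turns it into $(-)^{\GH(F)}\lshad F,G\rshad$. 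Collecting the four pieces gives~\eqref{EqLaplacianOnSchouten}, which by its very shape exhibits $\lshad\,,\,\rshad$ as the obstruction to $\Delta$ being a graded derivation.

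Finally, for the coordinate formula~\eqref{EqFormulaSchouten} I would integrate by parts in the expression of Definition~\ref{DefSchouten} to shake the derivatives $\overrightarrow{\dd}/\dd\bby_i$ off the components of the test shift, then invoke the locality postulate to collapse the inner couplings onto $\bby_1=\bby_2$ and the outer coupling onto the main diagonal $\bx_1=\bby_1=\bby_2=\bx_2$, and use the normalization $\delta s^i\cdot\delta s^{\dagger}_i\equiv1$ of~\eqref{EqNormalize} to render the test shifts invisible. The two summands of Definition~\ref{DefSchouten} then become the two summands of~\eqref{EqFormulaSchouten}, the relative sign being fixed by the couplings' values $(-1)\cdot(-1)=+1$ in the first block against $(-1)\cdot(+1)=-1$ in the second. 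I expect the real obstacle to be the sign bookkeeping: one must simultaneously track the factor $(-)^{\GH(F)}$, the asymmetric coupling values of~\eqref{EqChoiceSign}, the adjoint table~\eqref{EqAlmostComplex}, and the $(-1)^{|\sigma|}$ from each integration by parts, and check that they conspire to leave $\Delta F\cdot G$ unsigned while producing a single factor $(-)^{\GH(F)}$ in front of both $\lshad F,G\rshad$ and $F\cdot\Delta G$.
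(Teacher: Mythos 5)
Your proposal is correct and follows essentially the same route as the paper: the paper itself presents Theorem~\ref{ThLaplaceOnProduct} as the summary of exactly the computation you describe --- the graded Leibniz expansion~\eqref{EqSpreadTwoOverTwo}--\eqref{EqFourTerms}, the redirection of the cross terms into~\eqref{EqSchoutenInput} and~\eqref{EqPriorToReconfigSchouten}, and the couplings surgery of Definitions~\ref{DefBV} and~\ref{DefSchouten}, with the coordinate formula~\eqref{EqFormulaSchouten} obtained by integration by parts, restriction to the diagonal, and normalization~\eqref{EqNormalize}. Your sign bookkeeping (the coupling products $(-1)\cdot(-1)=+1$ versus $(-1)\cdot(+1)=-1$ fixing the relative sign in~\eqref{EqFormulaSchouten}) matches the paper's.
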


\begin{rem}\label{RemTwoToOneInSchouten}
Let us recall from Remark~\ref{RemRolesInt} that the building blocks of local functionals are encoded by equivalence classes of their densities, whereas the underlying integration manifold~$M^n$ is endowed with the field\/-\/dependent volume element~$\dvol(\bx,\phi)$. The variational Schouten bracket transforms two given integral functionals~$F$ and~$G$ into~$\lshad F,G\rshad$. For every configuration of physical fields~$\phi\in\Gamma(\pi)$, the integration measure is the same in~$F$,\ $G$,\ and~$\lshad F,G\rshad$. This is because the couplings are local over points $\bigl(\bx,\phi(\bx)\bigr)$ in the total space of the bundle~$\pi$ of physical fields, see Remark~\ref{RemTwoCouplings} on p.~\pageref{RemTwoCouplings}\,; the equality of local sections~$\phi$ at which all (derivatives of) functionals' densities are evaluated ensures the equality of metric tensor elements~$g_{\mu\nu}$ in all functionals by virtue of Einstein's general relativity equations.
\end{rem}

The operational definition of the \textsl{antibracket} $\lshad\,,\,\rshad$ determines the way how this structure acts on 
the square $\ov{\gN}^n(\pi_{\BV},T\pi_{\BV}) \times \ov{\gN}^n(\pi_{\BV},T\pi_{\BV})$ of entire space $\ov{\gN}^n(\pi_{\BV},T\pi_{\BV})$ containing formal products of functionals.

\begin{theor}\label{ThSchoutenOnProduct}
Let $F$, $G$, and $H\in\ov{\gN}^n(\pi_{\BV},T\pi_{\BV})$ be ghost parity\/-\/homogeneous functionals. 
The variational Schouten bracket
$\lshad\,,\,\rshad\colon\ov{\gN}^n(\pi_{\BV},T\pi_{\BV})\times\ov{\gN}^n(\pi_{\BV},T\pi_{\BV})\to
\ov{\gN}^n(\pi_{\BV},T\pi_{\BV})$
has the following properties\textup{:}
\begin{itemize}
\item[(i)]
The value of $\lshad\,,\,\rshad$ at two arguments $F$ and $G\cdot H$ is
\begin{equation}\label{EqSchoutenOnProduct}
\lshad F,G\cdot H\rshad=\lshad F,G\rshad\cdot H + (-)^{(\GH(F)-1)\GH(G)}G\cdot\lshad F,H\rshad.
\end{equation}
This formula recursively extends to products of arbitrary finite number of factors in the second argument.
\item[(ii)]
The bracket $\lshad\,,\,\rshad$ is shifted\/-\/graded skew\/-\/symmetric\textup{:}
\begin{equation}\label{EqSchoutenSkew}
\lshad F,G\rshad=-(-)^{(\GH(F)-1)\cdot(\GH(G)-1)}\lshad G,F\rshad,
\end{equation}
which extends $\lshad\,,\,\rshad$ to products of arbitrary finite number of factors taken as its first 
argument in~\eqref{EqSchoutenOnProduct}.
\item[(iii)]
The bracket $\lshad\,,\,\rshad$ satisfies the shifted\/-\/graded Jacobi identity
\begin{multline}
(-)^{(\GH(F)-1)(\GH(H)-1)}\lshad F,\lshad G,H\rshad\rshad+
(-)^{(\GH(F)-1)(\GH(G)-1)}\lshad G,\lshad H,F\rshad\rshad+{}\\
{}+
(-)^{(\GH(G)-1)(\GH(H)-1)}\lshad H,\lshad F,G\rshad\rshad=0,\label{EqJacobiSchouten}
\end{multline}
which stems from graded Leibniz rule~\eqref{EqJacobiLeibniz}
for evolutionary vector fields $\bQ^F$ defined by the rule 
$\bQ^F(\cdot)\cong\lshad F,\,\cdot\,\rshad$ \textup{(}here the equivalence up to integration by parts 
is denoted by $\cong$\,\textup{)}.
\end{itemize}
Finally, the variational Schouten bracket extends by linearity to 
formal sums of elements from $\ov{\gN}^n(\pi_{\BV},T\pi_{\BV})$.
\end{theor}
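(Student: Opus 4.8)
The plan is to treat $(\ov{\gN}^n(\pi_{\BV},T\pi_{\BV}),\cdot,\Delta,\lshad\,,\,\rshad)$ as a Batalin\/--\/Vilkovisky algebra and to read off (i)--(iii) from the two structures already in hand: the deviation law~\eqref{EqLaplacianOnSchouten} of Theorem~\ref{ThLaplaceOnProduct}, which expresses $\lshad\,,\,\rshad$ through $\Delta$, and the graded Leibniz rule by which the directed variation operators $\overrightarrow{\delta}\!\!\bolds$ and $\overleftarrow{\delta}\!\!\bolds$ act on formal products in $\ov{\gN}^n(\pi_{\BV},T\pi_{\BV})$ (established at the close of section~\ref{SecVariations}). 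Throughout I would use that the product is graded commutative, $F\cdot G=(-)^{\GH(F)\GH(G)}G\cdot F$, and that $\Delta$ is an odd linear operator, so $\GH(\Delta F)=\GH(F)+1$ and $\GH(\lshad F,G\rshad)=\GH(F)+\GH(G)-1$.

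For \textbf{(i)} I would argue that $G\mapsto\lshad F,\,\cdot\,\rshad=\bQ^F$ is a graded derivation of ghost parity $\GH(F)-1$. This is immediate from Definition~\ref{DefSchouten}: in the second argument the functional $G$ is touched only by the left\/-\/directed variational derivatives $\overleftarrow{\dd}/\dd q^{j}_{\sigma}$ and $\overleftarrow{\dd}/\dd q^{\dagger}_{j,\sigma}$, which distribute over a formal product $G\cdot H$ by the graded Leibniz rule for the variation operators. Passing the parity\/-\/$(\GH(F)-1)$ operation across the factor $G$ produces the Koszul sign $(-)^{(\GH(F)-1)\GH(G)}$, which is exactly~\eqref{EqSchoutenOnProduct}; iterating gives the extension to products of several factors. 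Crucially, the Leibniz distribution is applied to the variation operator \emph{before} any restriction to a diagonal, so the derivation property holds at the level of objects in $\ov{\gN}^n(\pi_{\BV},T\pi_{\BV})$, not merely up to $\cong$.

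Property \textbf{(ii)} I would obtain purely algebraically from~\eqref{EqLaplacianOnSchouten}. Applying $\Delta$ to $F\cdot G=(-)^{\GH(F)\GH(G)}G\cdot F$ and expanding both sides by the deviation law, the terms $\Delta F\cdot G$ and $F\cdot\Delta G$, re\/-\/expressed through graded commutativity (using $\GH(\Delta F)=\GH(F)+1$), cancel against their counterparts coming from $\Delta(G\cdot F)$, leaving $(-)^{\GH(F)}\lshad F,G\rshad=(-)^{\GH(F)\GH(G)+\GH(G)}\lshad G,F\rshad$. A parity bookkeeping then rewrites the sign as $-(-)^{(\GH(F)-1)(\GH(G)-1)}$, which is~\eqref{EqSchoutenSkew}. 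Combined with the single\/-\/slot Leibniz rule from~(i), skew\/-\/symmetry transports the Leibniz property to the first argument of~\eqref{EqSchoutenOnProduct}.

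Finally, \textbf{(iii)} is the substantive step, and the one I expect to be the main obstacle. The strategy is to realise the bracket as a commutator: the correspondence $F\mapsto\bQ^F$, with $\bQ^F(\,\cdot\,)\cong\lshad F,\,\cdot\,\rshad$, should intertwine $\lshad\,,\,\rshad$ with the graded commutator of evolutionary vector fields, i.e.\ $\bQ^{\lshad F,G\rshad}\cong[\bQ^F,\bQ^G]$; this is the graded Leibniz rule~\eqref{EqJacobiLeibniz}. The graded Jacobi identity for the commutator of derivations is automatic, so the cyclic sum in~\eqref{EqJacobiSchouten} maps to zero under $F\mapsto\bQ^F$. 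The delicate point is to upgrade this from an equality of vector fields to the stated equality of functionals: a priori the cyclic sum lies only in the kernel of $F\mapsto\bQ^F$ (constants and synonyms of zero). Here the multiple\/-\/base geometry of section~\ref{SecEL} is decisive, exactly as flagged after Corollary~\ref{CorTowardsConventional}: because each variation carries its own copy of $M^n$ and all restrictions to the diagonal are postponed, the fields $\bQ^F$, $\bQ^G$, $\bQ^H$ may be let to dive under the total derivatives $\Id/\Id\bby_i$ of distinct copies of the base without incurring spurious terms (cf.\ the idea credited to~\cite{Lorentz12}). Tracking the integrations by parts copy\/-\/by\/-\/copy would then promote $\cong$ to genuine equality, so that the cyclic sum vanishes as an element of $\ov{\gN}^n(\pi_{\BV},T\pi_{\BV})$ and not merely modulo $\ov{\Id}$-exact terms. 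The extension to formal sums is by linearity.
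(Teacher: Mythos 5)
Your proposal is correct, and parts (i) and (iii) follow essentially the paper's own route: (i) is the graded Leibniz distribution of the directed variations over the second slot (the paper fixes the Koszul sign by swapping $G\rightleftarrows H$ and using $\GH(\lshad F,H\rshad)=\GH(F)+\GH(H)-1$, which amounts to your degree-$(\GH(F)-1)$ derivation argument), and (iii) is exactly the reduction to the graded Leibniz rule~\eqref{EqJacobiLeibniz} for the evolutionary fields $\bQ^F$ with $[\bQ^F,\overrightarrow{\Id}/\Id\bx]=0$, which the paper delegates to~\cite[Proposition~3]{Lorentz12}. The genuine divergence is in (ii). The paper proves skew-symmetry by a direct sign count at the level of the coupling formula~\eqref{EqPriorToReconfigSchouten}: converting right- to left-acting graded derivatives (a factor $(-)^{\GH(G)-1}$), swapping the derivatives of the densities $f$ and $g$, and observing that the reconfiguration of couplings in the swapped expression reproduces \emph{minus} the corresponding terms of $\lshad G,F\rshad$ because the (co)vectors then occur in the wrong order. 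You instead derive~\eqref{EqSchoutenSkew} algebraically from the deviation law~\eqref{EqLaplacianOnSchouten} applied to $F\cdot G=(-)^{\GH(F)\GH(G)}G\cdot F$; your sign bookkeeping checks out ($(-)^{\GH(F)}\lshad F,G\rshad=(-)^{\GH(F)\GH(G)+\GH(G)}\lshad G,F\rshad$ does rearrange to $-(-)^{(\GH(F)-1)(\GH(G)-1)}$ mod~2). Your route is shorter and is the standard BV-algebra argument, but it silently assumes that $\Delta(F\cdot G)=(-)^{\GH(F)\GH(G)}\Delta(G\cdot F)$ holds as an exact identity of objects in $\ov{\gN}^n(\pi_{\BV},T\pi_{\BV})$ — i.e.\ that the operational, ordered construction of $\Delta$ on a product is compatible with graded commutation of the factors. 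In this paper's framework, where objects retain memory of their construction and ``synonyms'' abound, that compatibility is precisely the kind of fact the direct coupling-level computation is meant to certify, so the paper's proof is more self-contained relative to its own operational definitions. One small correction to your (iii): the paper explicitly remarks that Jacobi's identity survives already in the na\"{\i}ve single-base theory of variations (it is $\Delta^2=0$ and identity~\eqref{EqZimes}, not Jacobi, that require the multiple-base geometry), so the product-bundle machinery is not actually the decisive ingredient there that you make it out to be.
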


\begin{proof}
The bilinearity of $\lshad\,,\,\rshad$ is obvious. It is also clear that the terms in $\lshad F,G\cdot H\rshad$ are
grouped in two parts: those in which the ghost-parity graded derivations $\overrightarrow{\dd}/\dd\bq^{\dagger}$ act
on $G$ and those for $H$; the former do not contribute with any extra sign factors whereas the latter do --- in a way
which depends on the parity $\GH(G)$. This means that $\lshad F,G\cdot H\rshad=\lshad F,G\rshad\cdot H+\ldots$;
to grasp the sign in front of the term which has been omitted, let us swap the graded multiples $G$ and $H$. We have that
$G\cdot H=(-)^{\GH(G)\GH(H)}H\cdot G$, whence $\lshad F,G\cdot H\rshad=(-)^{\GH(G)\GH(H)}\lshad F,H\rshad\cdot G+\cdots$.
By recalling that $\GH(\lshad F,H\rshad)=\GH(F)+\GH(H)-1$, we conclude that
$$\lshad F,G\cdot H\rshad=\lshad F,G\rshad\cdot H+(-)^{\GH(G)\GH(H)}(-)^{(\GH(F)+\GH(H)-1)\cdot\GH(G)}G\cdot
\lshad F,H\rshad,
$$
which yields formula~\eqref{EqSchoutenOnProduct}.

Proving~\eqref{EqSchoutenSkew} amounts to a count of signs whenever the bracket~$\lshad F,G\rshad$ of an ordered pair of ghost parity\/-\/graded objects is virtually transformed into~$\lshad G,F\rshad$. By using the 
rule of signs for odd\/-\/parity coordinates, $q^\dagger_{\alpha,\sigma}\cdot q^\dagger_{\beta,\tau}=-q^\dagger_{\beta,\tau}\cdot q^\dagger_{\alpha,\sigma}$, we first note that
\[
\frac{\overrightarrow{\dd}}{\dd q^\dagger_{j_2,\sigma_2}} g\bigl(\bx_2,[\bq],[\bq^\dagger]\bigr) =
(-)^{\GH(G)-1}\left(g\bigl(\bx_2,[\bq],[\bq^\dagger]\bigr)\right)
\frac{\overleftarrow{\dd}}{\dd q^\dagger_{j_2,\sigma_2}},
\]
with a similar formula for the left-{} and right\/-\/acting graded derivative of~$f$. By swapping the (variational) derivatives of the densities~$f$ and~$g$, we gain the signs~$(-)^{\GH(F)\cdot(\GH(G)-1)}$ and~$(-)^{(\GH(F)-1)\cdot\GH(G)}$ for the respective terms in~\eqref{EqPriorToReconfigSchouten} on p.~\pageref{EqPriorToReconfigSchouten}. Combined together, the two steps accumulate equal factors $(-)^{(\GH(F)+1)\cdot(\GH(G)-1)}=(-)^{(\GH(F)-1)\cdot(\GH(G)+1)}=
(-)^{(\GH(F)-1)\cdot(\GH(G)-1)}$. 
Thirdly, by comparing $(-)^{(\GH(F)-1)\cdot(\GH(G)-1)}\,\lshad F,G\rshad$ --\,in which the derivatives of~$f$ and~$g$ are interchanged and the derivations' directions are reversed\,-- with~$\lshad G,F\rshad$, we conclude that the reconfiguration of couplings in the second term in~\eqref{EqPriorToReconfigSchouten} for the former expression yields \emph{minus} the first term in~$\lshad G,F\rshad$. Likewise, the couplings reattachment in the first term of such~\eqref{EqPriorToReconfigSchouten} produces minus the second term in~$\lshad G,F\rshad$. This is because the (co)\/vectors in the differentials of densities remain unswapped, now going in the `wrong' order.

We now refer to~\cite[Proposition~3]{Lorentz12} for a proof of property 
(iii) in a wider, non-commutative setup of cyclic words
(cf.~\cite{SQS11,KontsevichCyclic,OlverSokolovCMP1997}). It is remarkable that the 
reasoning persists within a 
na\"\i ve theory of variations, not referring to our main idea that each test shift brings its own copy of the base $M^n$
into the picture. A key point in the proof is that the rule $\bQ^F(\cdot)\cong\lshad F,\,\cdot\,\rshad$
naturally associates with functionals $F$ the evolutionary fields $\bQ^F$ on the infinite jet superbundles at hand, and
with \emph{evolutionary} vector fields it does not matter under `whose'' total derivatives such fields dive, obeying their
defining property $[\bQ^F,\overrightarrow{\Id}/\Id\bx]=0$ (i.\,e., any integrations by parts, which transform the derivatives
$\overrightarrow{\dd}/\dd\bby_i$ falling on test shifts into total derivatives $\overrightarrow{\Id}/\Id\bx$ falling on the
functionals' densities, do not mar the outcome even if one attempts to perform such integrations ahead of time).
\end{proof}

\subsection{Main result\,: the proof of properties \protect{(\ref{EqZimes}\/--\/\ref{EqDeltaSquareIntro})}}\label{SecProof}
We are ready to \emph{prove} the main interrelations between the BV-Laplacian $\Delta$ and variational Schouten bracket
$\lshad\,,\,\rshad$. Let us recall that either a validity of these properties was postulated (see~\cite{GomisParisSamuel})
or an \textsl{ad hoc} regularization technique was formally employed in the literature in order to mask the seemingly present
divergencies (which are actually not there), cf.~\cite[\S15]{HenneauxTeitelboim}.

Let us fix the terms. In what follows we refer to building blocks from $\ov{H}^n(\pi_{\BV})$ and their descendants 
--~containing reconfigured variations~-- from $\ov{H}^{n(1+k)}(\pi_{\BV}\times T\pi_{\BV}\times\ldots\times T\pi_{\BV})$
as \textsl{integral functionals}. Such objects will be used for bases of inductive proofs of Lemmas~\ref{LemmaBaseLapSchouten} and~\ref{LBVOperatorDifferential}.
We then extend the properties~\eqref{EqZimes} and $\Delta^2=0$ to the space 
$\ov{\gN}^n(\pi_{\BV},T\pi_{\BV})\supseteq\ov{\gM}^n(\pi_{\BV})$ of \textsl{local functionals}, that is, of formal sums of 
products of (varied descendants of) building blocks.

\begin{lemma}\label{LemmaBaseLapSchouten}
Let $F\in\overline{H}^{n(1+k)}\bigl(\pi_{\BV}\times T\pi_{\BV}\times\ldots\times T\pi_{\BV}\bigr)$ and 
$G\in\overline{H}^{n(1+\ell)}\bigl(\pi_{\BV}\times T\pi_{\BV}\times\ldots\times T\pi_{\BV}\bigr)$ be two 
integral functionals\textup{;} here $k,\ell\geqslant0$. Then
\begin{equation}\label{EqLapSchouten}
\Delta\bigl(\schouten{F,G}\bigr) = \schouten{\Delta F,G} + (-)^{\GH(F)-1}\schouten{F,\Delta G}.
\end{equation}
\end{lemma}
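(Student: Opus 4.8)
The plan is to compute $\Delta\bigl(\lshad F,G\rshad\bigr)$ directly in the product\/-\/bundle geometry of section~\ref{SecEL}, and to exhibit the cancellation of all ``cross'' contributions \emph{without} invoking any erasure convention such as~\eqref{EqSur}. Since $F$ and $G$ are single integral functionals, the object $\lshad F,G\rshad$ is still a building block from $\ov{H}^{n(1+k)}$; by Definition~\ref{DefSchouten} its integrand carries one variation pair $(\delta s_1;\delta s_1^{\dagger})$ that couples the $\bq$\/-\/ and $\bq^{\dagger}$\/-\/derivatives of $f$ (on its base copy $\bx_1$) with those of $g$ (on its base copy $\bx_2$), as recorded by the coordinate formula~\eqref{EqFormulaSchouten}. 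Applying $\Delta$ then introduces a \emph{fresh} variation pair $(\delta s_2;\delta s_2^{\dagger})$ on its own two copies of $M^n$, as in Definition~\ref{DefBV}. First I would write out this third\/-\/order variation keeping the base copies $\bx_1,\bx_2$ and all variation bases $\bby_i$ \emph{distinct}, postponing every restriction to a diagonal until the couplings are reconfigured at the very end (section~\ref{SecByParts} and footnote~\ref{FootMKEW}). This is exactly the step the naive single\/-\/base calculus cannot perform: as an element of $\ov{\gN}^n(\pi_{\BV},T\pi_{\BV})$ the bracket $\lshad F,G\rshad$ retains the memory of both of its base copies.

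Next I would expand by the ghost\/-\/parity graded Leibniz rule, letting the two graded derivations $\overrightarrow{\dd}/\dd\bq$ and $\overrightarrow{\dd}/\dd\bq^{\dagger}$ of the outer Laplacian distribute over the product structure of the bracket's integrand. This splits $\Delta\bigl(\lshad F,G\rshad\bigr)$ into three groups: (i)~terms in which both outer derivations fall on a derivative of $f$; (ii)~terms in which both fall on $g$; (iii)~cross terms in which one derivation falls on $f$ and the other on $g$. Reconfiguring the couplings of group~(i) reproduces $\lshad\Delta F,G\rshad$, and those of group~(ii) reproduce $\lshad F,\Delta G\rshad$; careful sign bookkeeping---using that $\Delta$ is odd, so each of its derivations acquires a factor when it overtakes the $f$\/-\/factor of parity $\GH(F)$---supplies the coefficient $(-)^{\GH(F)-1}$ on the second summand, matching~\eqref{EqLapSchouten}.

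The hard part will be to show that the cross terms of group~(iii) cancel. The two summands of the bracket in~\eqref{EqFormulaSchouten} (the $\frac{\overrightarrow{\delta}f}{\delta\bq}\cdot\frac{\overleftarrow{\delta}g}{\delta\bq^{\dagger}}$ term and its $\bq\leftrightarrow\bq^{\dagger}$ counterpart, carrying the opposite sign) each contribute cross terms in which $f$ is hit by one derivation and $g$ by the other. Because each of the four variations now lives on its own copy of $M^n$, the mixed second variational derivatives of $f$ and of $g$ are graded\/-\/permutable (section~\ref{SecSynonyms}). Consequently, some cross terms vanish outright, being a sum over a symmetric comultiple (two even\/-\/parity $\bq$\/-\/derivatives) against an antisymmetric one (two odd\/-\/parity $\bq^{\dagger}$\/-\/derivatives); the remaining ones cancel pairwise between the two summands of the bracket, the relative minus sign being furnished by the coupling table~\eqref{EqAlmostComplex} together with~\eqref{EqChoiceSign}, once the permutability of derivatives is used to bring them to a common form. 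I would stress that in the single\/-\/base paradigm these very terms would demand the multiplication of two singular kernels at coincident points---the would\/-\/be $\boldsymbol{\delta}(0)$ whose suppression~\eqref{EqSur} is otherwise postulated by hand---whereas here they cancel structurally, so no regularisation recipe is used.

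Finally I would note that this computation is the base case of the induction whose inductive step extends~\eqref{EqLapSchouten} to products, via the graded Leibniz rules of Theorems~\ref{ThLaplaceOnProduct} and~\ref{ThSchoutenOnProduct}, to all of $\ov{\gN}^n(\pi_{\BV},T\pi_{\BV})$. The same permutability\/-\/driven cancellation of cross terms is, incidentally, what ultimately underlies the relation $\Delta^2=0$ in~\eqref{EqDeltaSquareIntro}, so I would take care to keep the present argument logically independent of it.
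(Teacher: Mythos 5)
Your proposal is correct and follows essentially the same route as the paper: expand the third\/-\/order variation in the product\/-\/bundle geometry by the graded Leibniz rule, recognise the two groups that reconfigure into $\schouten{\Delta F,G}$ and $(-)^{\GH(F)-1}\schouten{F,\Delta G}$, and kill the four cross terms --- two of them vanishing outright as a (graded\/-\/)symmetric pair of derivatives summed against an antisymmetric one, the other two cancelling pairwise between the two summands of~\eqref{EqFormulaSchouten}. The only cosmetic difference is that the paper packages your permutability argument as the observation (Lemma~\ref{LAnyChoice} and its footnote) that $\Delta$ and $\schouten{\,,\,}$ are independent of the choice of normalized test shifts, so that swapping $\delta\bolds_1\rightleftarrows\delta\bolds_2$ must leave the composite object unchanged while manifestly reversing the sign of exactly those cross terms.
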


\begin{proof}
The key idea is that the structures $\Delta$ and $\lshad\,,\,\rshad$ yield equivalence classes of integral functionals
which, after an integration by parts at the end of the day, are \emph{independent} of a choice of the built-in test shifts
normalized by~\eqref{EqNormalize}. Consequently, the composite structure $\Delta(\lshad{\cdot},{\cdot}\rshad)$ does not change
under swapping $\delta s_1^{\alpha}\rightleftarrows\delta s_2^{\beta}$,
$\delta s_{1,\alpha}^{\dagger}\rightleftarrows\delta s_{2,\beta}^{\dagger}$ of the respective variations $\delta\bolds_1$ and
$\delta\bolds_2$ in $\Delta$ and $\lshad\,,\,\rshad$. Hence the terms which are skew-symmetric under such
exchange necessarily vanish.

For the sake of clarity, let us assume that $F=\int f(\bx_1,[\bq],[\bq^{\dagger}])\,\dvol(\bx_1)$ and
$G=\int g(\bx_2,[\bq],[\bq^{\dagger}])\,\dvol(\bx_2)$ are just building blocks from the cohomology group
$\ov{H}^n(\pi_{\BV})$; this simplification is legitimate because new variations which come from $\Delta$ and
$\lshad\,,\,\rshad$ do not interfere with any other test shifts if those are already absorbed by the densities $f$ and $g$.
Suppose that $\delta\bolds_1$ and $\delta\bolds_2$ are two normalized variations of a section $s\in\Gamma(\pi_{\BV})$. 
By definition, we have that\footnote{To keep track of their origin, we let the directed derivatives~$\dd/\dd\bby_i$ or~$\dd/\dd\bz_j$ remain falling on the respective coefficients in~$\delta\bolds_1$ and~$\delta\bolds_2$; the integration by parts is performed in a standard way prior to the reconfigurations which are shown in the formula.}
\begin{align*}
&\Delta\left(\lshad F,G\rshad\right)(s)=
\int_M\Id\bz_1\int_M\Id\bz_2\int_M\Id\bby_1\int_M\Id\bby_2\int_M\Id\bx_1 
\int_M\dvol(\bx_2)\ \cdot
\\
&\Biggl\{
(\delta s_1^{\alpha})\left(\tfrac{\overleftarrow{\dd}}{\dd\bz_1}\right)^{\sigma_1}(\bz_1)\,
\left\langle\vec{e}_{\alpha}(\bz_1),-\vec{e}^{{}\,\dagger \alpha}(\bz_2)\right\rangle\,
(\delta s^{\dagger}_{1,\alpha})\left(\tfrac{\overleftarrow{\dd}}{\dd\bz_2}\right)^{\sigma_2}(\bz_2)
\langle\vec{e}^{{}\,\dagger \alpha}(\cdot),\vec{e}_{\alpha}(\cdot)\rangle
\frac{\overrightarrow{\dd}}{\dd q^{\alpha}_{\sigma_1}}\frac{\overrightarrow{\dd}}{\dd q^{\dagger}_{\alpha,\sigma_2}} 
\end{align*}
\begin{align*}
&\quad\smash{\Biggl[}
f(\bx_1.[\bq],[\bq^{\dagger}])\smash{\frac{\overleftarrow{\dd}}{\dd q^{\beta}_{\tau_1}}}
\,\underline{\langle\vec{e}^{{}\,\dagger \beta}(\bx_1)|}\,
\Bigl\langle\left(\tfrac{\overrightarrow{\dd}}{\dd\bby_1}\right)^{\tau_1}(\delta s_2^{\beta})(\bby_1)\,\vec{e}_{\beta}(\bby_1),
\\
&\mbox{\hbox to 25mm {{ }\hfil { }}}
{-}\vec{e}^{{}\,\dagger.\beta}(\bby_2)\,(\delta s^{\dagger}_{2,\beta})
\left(\tfrac{\overleftarrow{\dd}}{\dd\bby_2}\right)^{\tau_2}(\bby_2)\Bigr\rangle
\,\underline{|\vec{e}_{\beta}(\bx_2)\rangle}\,
\frac{\overrightarrow{\dd}}{\dd q^{\dagger}_{\beta,\tau_2}}g(\bx_2,[\bq],[\bq^{\dagger}])+{}
\\
&{}\quad{}+f(\bx_1,[\bq],[\bq^{\dagger}])
\frac{\overleftarrow{\dd}}{\dd q^{\dagger}_{\beta,\tau_2}}
\,\underline{\langle\vec{e}_{\beta}(\bx_1)|}\,
\Bigl\langle(\delta s_2^{\beta})
\left(\tfrac{\overleftarrow{\dd}}{\dd\bby_1}\right)^{\tau_1}(\bby_1)\,\vec{e}_{\beta}(\bby_1),
\\
&\mbox{\hbox to 25mm {{ }\hfil { }}}
{-}\vec{e}^{{}\,\dagger \beta}(\bby_2)\,
\left(\tfrac{\overrightarrow{\dd}}{\dd\bby_2}\right)^{\tau_2}
(\delta s^{\dagger}_{2,\beta})(\bby_2)\Bigr\rangle
\,\underline{|\vec{e}^{{}\,\dagger \beta}(\bx_2)\rangle}\,
\frac{\overrightarrow{\dd}}{\dd q^{\beta}_{\tau_1}}g(\bx_2,[\bq],[\bq^{\dagger}])
\smash{\Biggr]
\left.\Biggr\}\right|_{\substack{j^{\infty}(s)\\\bx_i=\bby_j=\bz_k}}}.
\end{align*}
The partial derivatives 
$\overrightarrow{\dd}/\dd q^{\alpha}_{\sigma_1}\circ\overrightarrow{\dd}/\dd q^{\dagger}_{\alpha,\sigma_2}$
are distributed between the arguments 
$f$ and $g$ by the graded Leibniz rule. Whenever \emph{none} of the two operators overtakes the density of $F$,
the reconfiguration yields $\lshad\Delta F,G\rshad(\bolds)$. Likewise, if \emph{both} derivatives indexed by $\alpha$
overtake $F$ and an old derivative that fell on $g$, then we obtain $(-)^{\GH(F)-1}\lshad F,\Delta G\rshad(\bolds)$,
which is the second term in the right-hand side of~\eqref{EqLapSchouten}.
We claim that the remaining four terms cancel out
by virtue of independence of $\Delta$ and $\lshad\,,\,\rshad$ from a choice of normalized variations. To prove this claim,
we consecutively inspect the behaviour of those four terms under a swap $\delta\bolds_1\rightleftarrows\delta\bolds_2$
of coefficients in the normalized test shifts.

The first and second terms sum up to the difference
\begin{multline}\label{EqTwoCancel}
\left\langle(\delta s^{\alpha}_1)\left(\tfrac{\overleftarrow{\dd}}{\dd\bz_1}\right)^{\sigma_1}(\bz_1)\,
\overbrace{\vec{e}_{\alpha}(\bz_1),(-\vec{e}^{{}\,\dagger \alpha})(\bz_2)}^{-1}\,
(\delta s^{\dagger}_{1,\alpha})\left(\tfrac{\overleftarrow{\dd}}{\dd\bz_2}\right)^{\sigma_2}(\bz_2)
\right\rangle
\underbrace{\langle\vec{e}^{{}\,\dagger \alpha}(\bx_2),\vec{e}_{\alpha}(\bx_1)\rangle}_{-1}\,\cdot{}\\
\cdot\frac{\overrightarrow{\dd}}{\dd q^{\dagger}_{\alpha,\sigma_2}}f(\bx_1,[\bq],[\bq^{\dagger}])
\frac{\overleftarrow{\dd}}{\dd q^{\beta}_{\tau_1}}\,
\underline{\bigl\langle\vec{e}^{{}\,\dagger \beta}(\bx_1)\bigr|}  
\Biggl\langle\left(\tfrac{\overrightarrow{\dd}}{\dd\bby_1}\right)^{\tau_1}
(\delta s_2^{\beta})(\bby_1)\,
\overbrace{\vec{e}_{\beta}(\bby_1),(-\vec{e}^{{}\,\dagger \beta})(\bby_2)}^{-1}\\
(\delta s^{\dagger}_{2,\beta})
\left(\tfrac{\overleftarrow{\dd}}{\dd\bby_2}\right)^{\tau_2}(\bby_2)\Biggr\rangle
\underbrace{\bigl|\vec{e}_{\beta}(\bx_2)\bigr\rangle}_{-1}
\frac{\overrightarrow{\dd}}{\dd q^{\alpha}_{\sigma_1}}
\frac{\overrightarrow{\dd}}{\dd q^{\dagger}_{\beta,\tau_2}}g(\bx_2,[\bq],[\bq^{\dagger}])+{}\\
{}+
\left\langle(\delta s^{\alpha}_1)\left(\tfrac{\overleftarrow{\dd}}{\dd\bz_1}\right)^{\sigma_1}(\bz_1)\,
\overbrace{\vec{e}_{\alpha}(\bz_1),(-\vec{e}^{{}\,\dagger \alpha})(\bz_2)}^{-1}\,
(\delta s^{\dagger}_{1,\alpha})\left(\tfrac{\overleftarrow{\dd}}{\dd\bz_2}\right)^{\sigma_2}(\bz_2)
\right\rangle
\underbrace{\langle\vec{e}^{{}\,\dagger \alpha}(\bx_1),\vec{e}_{\alpha}(\bx_2)\rangle}_{-1}\,\cdot\\
\cdot(-)^{\GH(F)-1}
\frac{\overrightarrow{\dd}}{\dd q^{\alpha}_{\sigma_1}}f(\bx_1,[\bq],[\bq^{\dagger}])
\frac{\overleftarrow{\dd}}{\dd q^{\dagger}_{\beta,\tau_2}}
\underline{\bigl\langle\vec{e}_{\beta}(\bx_1)\bigr|}   
\Biggl\langle(\delta s_2^{\beta})\left(\tfrac{\overleftarrow{\dd}}{\dd\bby_1}\right)^{\tau_1}(\bby_1)\,
\overbrace{\vec{e}_{\beta}(\bby_1),(-\vec{e}^{{}\,\dagger \beta})(\bby_2)}^{-1}\\
\left(\tfrac{\overrightarrow{\dd}}{\dd\bby_2}\right)^{\tau_2}(\delta s^{\dagger}_{2,\beta})(\bby_2)\Biggr\rangle
\underbrace{\bigl|\vec{e}^{{}\,\dagger \beta}(\bx_2)\bigr\rangle}_{+1}\,
\frac{\overrightarrow{\dd}}{\dd q^{\dagger}_{\alpha,\sigma_2}}
\frac{\overrightarrow{\dd}}{\dd q^{\beta}_{\tau_1}}g(\bx_2,[\bq],[\bq^{\dagger}]).
\end{multline}
Recalling that 
$$f(\bx_1,[\bq],[\bq^{\dagger}])\frac{\overleftarrow{\dd}}{\dd q^{\dagger}_{\beta,\tau_2}}=(-)^{\GH(F)-1}
\frac{\overrightarrow{\dd}}{\dd q^{\dagger}_{\beta,\tau_2}}f(\bx_1,[\bq],[\bq^{\dagger}]),$$
let us swap the derivations which fall on $f$ from the left and right; this eliminates the sign $(-)^{\GH(F)-1}$.
We proceed likewise for $g$ and then transport the variations $\delta\bolds_1$ and $\delta\bolds_2$, exchanging their places
(and their r\^oles with respect to $\Delta$ and $\lshad\,,\,\rshad$). The second term in formula ~\eqref{EqTwoCancel} becomes
\begin{multline*}
\smash{\Biggl\langle(\delta s_2^{\beta})
\left(\tfrac{\overleftarrow{\dd}}{\dd\bby_1}\right)^{\tau_1}(\bby_1)\,
\overbrace{\vec{e}_{\beta}(\bby_1),(-\vec{e}^{{}\,\dagger \beta})(\bby_2)}^{-1}\,
\left(\tfrac{\overrightarrow{\dd}}{\dd\bby_2}\right)^{\tau_2}(\delta s^{\dagger}_{2,\beta})(\bby_2)\Biggr\rangle}
\underbrace{\bigl\langle\vec{e}_{\beta}(\bx_1),\vec{e}^{{}\,\dagger \beta}(\bx_2)\bigr\rangle}_{+1}\\
\cdot
\frac{\overrightarrow{\dd}}{\dd q^{\dagger}_{\beta,\tau_2}}
f(\bx_1,[\bq],[\bq^{\dagger}])\frac{\overleftarrow{\dd}}{\dd q^{\alpha}_{\sigma_1}}
\underline{\bigl\langle\vec{e}^{{}\,\dagger \alpha}(\bx_1)\bigr|}   
\Biggl\langle
\left(\tfrac{\overrightarrow{\dd}}{\dd\bz_1}\right)^{\sigma_1}(\delta s^{\alpha}_1)(\bz_1)
\overbrace{\vec{e}_{\alpha}(\bz_1),(-\vec{e}^{{}\,\dagger \alpha})(\bz_2)}^{-1}
(\delta s^{\dagger}_{1,\alpha})\left(\tfrac{\overleftarrow{\dd}}{\dd\bz_2}\right)(\bz_2)\Biggr\rangle\\
\underbrace{\bigl|\vec{e}_{\alpha}(\bx_2)\bigr\rangle}_{-1}\,
\frac{\overrightarrow{\dd}}{\dd q^{\beta}_{\tau_1}}
\frac{\overrightarrow{\dd}}{\dd q^{\dagger}_{\alpha,\sigma_2}}
g(\bx_2,[\bq],[\bq^{\dagger}]).
\end{multline*}
It is now readily seen that the first term in~\eqref{EqTwoCancel} and this equivalent expression of its second term are
opposite to each other. Indeed, relabel the summation indexes $\alpha\rightleftarrows\beta$, $\sigma\rightleftarrows\tau$
so that $\delta s^{\alpha}_1\rightleftarrows\delta s^{\beta}_2$, 
$\delta s^{\dagger}_{1,\alpha}\rightleftarrows\delta s^{\dagger}_{2,\beta}$,
and swap the copies of base manifold $M^n$ by $\bby\rightleftarrows\bz$.
Due to the second factors in the products $(-1)\cdot(-1)\cdot(-1)\cdot(-1)=+1$ versus $(-1)\cdot(+1)\cdot(-1)\cdot(-1)=-1$,
the two terms in~\eqref{EqTwoCancel} cancel out after the integration by parts and evaluation of the couplings in view
of~\eqref{EqNormalize}.

Next, the integrand of $\Delta\bigl(\lshad F,G\rshad\bigr)(s)$ contains a restriction to the infinite jet $j^{\infty}(s)$
of the third term, which is
\begin{align*}
&\Biggl\langle(\delta s_1^{\alpha})
\left(\tfrac{\overleftarrow{\dd}}{\dd\bz_1}\right)^{\sigma_1}(\bz_1)\,
\overbrace{\vec{e}_{\alpha}(\bz_1),(-\vec{e}^{{}\,\dagger \alpha})(\bz_2)}^{-1}\,
(\delta s^{\dagger}_{1,\alpha})\left(\tfrac{\overleftarrow{\dd}}{\dd\bz_2}\right)^{\sigma_2}(\bz_2)\Biggr\rangle
\underbrace{\bigl\langle\vec{e}^{{}\,\dagger \alpha}(\bx_2),\vec{e}_{\alpha}(\bx_1)\bigr\rangle}_{-1}\\
&\ {}\cdot
\frac{\overrightarrow{\dd}}{\dd q^{\dagger}_{\alpha,\sigma_2}}
\left(f(\bx_1,[\bq],[\bq^{\dagger}])\frac{\overleftarrow{\dd}}{\dd q^{\dagger}_{\beta,\tau_2}}\right)
\\
&\quad\underline{\bigl\langle\vec{e}_{\beta}(\bx_1)\bigr|}  
\Biggl\langle
(\delta s^{\beta}_2)\left(\tfrac{\overleftarrow{\dd}}{\dd\bby_1}\right)^{\tau_1}(\bby_1)\,
\overbrace{\vec{e}_{\beta}(\bby_1),(-\vec{e}^{{}\,\dagger \beta})(\bby_2)}^{-1}\,
\left(\tfrac{\overrightarrow{\dd}}{\dd\bby_2}\right)^{\tau_2}
(\delta s^{\dagger}_{2,\beta})(\bby_2)\Biggr\rangle
\underbrace{\bigl|\vec{e}^{{}\,\dagger \beta}(\bx_2)\bigr\rangle}_{+1}
\\
&\mbox{\hbox to 107mm {{ }\hfil { }}}
\frac{\overrightarrow{\dd}}{\dd q^{\alpha}_{\sigma_1}}
\frac{\overrightarrow{\dd}}{\dd q^{\beta}_{\tau_1}}
g(\bx_2,[\bq],[\bq^{\dagger}]).
\end{align*}
Let the summation indexes be relabelled as above: $\alpha\rightleftarrows\beta$, $\sigma\rightleftarrows\tau$, and
$\smash{\delta s^{\alpha}_1\rightleftarrows\delta s^{\beta}_2}$, 
$\delta s^{\dagger}_{1,\alpha}\rightleftarrows\delta s^{\dagger}_{2,\beta}$ on top of $\bby\rightleftarrows\bz$.
The transformation of graded derivations falling from the left and right on $f$ is then
\begin{multline*}
\frac{\overrightarrow{\dd}}{\dd q^{\dagger}_{\alpha,\sigma_2}}
\left(f\frac{\overleftarrow{\dd}}{\dd q^{\dagger}_{\beta,\tau_2}}\right)\longmapsto
\frac{\overrightarrow{\dd}}{\dd q^{\dagger}_{\beta,\tau_2}}
\left(f\frac{\overleftarrow{\dd}}{\dd q^{\dagger}_{\alpha,\sigma_2}}\right)=
\frac{\overrightarrow{\dd}}{\dd q^{\dagger}_{\beta,\tau_2}}
\left((-)^{\GH(F)-1}\frac{\overrightarrow{\dd}}{\dd q^{\dagger}_{\alpha,\sigma_2}}f\right)=\\
=(-)^{\GH(F)-2}\cdot(-)^{\GH(F)-1}
\left(\frac{\overrightarrow{\dd}}{\dd q^{\dagger}_{\alpha,\sigma_2}}f\right)
\frac{\overleftarrow{\dd}}{\dd q^{\dagger}_{\beta,\tau_2}}=
-\frac{\overrightarrow{\dd}}{\dd q^{\dagger}_{\alpha,\sigma_2}}
\left(f\frac{\overleftarrow{\dd}}{\dd q^{\dagger}_{\beta,\tau_2}}\right).
\end{multline*}
This minus sign shows that the third term as it was written initially, and the newly produced one in which the 
variations $\delta\bolds_1$ and $\delta\bolds_2$ are interchanged have opposite signs. At the same time, these integral
functionals must be equal to each other due to independence of $\Delta$ and $\lshad\,,\,\rshad$ of a choice of the 
test shifts. Therefore, each of those expressions vanishes.

The fourth term is processed analogously; its integrand is
\begin{align*}
&(-)^{\GH(F)}\Biggl\langle(\delta s_1^{\alpha})
\left(\tfrac{\overleftarrow{\dd}}{\dd\bz_1}\right)^{\sigma_1}(\bz_1)\,
\overbrace{\vec{e}_{\alpha}(\bz_1),(-\vec{e}^{{}\,\dagger \alpha})(\bz_2)}^{-1}\,
(\delta s^{\dagger}_{1,\alpha})\left(\tfrac{\overleftarrow{\dd}}{\dd\bz_2}\right)^{\sigma_2}(\bz_2)\Biggr\rangle
\cdot\underbrace{\bigl\langle\vec{e}^{{}\,\dagger \alpha}(\bx_1),\vec{e}_{\alpha}(\bx_2)\bigr\rangle}_{-1}
\\
&\ {}\cdot
\frac{\overrightarrow{\dd}}{\dd q^{\alpha}_{\sigma_1}}
f(\bx_1,[\bq],[\bq^{\dagger}])\frac{\overleftarrow{\dd}}{\dd q^{\beta}_{\tau_1}}
\\
&\qquad
\underline{\bigl\langle\vec{e}^{{}\,\dagger \beta}(\bx_1)\bigr|}
\Biggl\langle
\left(\tfrac{\overrightarrow{\dd}}{\dd\bby_1}\right)^{\tau_1}(\delta s^{\beta}_2)(\bby_1)\,
\overbrace{\vec{e}_{\beta}(\bby_1),(-\vec{e}^{{}\,\dagger \beta})(\bby_2)}^{-1}\,
(\delta s^{\dagger}_{2,\beta})\left(\tfrac{\overleftarrow{\dd}}{\dd\bby_2}\right)^{\tau_2}
(\bby_2)\Biggr\rangle
\underbrace{\bigl|\vec{e}_{\beta}(\bx_2)\bigr\rangle}_{-1} 
\\ 
&\mbox{\hbox to 107mm {{ }\hfil { }}}
\frac{\overrightarrow{\dd}}{\dd q^{\dagger}_{\alpha,\sigma_2}}
\frac{\overrightarrow{\dd}}{\dd q^{\dagger}_{\beta,\tau_2}}
g(\bx_2,[\bq],[\bq^{\dagger}]).
\end{align*}
The very same procedure of two variations interchange and relabelling restores an almost identical expression in which,
however, the parity-odd derivations go in the inverse order
$\overrightarrow{\dd}/\dd q^{\dagger}_{\beta,\tau_2}\circ\overrightarrow{\dd}/\dd q^{\dagger}_{\alpha,\sigma_2}$.
Equal to minus itself, the fourth term vanishes. This concludes the proof.
\end{proof}

The following example illustrates the assertion of Lemma~\ref{LemmaBaseLapSchouten} (but not a technique of its proof which
itself accompanies Lemma~\ref{LAnyChoice}). We use the convention from Remark~\ref{RemKeepNotation}, denoting by $\Id/\Id\bby_i$ or $\Id/\Id\bz_j$ 
the total derivatives which act on the functionals' densities at points $\bx_k$; this keeps track of those derivatives
origin and lets us indicate the couplings' values as they appear after the integrations by parts, contributing only with
sign factors $\pm1$. For the sake of brevity we do not write the (co)vectors $\vec{e}_i$ and $\vec{e}^{{}\,\dagger\,i}$
in the formulas below, referring to the proofs in preceding sections. Likewise, we do not indicate the base point
congruences that occur due to the absolute locality of couplings.

An overall comment to Example~\ref{Countercounterexample} below is that, fully aware of the goal which is to calculate 
$\Delta\left(\lshad F,G\rshad\right)$ or, respectively, $\lshad\Delta F,G\rshad$ and $\lshad F,\Delta G\rshad$, we do not
interrupt the logic of our reasoning by attempting to view the intermediate objects $\lshad F,G\rshad$ or $\Delta F$ and
$\Delta G$ as mappings $\Gamma(\pi_{\BV})\to\Bbbk$, cf.\ Corollary~\ref{CorTowardsConventional} 
on p.~\pageref{CorTowardsConventional}. 
Such mappings would not be elements of the structures which stand in the left- and right-hand sides of the identity under 
examination. The slogan is that a step-by-step evaluation is illegal; 
derivations of the end-product from input data
must not be interrupted at half-way.

We also emphasize that the example below is a prototype reasoning which is equally well applicable to any other arguments
$F$ and $G$ in~\eqref{EqLapSchouten};
a choice of the functionals is here not specific to any model. The point is that
equality~\eqref{EqLapSchouten} holds and does not require any manual regularization.

\begin{example}\label{Countercounterexample}
Consider the integral functionals
$$F=\int q^{\dagger}qq_{x_1x_1}\,\Id x_1\quad\text{and}\quad G=\int q^{\dagger}_{x_2x_2}\cos q\,\Id x_2.$$
Let us show that equality~\eqref{EqLapSchouten} is satisfied for $F$ and $G$, that is,
\begin{equation}\label{EqNotHolds}
\Delta\left(\lshad F,G\rshad\right)=\lshad\Delta F,G\rshad+\lshad F,\Delta G\rshad,\qquad\GH(F)=1,
\end{equation}
in the frames of product-bundle geometry of variations and operational definitions of the BV-Laplacian $\Delta$ and
variational Schouten bracket $\lshad\,,\,\rshad$.

We have
\begin{multline*}
\lshad F,G\rshad = \iiiint \Id x_1\Id x_2\Id y_1\Id y_2
\Bigl\langle
\Bigl(\underbrace{q^\dagger q_{xx}+\tfrac{\Id^2}{\Id y_1^2}(q^\dagger q)}_{x_1}
\Bigr) \cdot \underbrace{\langle\delta s_2(y_1),\delta s_2^\dagger(y_2)\rangle}_{+1} \cdot \tfrac{\Id^2}{\Id y_2^2}
\bigl(\underbrace{\cos q}_{x_2}\bigr)
\Bigr\rangle \\
{}+ \iiiint \Id x_1\Id x_2\Id y_1\Id y_2 \Bigl\langle
\bigl(\underbrace{qq_{xx}}_{x_1}\bigr) \cdot \underbrace{\langle\delta s_2^\dagger(y_2),
\delta s_2(y_1)\rangle}_{-1} \cdot 
\bigl(\underbrace{-q^\dagger_{xx}\,\sin q}_{x_2}\bigr)\Bigr\rangle.
\end{multline*}
Therefore, one side of the expected equality is
\begin{multline*}
\Delta
\bigl(\lshad F,G\rshad\bigr)=\int\!\!\Id z_1\int\!\!\Id z_2\int\!\!\Id x_1\int\!\!\Id x_2\int\!\!\Id y_1\int\!\!\Id y_2
\underbrace{\langle\delta s_1(z_1),\delta s_1^\dagger(z_2)\rangle}_{+1}\cdot
\underbrace{\langle\delta s_2(y_1),\delta s_2^\dagger(y_2)\rangle}_{+1}\cdot{}\\
{}\cdot\Bigl\langle
\tfrac{\Id^2}{\Id z_1^2}\underbrace{(1)}_{x_1}\cdot\tfrac{\Id^2}{\Id y_2^2}\bigl(\underbrace{\cos q}_{x_2}\bigr) 
+ \underline{\underbrace{q_{xx}}_{x_1}\cdot\tfrac{\Id^2}{\Id y_2^2}\bigl(\underbrace{-\sin q}_{x_2}\bigr)} 
+ \tfrac{\Id^2}{\Id y_1^2}\underbrace{(1)}_{x_1}\cdot\tfrac{\Id^2}{\Id y_2^2}\bigl(\underbrace{\cos q}_{x_2}\bigr) 
+ \underline{\underline{\tfrac{\Id^2}{\Id y_1^2}\underbrace{(q)}_{x_1}\cdot\tfrac{\Id^2}{\Id y_2^2}
\bigl(\underbrace{-\sin q}_{x_2}\bigr)}} 
\Bigr\rangle\\
+
\int\!\!\Id z_1\int\!\!\Id z_2\int\!\!\Id x_1\int\!\!\Id x_2\int\!\!\Id y_1\int\!\!\Id y_2
\underbrace{\langle\delta s_1(z_1),\delta s_1^\dagger(z_2)\rangle}_{+1}\cdot
\underbrace{\langle\delta s_2^\dagger(y_2),\delta s_2(y_1)\rangle}_{-1}\cdot{}\\
\cdot\Bigl\langle
\underline{\underbrace{q_{xx}}_{x_1}\cdot\tfrac{\Id^2}{\Id z_2^2}\bigl(\underbrace{-\sin q}_{x_2}\bigr)}
+\underline{\underline{\tfrac{\Id^2}{\Id z_1^2}\underbrace{(q)}_{x_1}\cdot\tfrac{\Id^2}{\Id z_2^2}
\bigl(\underbrace{-\sin q}_{x_2}\bigr)}}
+\bigl(\underbrace{qq_{xx}}_{x_1}\bigr)\cdot\tfrac{\Id^2}{\Id z_2^2}\bigl(\underbrace{-\cos q}_{x_2}\bigr)
\Bigr\rangle\ .
\end{multline*}
The respective pairs of underlined terms cancel out and there remains only
\begin{multline}\label{EqLeft}
{}=\int{\cdots}\int\Id z_1\,\Id z_2\,\Id x_1\,\Id x_2\,\Id y_1\,\Id y_2
\underbrace{\langle\delta s_1(z_1),\delta s_1^\dagger(z_2)\rangle}_{+1}\cdot
\bigl\langle\bigl(\underbrace{qq_{xx}}_{x_1}\bigr)\cdot\tfrac{\Id^2}{\Id z_2^2}
\bigl(\underbrace{-\cos q}_{x_2}\bigr)\bigr\rangle\cdot\\
\underbrace{\langle\delta s_2^\dagger(y_2),\delta s_2(y_1)\rangle}_{-1}.
\end{multline}
On the other hand, we obtain that
\[
\Delta F=\iiint\Id z_1\Id z_2\Id x_1 
\underbrace{\langle\delta s_1(z_1),\delta s_1^\dagger(z_2)\rangle}_{+1}\cdot
\bigl\langle\underbrace{q_{xx}}_{x_1} + \tfrac{\Id^2}{\Id z_1^2}\underbrace{(q)}_{x_1}\bigr\rangle,
\]
which yields
\begin{multline*}
\lshad\Delta F,G\rshad=\int\!\!\Id z_1\int\!\!\Id z_2\int\!\!\Id x_1\int\!\!\Id x_2\int\!\!\Id y_1\int\!\!\Id y_2\,
\underbrace{\langle\delta s_1(z_1),\delta s_1^\dagger(z_2)\rangle}_{+1}\cdot{}\\
\Bigl\langle
\Bigl(\tfrac{\Id^2}{\Id y_1^2}\underbrace{(1)}_{x_1} +
\tfrac{\Id^2}{\Id z_1^2}\underbrace{(1)}_{x_1}\Bigr)\cdot
\tfrac{\Id^2}{\Id y_2^2}\bigl(\underbrace{\cos q}_{x_2}\bigr)
\Bigr\rangle\cdot
\underbrace{\langle\delta s_2(y_1),\delta s_2^\dagger(y_2)\rangle}_{+1}=0.
\end{multline*}
From the fact that the other BV-\/Laplacian,
\[
\Delta G=\iiint\,\Id z_1\,\Id z_2\,\Id x_2 
\underbrace{\langle\delta s_1(z_1),\delta s_1^\dagger(z_2)\rangle}_{+1}\cdot
\bigl\langle 
\tfrac{\Id^2}{\Id z_2^2}\bigl(\underbrace{-\sin q}_{x_2}\bigr)\bigr\rangle,
\]
does not contain~$q^\dagger$ so that the first half of the Schouten bracket~$\lshad F,\Delta G\rshad$ drops out,
we deduce that
\begin{multline}
\lshad F,\Delta G\rshad=\int{\cdots}\int\,\Id z_1\,\Id z_2\,\Id x_1\,\Id x_2\,\Id y_1\,\Id y_2
\underbrace{\langle\delta s_1(z_1),\delta s_1^\dagger(z_2)\rangle}_{+1}\cdot{}\\
\bigl\langle\bigl(\underbrace{qq_{xx}}_{x_1}\bigr)\cdot\tfrac{\Id^2}{\Id z_2^2}
\bigl(\underbrace{-\cos q}_{x_2}\bigr)\bigr\rangle\cdot
\underbrace{\langle\delta s_2^\dagger(y_2),\delta s_2(y_1)\rangle}_{-1}.
\label{EqRight}
\end{multline}
Consequently, the two sides of~\eqref{EqNotHolds}, namely, $\Delta\bigl(\lshad F,G\rshad\bigr)$ expressed by~\eqref{EqLeft}
and $\lshad\Delta F,G\rshad+\lshad F,\Delta G\rshad$ accumulated in~\eqref{EqRight}, match perfectly for the functionals~$F$
and~$G$ at~hand.\label{pEndCountercounterexample}
\end{example}

\begin{theor}\label{ThLaplaceOnSchouten}
Let $F$,\ $G\in\overline{\mathfrak{N}}^n(\pi_\BV,T\pi_{\BV})$ be two functionals. 
The Batalin\/--\/Vilkovisky Laplacian~$\Delta$ satisfies the relation
\begin{equation}
\Delta\bigl(\schouten{F,G}\bigr) = \schouten{\Delta F, G} + (-)^{\GH(F)-1}\schouten{F,\Delta G}.
\tag{\ref{EqLapSchouten}}
\end{equation}
\end{theor}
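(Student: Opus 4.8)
The plan is to bootstrap from Lemma~\ref{LemmaBaseLapSchouten}, which already establishes~\eqref{EqLapSchouten} for \emph{building blocks} $F,G\in\ov{H}^{n(1+k)}(\pi_{\BV}\times T\pi_{\BV}\times\ldots\times T\pi_{\BV})$, to arbitrary elements of $\ov{\gN}^n(\pi_{\BV},T\pi_{\BV})$, which are by construction formal sums of products of such blocks. Since both $\Delta$ and $\lshad\,,\,\rshad$ are $\Bbbk$-linear, bilinearity reduces the claim to the case in which $F$ and $G$ are each a single product of building blocks. I would then argue by induction on the total number of building-block factors occurring in $F$ and $G$; the base of the induction, one factor in each argument, is exactly Lemma~\ref{LemmaBaseLapSchouten}.

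For the inductive step I would peel off one factor. Using shifted-graded skew-symmetry~\eqref{EqSchoutenSkew} we may, at the cost of an overall sign, always arrange that the argument carrying the extra factor sits in the second slot, so write $G=G_1\cdot G_2$ with $G_1,G_2$ having strictly fewer factors. First expand the bracket by the Leibniz rule~\eqref{EqSchoutenOnProduct}, then apply $\Delta$ to the two resulting summands via the product rule~\eqref{EqLaplacianOnSchouten}, namely $\Delta(AB)=\Delta A\cdot B+(-)^{\GH(A)}\lshad A,B\rshad+(-)^{\GH(A)}A\cdot\Delta B$. In each of the emerging terms $\Delta\lshad F,G_i\rshad$ I would invoke the inductive hypothesis $\Delta\lshad F,G_i\rshad=\lshad\Delta F,G_i\rshad+(-)^{\GH(F)-1}\lshad F,\Delta G_i\rshad$. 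In parallel I expand the intended right-hand side $\lshad\Delta F,G_1G_2\rshad+(-)^{\GH(F)-1}\lshad F,\Delta(G_1G_2)\rshad$ by the same Leibniz and product rules. A direct comparison then shows that all \emph{derivation-type} terms, namely those of the shape $(\text{bracket})\cdot(\text{functional})$ or $(\text{functional})\cdot(\text{bracket})$ carrying at most one level of bracketing, match on the two sides.

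What survives this matching, on each side, are the three \emph{iterated}-bracket terms $\lshad\lshad F,G_1\rshad,G_2\rshad$, $\lshad G_1,\lshad F,G_2\rshad\rshad$, and $\lshad F,\lshad G_1,G_2\rshad\rshad$, each decorated by a sign built from $\GH(F),\GH(G_1),\GH(G_2)$. The crucial observation is that, after one further use of skew-symmetry~\eqref{EqSchoutenSkew} to put each double bracket into a common normal form, these three expressions are precisely the three summands of the shifted-graded Jacobi identity~\eqref{EqJacobiSchouten} for the triple $(F,G_1,G_2)$; hence their signed sum vanishes and the two sides of~\eqref{EqLapSchouten} agree. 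Finally, linearity in both arguments promotes the identity from single products to all formal sums in $\ov{\gN}^n(\pi_{\BV},T\pi_{\BV})$, closing the induction.

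I expect the main obstacle to be purely the sign bookkeeping: one must verify that the prefactors $(-)^{\GH(\cdot)}$ produced by~\eqref{EqSchoutenOnProduct} and~\eqref{EqLaplacianOnSchouten} combine, after the skew-symmetry rewrites, into exactly the weights $(-)^{(\GH(F)-1)(\GH(G_2)-1)}$, $(-)^{(\GH(F)-1)(\GH(G_1)-1)}$, and $(-)^{(\GH(G_1)-1)(\GH(G_2)-1)}$ demanded by~\eqref{EqJacobiSchouten}. Since all three iterated brackets share the common parity $\GH(F)+\GH(G_1)+\GH(G_2)-2$, this is a finite $\BBZ_2$ computation with no analytic content; the geometric work of ensuring graded-permutability of the underlying variations has already been discharged in Lemma~\ref{LAnyChoice} and in the proof of Lemma~\ref{LemmaBaseLapSchouten}.
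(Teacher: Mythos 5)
Your proposal is correct and follows essentially the same route as the paper's own proof: induction on the number of building blocks, with Lemma~\ref{LemmaBaseLapSchouten} as the base, the Leibniz rules~\eqref{EqSchoutenOnProduct} and~\eqref{EqLaplacianOnSchouten} plus the inductive hypothesis to expand both sides, and the shifted-graded Jacobi identity~\eqref{EqJacobiSchouten} (in its Leibniz form~\eqref{EqJacobiLeibniz}) to cancel the three residual iterated-bracket terms. Your explicit appeal to skew-symmetry~\eqref{EqSchoutenSkew} to justify placing the product in the second slot is a point the paper leaves implicit under ``without loss of generality,'' but it is the same argument.
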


\noindent%
In other words, the operator $\Delta$ is a graded derivation of the variational Schouten bracket~$\schouten{\,,\,}$.

\begin{proof}
We prove this by induction over the number of building blocks in each argument of the Schouten bracket in the left 
hand side of~\eqref{EqLapSchouten}. If $F$ and $G$ both belong 
to $\overline{H}^*(\pi_\BV\times T\pi_\BV\times\ldots\times T\pi_\BV)$, then Lemma~\ref{LemmaBaseLapSchouten} 
states the assertion, which is the base of induction. 
To make an inductive step, without loss of generality let us assume that the second argument 
of $\schouten{\,,\,}$ in~\eqref{EqLapSchouten} is a product of two elements 
from~$\overline{\mathfrak{N}}^n(\pi_\BV,T\pi_{\BV})$, 
each of them containing less multiples from~$\overline{H}^*(\pi_\BV\times T\pi_\BV\times\ldots T\pi_\BV)$ 
than the product. Denote such factors by $G$ and $H$ and recall that by Theorem~\ref{ThSchoutenOnProduct},
\[
\schouten{F,G\cdot H} = \schouten{F,G}\cdot H+ (-)^{(\GH(F)-1)\cdot\GH(G)}G\cdot\schouten{F,H}.
\]
Therefore, using Theorem~\ref{ThLaplaceOnProduct} we have that
\begin{align}
\Delta(\lshad &F,G\cdot H\rshad)\nonumber\\
={}& \Delta(\schouten{F,G})\cdot H + (-)^{\GH(F)+\GH(G)-1}[\![\schouten{F,G},H]\!] 
+ (-)^{\GH(F)+\GH(G)-1}\schouten{F,G}\cdot\Delta H\nonumber\\
& + (-)^{(\GH(F)-1)\GH(G)}\left(\Delta G\cdot\schouten{F,H} + (-)^{\GH(G)}[\![ G,\schouten{F,H}]\!] 
+ (-)^{\GH(G)}G\cdot\Delta(\schouten{F,H})\right).\nonumber
\intertext{Using the inductive hypothesis in the first and last terms of the right\/-\/hand side in the above formula, 
we continue the equality and obtain}
={}&\schouten{\Delta F,G}\cdot H + (-)^{\GH(F)-1}\schouten{F,\Delta G}\cdot H 
+ (-)^{\GH(F)+\GH(G)-1}[\![\schouten{F,G},H]\!] \nonumber\\
&+ (-)^{\GH(F)\GH(G)}[\![ G,\schouten{F,H}]\!] + (-)^{\GH(F)+\GH(G)-1}\schouten{F,G}\cdot\Delta H \nonumber\\
&+ (-)^{(\GH(F)-1)\GH(G)}\Delta G\cdot\schouten{F,H} + (-)^{\GH(F)\GH(G)}G\cdot\schouten{\Delta F,H} \nonumber\\
& + (-)^{\GH(F)\GH(G)+\GH(F)-1}G\cdot\schouten{F,\Delta H}.\label{EqIndStep}
\end{align}
On the other hand, let us expand the formula
\[
\schouten{\Delta F, G\cdot H} + (-)^{\GH(F)-1}\schouten{F,\Delta(G\cdot H)},
\]
which is the right hand side of~\eqref{EqLapSchouten} in the inductive claim. We obtain
\begin{align}
={}& \schouten{\Delta F, G}\cdot H + (-)^{(\GH(\Delta F)-1)\GH(G)}G\cdot\schouten{\Delta F,H} \nonumber\\
&+ (-)^{\GH(F)-1}[\![ F,\ \Delta G\cdot H  + (-)^{\GH(G)}\schouten{G,H} + (-)^{\GH(G)}G\cdot\Delta H\ ]\!] \nonumber\\
={}& \schouten{\Delta F, G}\cdot H + (-)^{\GH(F)\GH(G)}G\cdot\schouten{\Delta F,H} 
+ (-)^{\GH(F)-1}\schouten{F,\Delta G}\cdot H 
  \label{EqClaimExpand}\\
&+ (-)^{\GH(F)-1}(-)^{(\GH(F)-1)(\GH(G)-1)}\Delta G\cdot\schouten{F,H} 
+ (-)^{\GH(F)-1}(-)^{\GH(G)}[\![ F,\schouten{G,H}]\!] \nonumber\\
&+ (-)^{\GH(F)-1}(-)^{\GH(G)}\schouten{F,G}\cdot\Delta H 
+ (-)^{\GH(F)-1}(-)^{\GH(G)}(-)^{(\GH(F)-1)\GH(G)}G\cdot\schouten{F,\Delta H}. \nonumber
\end{align}
Comparing~\eqref{EqClaimExpand} with~\eqref{EqIndStep}, which was derived from the inductive hypothesis, 
we see that all terms match except for
\[
(-)^{\GH(F)+\GH(G)-1}[\![\schouten{F,G},H]\!] + (-)^{\GH(F)\GH(G)}[\![ G,\schouten{F,H}]\!]
\]
from~\eqref{EqIndStep} versus
\[
(-)^{\GH(F)+\GH(G)-1}[\![ F,\schouten{G,H}]\!]
\]
from~\eqref{EqClaimExpand}. However, these three terms constitute Jacobi's identity~\eqref{EqJacobiSchouten} 
for the variational Schouten bracket. Namely, we have that (cf.~\cite{Lorentz12})
\begin{equation}\label{EqJacobiLeibniz}
[\![ F,\schouten{G,H}]\!] = [\![\schouten{F,G},H]\!] + (-)^{(\GH(F)-1)(\GH(G)-1)}[\![ G,\schouten{F,H}]\!],
\end{equation}
so that by multiplying both sides of the identity by $(-)^{\GH(F)+\GH(G)-1}$, 
we fully balance~\eqref{EqIndStep} and~\eqref{EqClaimExpand}. This completes the inductive step and concludes the proof.
\end{proof}

\begin{lemma}\label{LBVOperatorDifferential}
The linear operator
\[
\Delta\colon\overline{H}^{n(1+k)}\bigl(\pi_\BV\times T\pi_\BV\times\ldots\times 
T\pi_\BV\bigr)\longrightarrow\overline{H}^{n(2+k)}\bigl(\pi_\BV\times T\pi_\BV\times\ldots\times T\pi_\BV\bigr)
\]
is a differential for every $k\geqslant0$.
\end{lemma}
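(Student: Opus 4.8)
The plan is to prove that $\Delta\circ\Delta=0$ by the very same skew-symmetry mechanism that drove the proof of Lemma~\ref{LemmaBaseLapSchouten}: the composite object $\Delta^2F$ is antisymmetric under the interchange of the two pairs of dual test shifts through which the two applications of $\Delta$ are realized, while by Lemma~\ref{LAnyChoice} it must be \emph{invariant} under that same interchange. An object equal to minus itself is zero. Crucially, this argument is available only because we never collapse $\Delta F$ to a number before applying $\Delta$ again.

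First I would form $\Delta(\Delta F)$ as an element of $\overline{H}^{n(2+k)}(\pi_\BV\times T\pi_\BV\times\ldots\times T\pi_\BV)$, keeping all copies of the base distinct. By Definition~\ref{DefBV} the inner $\Delta$ introduces a normalized, $\langle\,,\,\rangle$-dual pair $(\delta s_1;\delta s_1^{\dagger})$ of variations carried by its own copies $\bz_1,\bz_2$ of $M^n$, and produces the two graded derivations $\overrightarrow{\dd}/\dd q^{i}_{\sigma_1}$ (even) and $\overrightarrow{\dd}/\dd q^{\dagger}_{i,\sigma_2}$ (odd) falling on the density $f$; the outer $\Delta$ introduces a second normalized pair $(\delta s_2;\delta s_2^{\dagger})$ with fresh copies $\bby_1,\bby_2$ and the derivations $\overrightarrow{\dd}/\dd q^{j}_{\tau_1}$, $\overrightarrow{\dd}/\dd q^{\dagger}_{j,\tau_2}$. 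Since the product-bundle geometry of section~\ref{SecEL} keeps these four copies of $M^n$ separate until the couplings are reconfigured, all four derivations are graded-permutable (this is precisely the content of the Remark following Definition~\ref{DefBV}). Carrying out both layers of coupling reconfiguration exactly as in Definition~\ref{DefBV}, each double coupling contributing $(-1)\cdot(-1)=+1$ by~\eqref{EqChoiceSign} and each normalized pair contributing $+1$ by~\eqref{EqNormalize}, I would read off the numeric value (as in Corollary~\ref{CorTowardsConventional})
\[
\Delta^2 F \;\sim\; \sum_{i,j}\sum_{\substack{|\sigma_1|,|\sigma_2|\ge0\\ |\tau_1|,|\tau_2|\ge0}}
\int_M\dvol(\bx)\left(-\frac{\overrightarrow{\Id}}{\Id\bx}\right)^{\sigma_1\cup\sigma_2\cup\tau_1\cup\tau_2}
\frac{\overrightarrow{\dd}}{\dd q^{j}_{\tau_1}}\frac{\overrightarrow{\dd}}{\dd q^{\dagger}_{j,\tau_2}}
\frac{\overrightarrow{\dd}}{\dd q^{i}_{\sigma_1}}\frac{\overrightarrow{\dd}}{\dd q^{\dagger}_{i,\sigma_2}}\,f.
\]

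Next I would relabel the summation indices $i\rightleftarrows j$, $\sigma\rightleftarrows\tau$, swap the associated copies $\bz\rightleftarrows\bby$ of the base, and interchange the coefficients $\delta\bolds_1\rightleftarrows\delta\bolds_2$. By Lemma~\ref{LAnyChoice} (independence of $\Delta$ from the choice of the normalized variation) this leaves $\Delta^2 F$ unchanged. On the other hand, the relabeling reorders the four graded derivations: the two even derivations $\overrightarrow{\dd}/\dd q^{i}_{\sigma_1}$ and $\overrightarrow{\dd}/\dd q^{j}_{\tau_1}$ commute past everything, whereas the two odd derivations $\overrightarrow{\dd}/\dd q^{\dagger}_{i,\sigma_2}$ and $\overrightarrow{\dd}/\dd q^{\dagger}_{j,\tau_2}$ anticommute, so moving the block indexed by $j$ back to the front past the block indexed by $i$ costs exactly one factor of $-1$. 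Hence the relabeled expression equals $-\Delta^2F$; being also equal to $\Delta^2F$, it must vanish, which proves $\Delta^2=0$ on the building blocks $\overline{H}^{n(1+k)}$ for every $k\ge0$.

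The main obstacle will be to justify that the invariance (Lemma~\ref{LAnyChoice}) and the sign flip (odd--odd anticommutation) are applied to \emph{the same} unevaluated object. This is exactly where the geometry of the excerpt is indispensable: were one to pass to the numeric value $(\Delta F)(s)\in\Bbbk$ via Corollary~\ref{CorTowardsConventional} before applying $\Delta$ a second time, the second pair of variations would lose its own copies of $M^n$, the graded-permutability would be unavailable, and the cancellation could not be seen~--- this is the classical origin of the apparent divergencies. I would therefore insist that both coupling reconfigurations be performed before any on-the-diagonal restriction, precisely as in Lemma~\ref{LemmaBaseLapSchouten}. As a consistency check, the conclusion $\Delta^2=0$ is equivalent, through Theorem~\ref{ThLaplaceOnSchouten}, to Jacobi's identity~\eqref{EqJacobiSchouten} for $\lshad\,,\,\rshad$, which records the second half of~\eqref{EqDeltaSquareIntro}.
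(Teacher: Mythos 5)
Your proposal is correct and follows essentially the same route as the paper's own proof: write out $\Delta(\Delta H)$ with both normalized dual pairs and their separate copies of the base kept distinct, swap the two pairs (relabelling $\alpha\rightleftarrows\beta$, $\sigma\rightleftarrows\tau$, $\bby\rightleftarrows\bz$), invoke the independence of $\Delta$ from the choice of normalized variations to conclude invariance, and observe that the two ghost parity\/-\/odd derivations anticommute, so the integrand equals minus itself and vanishes. The paper itself flags this as the same mechanism as the second and third steps of the proof of Lemma~\ref{LemmaBaseLapSchouten}, exactly as you anticipate.
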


The proof of Lemma~\ref{LBVOperatorDifferential} is conceptually close to the second and third steps in the proof
of Lemma~\ref{LemmaBaseLapSchouten}. Namely, two normalized variations are swapped in an integral functional within
the image of $\Delta^2$, which yields an indistinguishable result of opposite sign.

\begin{proof}
Let $\delta\bolds_1$ and $\delta\bolds_2$ be normalized test shifts of a section $s\in\Gamma(\pi_{\BV})$, and let 
$H=\int h(\bx,[\bq],[\bq^{\dagger}])\cdot\dvol(\bx)$ be an integral functional. (It suffices to consider a simplified
picture $H\in\ov{H}^n(\pi_{\BV})$, not taking into account any built-in variations in the construction of $H$.)
By definition, we have that
\begin{multline*}
\Delta(\Delta H)(s)=\int_M\Id\bz_1\int_M\Id\bz_2\int_M\Id\bby_1\int_M\Id\bby_2\int_M\dvol(\bx)\cdot\\
\cdot
\Biggl\{
\left\langle
(\delta s^{\alpha}_1)
\left(\tfrac{\overleftarrow{\dd}}{\dd\bz_1}\right)^{\sigma_1}(\bz_1)\,
\overbrace{\vec{e}_{\alpha}(\bz_1),(-\vec{e}^{{}\,\dagger \alpha})(\bz_2)}^{-1}\,
(\delta s^{\dagger}_{1,\alpha})
\left(\tfrac{\overleftarrow{\dd}}{\dd\bz_2}\right)^{\sigma_2}(\bz_2)
\right\rangle
\underbrace{\left\langle
\vec{e}^{{}\,\dagger \alpha}(\bx),\vec{e}_{\alpha}(\bx)\right\rangle}_{-1}\\
{}\quad\left\langle
(\delta s^{\beta}_2)
\left(\tfrac{\overleftarrow{\dd}}{\dd\bby_1}\right)^{\tau_1}(\bby_1)\,
\overbrace{\vec{e}_{\beta}(\bby_1),(-\vec{e}^{{}\,\dagger \beta})(\bby_2)}^{-1}\,
(\delta s^{\dagger}_{2,\beta})
\left(\tfrac{\overleftarrow{\dd}}{\dd\bby_2}\right)^{\tau_2}(\bby_2)
\right\rangle
\underbrace{\left\langle
\vec{e}^{{}\,\dagger \beta}(\bx),\vec{e}_{\beta}(\bx)\right\rangle}_{-1}\\
\frac{\overrightarrow{\dd}}{\dd q^{\alpha}_{\sigma_1}}
\frac{\overrightarrow{\dd}}{\dd q^{\dagger}_{\alpha,\sigma_2}}
\frac{\overrightarrow{\dd}}{\dd q^{\beta}_{\tau_1}}
\frac{\overrightarrow{\dd}}{\dd q^{\dagger}_{\beta,\tau_2}}
h(\bx,[\bq],[\bq^{\dagger}])
\left.\Biggr\}\right|_{j^{\infty}_{\bx}(s)}.
\end{multline*}
By exchanging the integrand's upper two lines and then relabelling $\alpha\rightleftarrows\beta$,

$\sigma\rightleftarrows\tau$ so that $\delta s_1^{\alpha}\rightleftarrows\delta s_2^{\beta}$ and
$\delta s^{\dagger}_{1,\alpha}\rightleftarrows\delta s^{\dagger}_{2,\beta}$, and by swapping the reference
$\bby\rightleftarrows\bz$ to copies of the base manifold $M^n$, we almost recover the initial expression (which should
be the case), yet the order in which the parity-odd partial derivatives follow is inverse,
$$
\frac{\overrightarrow{\dd}}{\dd q^{\dagger}_{\alpha,\sigma_2}}
\circ\frac{\overrightarrow{\dd}}{\dd q^{\dagger}_{\beta,\tau_2}}
\longmapsto
\frac{\overrightarrow{\dd}}{\dd q^{\dagger}_{\beta,\tau_2}}
\circ\frac{\overrightarrow{\dd}}{\dd q^{\dagger}_{\alpha,\sigma_2}}
=-\frac{\overrightarrow{\dd}}{\dd q^{\dagger}_{\alpha,\sigma_2}}
\circ\frac{\overrightarrow{\dd}}{\dd q^{\dagger}_{\beta,\tau_2}}.
$$
Therefore the integrand of functional $\Delta^2H$ vanishes, which proves the assertion.
\end{proof}

\begin{theor}\label{ThBVDifferential}
The Batalin\/--\/Vilkovisky Laplacian~$\Delta$ is a differential\textup{:} 
for all $H \in \overline{\mathfrak{N}}^n(\pi_\BV$,\ $T\pi_{\BV})$ we have 
\[
\Delta^2(H) = 0.
\]
\end{theor}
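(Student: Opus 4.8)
The plan is to bootstrap from the two structural results already in hand—that $\Delta$ squares to zero on the building blocks (Lemma~\ref{LBVOperatorDifferential}) and that $\Delta$ is a graded derivation of the antibracket (Theorem~\ref{ThLaplaceOnSchouten}, i.e.\ relation~\eqref{EqLapSchouten})—together with the deviation formula~\eqref{EqLaplacianOnSchouten} of Theorem~\ref{ThLaplaceOnProduct}. Since $\Delta$ and hence $\Delta^2$ are linear, it suffices to establish $\Delta^2=0$ on a single product $F_1\cdots F_\ell$ of building blocks from $\overline{H}^{n(1+k)}(\pi_\BV\times T\pi_\BV\times\ldots\times T\pi_\BV)$, because the general element of $\overline{\mathfrak{N}}^n(\pi_\BV,T\pi_{\BV})$ is a formal sum of such products. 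I would run an induction on the number $\ell$ of factors, the base case $\ell=1$ being exactly Lemma~\ref{LBVOperatorDifferential}.

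For the inductive step I would split the product as $F\cdot G$, where $F$ is a single building block (so $\Delta^2 F=0$ by the base case) and $G$ is the product of the remaining $\ell-1$ factors (so $\Delta^2 G=0$ by the inductive hypothesis). Applying the deviation formula~\eqref{EqLaplacianOnSchouten} once gives
\[
\Delta(F\cdot G)=\Delta F\cdot G+(-)^{\GH(F)}\lshad F,G\rshad+(-)^{\GH(F)}F\cdot\Delta G,
\]
and applying it a second time to each of the three summands—using $\GH(\Delta F)=\GH(F)+1$ and the derivation identity~\eqref{EqLapSchouten} on the middle term—produces six surviving contributions: two multiples of $\lshad\Delta F,G\rshad$, two of $\lshad F,\Delta G\rshad$, and two of $\Delta F\cdot\Delta G$, while every term carrying $\Delta^2 F$ or $\Delta^2 G$ drops out by hypothesis. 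The whole argument then collapses to checking that within each of these three pairs the accumulated sign factors are opposite. For the pair $\Delta F\cdot\Delta G$, for instance, the coefficients come out as $(-)^{\GH(F)+1}$ and $(-)^{\GH(F)}$, whose sum vanishes; the two bracket pairs cancel in exactly the same manner. Hence $\Delta^2(F\cdot G)=0$, and the induction closes.

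The second-stage computation is entirely routine once these structural inputs are granted, so the genuine content lies upstream: the main obstacle is not this final bookkeeping but the graded-derivation property~\eqref{EqLapSchouten} (Theorem~\ref{ThLaplaceOnSchouten}) and the closedness of $\Delta$ on building blocks (Lemma~\ref{LBVOperatorDifferential}), both of which rely on the product-bundle geometry rendering the iterated variations graded-permutable. I would emphasize that the sign cancellations here are precisely the algebraic shadow of the geometric fact that the two test shifts absorbed by $\Delta$ live on distinct copies of $M^n$ and may therefore be swapped; it is the mutual consistency of~\eqref{EqDeviationDerivationIntro}, \eqref{EqZimes}, and $\Delta^2=0$ on generators that propagates the differential property to all of $\overline{\mathfrak{N}}^n(\pi_\BV,T\pi_{\BV})$.
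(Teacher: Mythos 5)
Your proposal is correct and follows essentially the same route as the paper: induction on the number of building blocks with Lemma~\ref{LBVOperatorDifferential} as the base, a double application of the deviation formula~\eqref{EqLaplacianOnSchouten} to $F\cdot G$, the derivation identity~\eqref{EqLapSchouten} on the bracket term, and pairwise sign cancellation using $\GH(\Delta F)=\GH(F)-1$ in $\BBZ_2$. The only cosmetic difference is that the paper lets both factors $F$ and $G$ be arbitrary elements with fewer building blocks rather than insisting $F$ be a single block, which changes nothing.
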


\begin{proof}
We prove Theorem~\ref{ThBVDifferential} by induction over the number of building blocks 
from~$\overline{H}^*\bigl(\pi_\BV\times T\pi_\BV\times\ldots\times T\pi_\BV\bigr)$ 
in the argument $H \in \overline{\mathfrak{N}}^n(\pi_\BV,T\pi_{\BV})$ of~$\Delta^2$. 
If $H \in \overline{H}^*\bigl(\pi_\BV\times T\pi_\BV\times\ldots\times T\pi_\BV\bigr)$ itself is an integral functional, 
then by Lemma~\ref{LBVOperatorDifferential}
there remains nothing to prove. Suppose now that $H = F\cdot G$ for 
some $F,G \in \overline{\mathfrak{N}}^n(\pi_\BV,T\pi_{\BV})$. Then Theorem~\ref{ThLaplaceOnProduct} yields that
\begin{align*}
\Delta^2&(F\cdot G) = \Delta\left(\Delta F\cdot G + (-)^{\GH(F)}\schouten{F,G} + 
 (-)^{\GH(F)}F\cdot\Delta G\right).
\intertext{Using Theorem~\ref{ThLaplaceOnProduct} again and also Theorem~\ref{ThLaplaceOnSchouten}, 
we continue the equality:}
={}& \Delta^2F\cdot G + (-)^{\GH(\Delta F)}\schouten{\Delta F,G} 
+ (-)^{\GH(\Delta F)}\Delta F\cdot\Delta G \\
{}&{}+(-)^{\GH(F)}\schouten{\Delta F,G} + (-)^{\GH(F)}(-)^{\GH(F)-1}\schouten{F,\Delta G}
\\
{}&{}+(-)^{\GH(F)}\Delta F\cdot\Delta G 
+(-)^{\GH(F)}(-)^{\GH(F)}\schouten{F,\Delta G} + (-)^{\GH(F)}(-)^{\GH(F)}F\cdot\Delta^2G.
\end{align*}
By the inductive hypothesis, the first and last terms in the above formula vanish; 
taking into account that $\GH(\Delta F) = \GH(F)-1$ in $\BBZ_2$, the terms with $\Delta F\cdot\Delta G$ 
cancel against each other, as do the terms containing $\schouten{\Delta F,G}$ and $\schouten{F,\Delta G}$. 
The proof is complete.
\end{proof}

\section{The quantum master\/-\/equation}\label{SecFeynman}
\subsection{The Laplace equation}\label{SecMaster}
In this section we inspect the conditions upon functionals $F \in \overline{\mathfrak{N}}^n(\pi_\BV,T\pi_\BV)$ under which the Feynman path integrals $\int_{\Gamma(\bzeta^0)}[Ds]\,F([s],[s^\dagger])$ are (infinitesimally) independent 
of the unphysical 
anti\/-\/objects $s^\dagger \in \Gamma( 
\boldsymbol{\zeta}^1 
)$. The derivation of such a condition (see equation~\eqref{EqLaplace} below) relies on an extra assumption of the translation invariance of a measure in the path integral. It must be noted, however, that we do not define Feynman's integral here and do not introduce that measure which essentially depends on the agreement about the classes of `admissible' sections~$\Gamma(\pi)$ or~ $\Gamma(\boldsymbol{\zeta}^{(0|1)}   
)$. Consequently, our reasoning is to some extent heuristic.

The basics of path integration, which we recall here for consistency, are standard: they illustrate how the geometry of the BV-\/Laplacian works in practice. We draw the experts' attention only to the fact that in our notation $\Psi$ is not the gauge fixing fermion $\boldsymbol\Psi$ such that the odd\/-\/component's section $s^\dagger\in\Gamma(\boldsymbol{\zeta}^1)$ 
is the restriction of $\delta\boldsymbol\Psi/\delta q$ 
to the jet of a section for~$\boldsymbol{\zeta}^0$\,;
instead, we let $\Psi$ determine the infinitesimal shift $\dot{q}^\dagger = \delta\Psi/\delta q$ of coordinates along the fibre's parity\/-\/odd half. 
We also note that the preservation of parity is not mandatory here and thus an even-parity $\Psi \in \overline{H}^n(\boldsymbol{\zeta}^0) \hookrightarrow \overline{H}^n(\pi_\BV)$ is a legitimate choice.

Let $F=\int f(\bx,\bq,\bq^\dagger)\,\dvol(\bx) 
\in \overline{H}^n(\pi_\BV)$ be a functional;
here and in what follows we proceed over the building blocks of 
elements from~$\overline{\mathfrak{N}}^n(\pi_\BV,T\pi_\BV)$
by the graded Leibniz rule.
Let $\Psi=\int\psi(\bby,\bq)\,\dvol(\bby)\in \overline{H}^n(\boldsymbol{\zeta}^0) \hookrightarrow \overline{H}^n(\pi_\BV)$ be an integral functional which, by assumption, is constant along 
ghost parity\/-\/odd variables: $\Psi(s^\alpha,s^\dagger_\beta) = \Psi(s^\alpha,t^\dagger_\beta)$ for any sections $\{s^\alpha\} \in \Gamma(\bzeta^0)$ and $\{s^\dagger_\beta\},\{t^\dagger_\beta\} \in \Gamma(
\bzeta^1)$. We investigate under which conditions the path integral $\int_{\Gamma(\bzeta^0)}[Ds^\alpha]\,F(s^\alpha,s^\dagger_\beta) \colon \Gamma(
\bzeta^1
) \to \Bbbk$ is infinitesimally independent of a choice of the anti\/-\/objects:
\begin{align}\label{eq:PathIntegralIndependent}
\diftat{\varepsilon^\dagger}0\int_{\Gamma(\bzeta^0)}[Ds^\alpha]\,
F\Bigl(s^\alpha,s^\dagger_\beta + \varepsilon^\dagger\,\frac{\vec{\delta}\psi}
{\delta q^\beta}
\bigg|_{s^\alpha
}\Bigr) = 0 \quad\text{for all $s^\dagger \in \Gamma(\bzeta^1
)$.}
\end{align}
Note that this formula makes sense because the bundles $\bzeta^0$ and 
$\bzeta^1
$ are dual so that a variational covector in the geometry of $\bzeta^0$ acts as a shift vector in the geometry of $\bzeta^1
$.  
The left\/-\/hand side of~\eqref{eq:PathIntegralIndependent} equals
\[
\int_{\Gamma(\bzeta^0)}[Ds^\alpha]\int_M\dvol(\bx)\,\frac{\overrightarrow{\delta}\!\psi}
{\delta q^\beta}(\bx,\bq)\bigr|_{j^\infty_\bx(s^\alpha
)} \cdot
\frac{\overleftarrow{\delta}\!f}{\delta q^\dagger_\beta}(\bx,\bq,\bq^\dagger)\bigr|_{j^\infty_\bx(s^\alpha,s^\dagger_\gamma)}, \qquad s^\dagger\in\Gamma(\bzeta^1).
\]
Take any auxiliary section $\delta\bolds = (\delta s^\alpha, \delta s^\dagger_\beta) \in 
\Gamma\bigl(T\bzeta^{(0|1)}
)\bigr)$ normalized by $\delta s^\alpha(x)\cdot\delta 
s^\dagger_\alpha(x) \equiv 1$ at every $\bx\in M^n$ for each $\alpha = 1,\dots,m + m_1+\cdots+m_\lambda = N$ 
and blow up the scalar integrand to a pointwise contraction of dual object taking their values in the 
fibres $T_{(\bx,\phi(\bx),s(\bx))}V_\bx$ and $T_{(\bx,\phi(\bx),s^\dagger(\bx))}\Pi V^\dagger_{\bx}$ of $T(\pi_\BV
)$ 
over $\phi(x)
$: for $s = (s^\alpha, s^\dagger_\beta)$ we have
\begin{multline*}
\int_M\dvol(\bx)
\,\left.\left(\frac{\overrightarrow{\delta}\!\psi}{\delta q^\alpha}
\cdot
\frac{\overleftarrow{\delta}\!f}{\delta q^\dagger_\alpha}\right)\right|
_{j^\infty_\bx(s)}\\
{}= 
\int_M\dvol(\bx_1)\int_M\dvol(\bx_2)\int_M\Id\bby_1\int_M\Id\bby_2
\left.\left(\psi(\bx_1,\bq)\frac{\overleftarrow{\dd}}{\dd q^{j_1}_{\sigma_1}}\right)\right|_{j^\infty_{\bx_1}(s)}\mbox{\hbox to 20mm {{ }\hfil { }}} \\
{}\cdot\underline{\langle\vec{e}^{\,\dagger j_1}(\bx_1)|}\,
\bigl\langle 
\left(\tfrac{\overleftarrow{\dd}}{\dd\bby_1}\right)^{\sigma_1} \delta s^{i_1}(\bby_1)\,
\vec{e}_{i_1}(\bby_1), -\vec{e}^{\,\dagger i_2}(\bby_2)\,
\delta s^\dagger_{i_2}(\bby_2)\left(\tfrac{\overrightarrow{\dd}}{\dd\bby_2}\right)^{\sigma_2}\bigr\rangle\,
\underline{|\vec{e}_{j_2}(\bx_2)\rangle} \\
{}\cdot \left.\left(\frac{\overrightarrow{\dd}}{\dd q^\dagger_{j_2,\sigma_2}} f(\bx_2,\bq,\bq^\dagger)\right)\right|_{j^\infty_{\bx_2}(s)}.
\end{multline*}
In fact, the integrand refers to a definition of the evolutionary vector field $\bQ^\Psi$ such that 
$\bQ^\Psi(F) \cong \schouten{\Psi, F}$ modulo integration by parts in the building blocks of $F$, 
cf.~\cite{Lorentz12}. Due to a special choice of the dependence of~$\Psi$ on~$s$ only, this is indeed the Schouten bracket~$\schouten{\Psi, F}$.

To rephrase the indifference of the path integral to a choice of $\Psi$ in terms of an equation upon the functional $F$ 
alone, we perform integration by parts in Feynman's integral. For this we employ the translation 
invariance $[Ds] = [D(s-\mu\cdot\delta s)]$ of the functional measure.

\begin{lemma}\label{LTI}
Let $H=\int h(\bx,\bq,\bq^\dagger)\,\dvol(\bx) \in \overline{H}^n(\pi_\BV)\subset
\overline{\mathfrak{N}}^n(\pi_\BV,T\pi_\BV)$ be an integral functional and $\delta s \in \Gamma(T\bzeta^0) \hookrightarrow \Gamma\bigl(T\bzeta^{(0|1)}\bigr)$ be a shift. 
Then we have that
\[
\int_{\Gamma(\bzeta^0)}[Ds^\alpha]\int_M\dvol(\bx)\,\delta s^\nu(\bx)\cdot
\frac{\overleftarrow{\delta}\!h}{\delta q^\nu}\bigg|_{j^\infty_\bx(s^\alpha,s^\dagger_\beta)} = 0,
\]
where the section $s^\dagger \in \Gamma(\bzeta^1)$ is a parameter.
\end{lemma}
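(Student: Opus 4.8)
The plan is to read Lemma~\ref{LTI} as the vanishing of a path integral of a total functional\/-\/derivative, i.e.\ as the Schwinger\/--\/Dyson identity, now made transparent by the intrinsically regularised response formula~\eqref{EqVariation}. First I would introduce the one\/-\/parameter family
\[
I(\mu)=\int_{\Gamma(\bzeta^0)}[Ds^\alpha]\,H\bigl(s^\alpha+\mu\,\overleftarrow{\delta}\!s^\alpha,\,s^\dagger_\beta\bigr),
\]
in which only the ghost parity\/-\/even coordinates are translated along the shift $\delta s\in\Gamma(T\bzeta^0)$ while the anti\/-\/objects $s^\dagger$ stay a fixed parameter. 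The assertion will follow as soon as I show that $I$ does not depend on $\mu$ and then differentiate at $\mu=0$.

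To prove that $I$ is constant I would change the integration variable by $s^\alpha\mapsto s^\alpha-\mu\,\delta s^\alpha$ and invoke the postulated translation invariance $[Ds]=[D(s-\mu\cdot\delta s)]$ of the functional measure: the measure is then unchanged and the shift in the argument of $H$ is reabsorbed, so that $I(\mu)=I(0)$ for all $\mu$. Hence $\diftat{\mu}0 I(\mu)=0$, and since the measure carries no $\mu$\/-\/dependence the derivative passes under the integral sign,
\[
0=\int_{\Gamma(\bzeta^0)}[Ds^\alpha]\,\diftat{\mu}0 H\bigl(s^\alpha+\mu\,\overleftarrow{\delta}\!s^\alpha,\,s^\dagger_\beta\bigr).
\]
It remains to evaluate the inner derivative by formula~\eqref{EqVariation} for $H=\int h\,\dvol$; because $\delta s$ touches only the $\bq$\/-\/half, the response reduces to $\int_M\dvol(\bx)\,\delta s^\nu(\bx)\cdot\bigl(\overleftarrow{\delta}h/\delta q^\nu\bigr)\big|_{j^\infty_\bx(s)}$, with the $\bq^\dagger$\/-\/contribution absent since $\delta s^\dagger=0$. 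Substituting this and renaming the integration variable yields precisely the stated identity.

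The main obstacle here is not analytic but foundational. The path integral and its functional measure are not defined in this text (as the surrounding discussion concedes), so the step $I(\mu)=I(0)$ rests on granting translation invariance of an undefined measure; this is where the reasoning is genuinely heuristic. What \emph{is} rigorous, and what the section is meant to emphasise, is the computation of the inner variation: it is carried out by the fully geometric, self\/-\/regularising formula~\eqref{EqVariation}, so that $\overleftarrow{\delta}h/\delta q^\nu$ is a bona fide section of the relevant cotangent bundle and no appeal to ``$\boldsymbol{\delta}(0)=0$'' or to any \textsl{ad hoc} cancellation of infinities ever enters. Once translation invariance is admitted, the remainder of the proof is immediate.
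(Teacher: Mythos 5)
Your proposal is correct and is essentially the paper's own proof read in the opposite direction: the paper starts from $0=\diftat{\mu}{0}\int[Ds^\alpha]\,H(s^\alpha,s^\dagger_\beta)$ and uses $[Ds]=[D(s-\mu\,\delta s)]$ to shift the argument of $H$, whereas you start from the shifted integral $I(\mu)$ and use the same translation invariance to show it is constant; both then differentiate under the integral and evaluate the inner derivative by the response formula~\eqref{EqVariation}. Your closing caveat about the heuristic status of the measure matches the paper's own admission in section~\ref{SecMaster}.
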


\begin{proof}
Indeed,
\begin{align*}
0 &= \diftat\mu0\int_{\Gamma(\bzeta^0)}[Ds^\alpha]\,H(s^\alpha,s^\dagger_\beta),
\intertext{because the integral contains no parameter $\mu \in \Bbbk$. We continue the equality:}
&= \diftat\mu0\int_{\Gamma(\bzeta^0)}[D(s^\alpha-\mu\,\delta s^\alpha)]\,H(s^\alpha,s^\dagger_\beta) \\
&= \diftat\mu0\int_{\Gamma(\bzeta^0)}[Ds^\alpha]\,H(s^\alpha + \mu\,\delta s^\alpha,s^\dagger_\beta)
 = \int_{\Gamma(\bzeta^0)}[Ds^\alpha]\diftat\mu0H(s^\alpha + \mu\,\delta s^\alpha,s^\dagger_\beta),
\end{align*}
which yields the helpful formula in the lemma's assertion.
\end{proof}

Returning to the functionals $\Psi$ and $F$ and denoting $G(s) := \diftat\ell0 F(s + \ell\cdot\overleftarrow{\delta s}^\dagger
)$, we use the Leibniz rule for the derivative of $H = \Psi\cdot G$:
\[
\diftat\mu0(\Psi\cdot G)(s + \mu\cdot\overleftarrow{\delta s}
)=
\diftat\mu0(\Psi)(s + \mu\cdot\overleftarrow{\delta s})\cdot G(s)+
\Psi(s)\cdot\diftat\mu0(G)(s + \mu\cdot\overleftarrow{\delta s}).
\]
Because the path integral over $[Ds^\alpha]$ of the entire expression vanishes by Lemma~\ref{LTI} in which we were ready to proceed by the Leibniz rule over building blocks, 
we infer that the path integrals of the two terms are opposite. 
Now take the traces over indexes in both variations. 
The integral of the first term equals the initial expression for the path integral containing~$F$, i.\,e., 
the left\/-\/hand side of equation~\eqref{eq:PathIntegralIndependent}. 
Consequently, if
\begin{align}\label{ObserveInPath}
\int_{\Gamma(\bzeta^0)}[Ds^\alpha]\,\Psi(s^\alpha)\cdot\Delta F(s^\alpha,s^\dagger_\beta) = 0
\end{align}
for $\{s^\dagger_\beta\} \in \Gamma(\bzeta^1
)$ and for all $\Psi \in \overline{H}^n(\bzeta^0) 
\hookrightarrow \overline{H}^n(\pi_\BV)$, then the path integral over $F$ is infinitesimally independent of a section 
$\{s^\dagger_\beta\} \in \Gamma(\bzeta^1
)$.

The condition
\begin{equation}\label{EqLaplace}
\Delta F = 0
\end{equation}
is sufficient for equation~\eqref{ObserveInPath}, and therefore equation~\eqref{eq:PathIntegralIndependent}, to hold. By specifying a class $\Gamma(\pi_\BV
)$ of admissible sections of the BV-bundle for a concrete field model, and endowing that space of sections with a suitable metric, one could reinstate a path integral analogue of the main lemma in the calculus of variations and then argue that the condition $\Delta F = 0$ is also necessary.

Summarizing, whenever equation~\eqref{EqLaplace} holds, one can assign arbitrary admissible values to the odd\/-\/parity coordinates; for example, one can let 
$s^\dagger_\beta(\bx) = 
\delta\boldsymbol{\psi}/\delta q^\beta\bigr|_{j^\infty_\bx(s^\alpha)}
$ for a gauge\/-\/fixing integral $\boldsymbol{\Psi}=\int\boldsymbol{\psi}(\bx,\bq)\,\dvol(\bx) \in \overline{H}^n(\bzeta^0)$. This choice is reminiscent of the substitution principle, see~\cite{Lorentz12} and~\cite{Olver}.

Laplace's equation~\eqref{EqLaplace} ensures the infinitesimal independence from non-physical anti\/-\/objects 
for path integrals of functionals over physical fields -- not only in the classical BV-geometry of the 
bundle $\pi_\BV
$, but also in the quantum setup, whenever all objects are tensored 
with formal power series $\Bbbk[[\hbar,\hbar^{-1}]]$ in the Planck constant $\hbar$. 
It is accepted that each quantum field $s^\hbar$ contributes to the expectation value of a functional~$\mathcal{O}^\hbar$ 
with the factor~$\exp({\boldi}S^\hbar(s^\hbar)/{\hbar})$, where~$S^\hbar$ is the quantum BV-\/action of the model. 
Solutions $\mathcal{O}^\hbar$ of the equation $\Delta\bigl(\mathcal{O}^\hbar\cdot\exp({\boldi}S^\hbar/{\hbar})\bigr) = 0$ are 
the~\textsl{observables}. In particular, the postulate that the unit $1 \colon s^\hbar \mapsto 1 \in \Bbbk$ is averaged 
to unit by the Feynman integral of $1\cdot\exp({\boldi}S^\hbar(s^\hbar)/{\hbar})$ over the space of quantum 
fields $s^\hbar$ normalizes the integration measure and constrains the quantum BV-action by the quantum 
master\/-\/equation (see, e.g.,~\cite{BV,BRST,GitmanTyutin,HenneauxTeitelboim,Zinn-Justin:CriticalPhenomena}). 

\begin{proposition}\label{thm:QME}
Let $S^\hbar$ be the even quantum BV-action \textup{(}i.\,e., let it have a density that has an even number of ghost parity\/-\/odd 
coordinates in each of its terms\textup{)}. If the identity 
\[
\Delta\left(\exp\bigl(\tfrac{\boldi}{\hbar} S^\hbar\bigr)\right) = 0
\]
holds, then $S^\hbar$ satisfies the quantum master\/-\/equation\textup{:}
\begin{equation}\label{QME}
\boldi\hbar\,\Delta S^\hbar = \tfrac12\schouten{S^\hbar,S^\hbar}.
\end{equation}
\end{proposition}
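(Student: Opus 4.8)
The plan is to apply the BV-Laplacian to the formal series $\exp(\tfrac{\boldi}{\hbar}S^\hbar)$ term by term, using the product rule \eqref{EqLaplacianOnSchouten} of Theorem~\ref{ThLaplaceOnProduct} together with the graded Leibniz rule \eqref{EqSchoutenOnProduct} of Theorem~\ref{ThSchoutenOnProduct}, and then to resum. Abbreviate $W=\tfrac{\boldi}{\hbar}S^\hbar$; since $S^\hbar$ is even by hypothesis we have $\GH(W)=0$, so that every sign $(-)^{\GH(W)}$ in \eqref{EqLaplacianOnSchouten} and every sign $(-)^{(\GH(W)-1)\GH(W)}$ in \eqref{EqSchoutenOnProduct} equals $+1$. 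This is exactly why the even-parity assumption is imposed: it trivialises all the sign bookkeeping below, and it keeps $W$ central in the graded-commutative product of $\ov{\gN}^n(\pi_{\BV},T\pi_{\BV})$.

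First I would record two auxiliary power-rule identities. Iterating \eqref{EqSchoutenOnProduct} in the second slot gives, by induction on $k\ge1$,
\begin{equation*}
\schouten{W,W^{k}}=k\,\schouten{W,W}\cdot W^{k-1}.
\end{equation*}
Feeding this into the product rule \eqref{EqLaplacianOnSchouten} applied to $W^{n}=W\cdot W^{n-1}$ produces the recursion $\Delta(W^{n})=\Delta W\cdot W^{n-1}+(n-1)\schouten{W,W}\cdot W^{n-2}+W\cdot\Delta(W^{n-1})$, which I would solve by induction on $n$ to obtain
\begin{equation*}
\Delta(W^{n})=n\,W^{n-1}\,\Delta W+\binom{n}{2}\,W^{n-2}\,\schouten{W,W},
\end{equation*}
the only combinatorial input being $(n-1)+\binom{n-1}{2}=\binom{n}{2}$.

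Next I would sum over $n$. As $\Delta$ is $\Bbbk[[\hbar,\hbar^{-1}]]$-linear and acts coefficientwise on $\exp(W)=\sum_{n\ge0}\tfrac1{n!}W^{n}$, the two resulting series reassemble, via $\sum_{n\ge1}\tfrac{n}{n!}W^{n-1}=\exp(W)$ and $\sum_{n\ge2}\tfrac1{n!}\binom{n}{2}W^{n-2}=\tfrac12\exp(W)$, into
\begin{equation*}
\Delta\bigl(\exp(W)\bigr)=\exp(W)\cdot\Bigl(\Delta W+\tfrac12\schouten{W,W}\Bigr).
\end{equation*}
Restoring $W=\tfrac{\boldi}{\hbar}S^\hbar$, so that $\Delta W=\tfrac{\boldi}{\hbar}\Delta S^\hbar$ and $\schouten{W,W}=-\tfrac1{\hbar^{2}}\schouten{S^\hbar,S^\hbar}$, the hypothesis $\Delta(\exp(W))=0$ reads $\exp(W)\cdot\bigl(\tfrac{\boldi}{\hbar}\Delta S^\hbar-\tfrac1{2\hbar^{2}}\schouten{S^\hbar,S^\hbar}\bigr)=0$. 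Since $\exp(\tfrac{\boldi}{\hbar}S^\hbar)$ is an even unit of the formal power series ring, with inverse $\exp(-\tfrac{\boldi}{\hbar}S^\hbar)$, I may cancel it and multiply through by $\hbar^{2}$ to arrive at $\boldi\hbar\,\Delta S^\hbar=\tfrac12\schouten{S^\hbar,S^\hbar}$, which is~\eqref{QME}.

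The only genuinely delicate point is the passage to the infinite series: one must justify that $\Delta$ and $\schouten{\,,\,}$, defined on $\ov{H}^n(\pi_{\BV})$ and extended by the graded Leibniz rule to $\ov{\gN}^n(\pi_{\BV},T\pi_{\BV})$, commute with the $n\to\infty$ summation and that cancelling $\exp(W)$ is legitimate. I expect this to be handled by the $\hbar$-graded completion: working in $\ov{\gN}^n(\pi_{\BV},T\pi_{\BV})\otimes\Bbbk[[\hbar,\hbar^{-1}]]$ and reading the identity order by order in $\hbar$ renders every manipulation finite, so the resummation is valid and the division by $\exp(W)$ is the inversion of a unit. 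The two inductions and the binomial bookkeeping are then routine, the signs having been eliminated by the evenness of $S^\hbar$.
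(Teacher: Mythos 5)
Your proposal is correct and follows essentially the same route as the paper's own proof in~\ref{AppProveQME}: the same two power-rule lemmas ($\schouten{W,W^k}=k\schouten{W,W}W^{k-1}$ and $\Delta(W^n)=n\,\Delta W\cdot W^{n-1}+\tfrac12 n(n-1)\schouten{W,W}\cdot W^{n-2}$, your $\binom{n}{2}$ being the paper's $\tfrac12 n(n-1)$), the same term-by-term resummation of $\exp(W)$, and the same cancellation of the exponential factor, which you in fact justify slightly more explicitly than the paper does by invoking its invertibility in the formal power-series setting.
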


\begin{proposition}\label{PropPrepareGaugeQME}
If an even functional $\mathcal{O}$ and the quantum BV-\/action $S^\hbar$ are such that 
$\Delta\bigl(\mathcal{O}\exp(\boldi S^\hbar/\hbar)\bigr) = 0$ and $\Delta\bigl(\exp(\boldi S^\hbar/\hbar)\bigr) = 0$ hold, 
respectively, then $\mathcal{O}$ satisfies
\begin{equation}\label{eq:def-Omega}
\Omega^\hbar(\mathcal{O}) \mathrel{{:}{=}} - \boldi\hbar\, \Delta \mathcal{O} +
\schouten{S^\hbar, \mathcal{O}}  = 0.
\end{equation}
\end{proposition}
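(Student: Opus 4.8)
The plan is to expand $\Delta\bigl(\mathcal{O}\exp(\tfrac{\boldi}{\hbar}S^\hbar)\bigr)$ by the deviation\/-\/from\/-\/derivation rule~\eqref{EqLaplacianOnSchouten} of Theorem~\ref{ThLaplaceOnProduct} and then read off the claim. Writing $E=\exp(\tfrac{\boldi}{\hbar}S^\hbar)$ for brevity, I note that $E$ is ghost parity\/-\/even because $S^\hbar$ is even, and that $\mathcal{O}$ is even by hypothesis, so every sign factor $(-)^{\GH(\mathcal{O})}$ in~\eqref{EqLaplacianOnSchouten} equals $+1$. Regarding $E$ as an element of $\overline{\mathfrak{N}}^n(\pi_\BV,T\pi_\BV)\otimes\Bbbk[[\hbar,\hbar^{-1}]]$ (the weight factor from section~\ref{SecEL}), applying the theorem to the product $\mathcal{O}\cdot E$ gives
\[
\Delta(\mathcal{O}\cdot E)=\Delta\mathcal{O}\cdot E+\schouten{\mathcal{O},E}+\mathcal{O}\cdot\Delta E.
\]
The first hypothesis states that the left\/-\/hand side vanishes, and the second hypothesis $\Delta E=0$ kills the last term, leaving $0=\Delta\mathcal{O}\cdot E+\schouten{\mathcal{O},E}$.

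The technical core, which I expect to be the main obstacle, is the chain rule $\schouten{\mathcal{O},E}=\tfrac{\boldi}{\hbar}\schouten{\mathcal{O},S^\hbar}\cdot E$, which I would establish termwise on the formal power series $E$. Since $\mathcal{O}$ is even, the operator $\schouten{\mathcal{O},\cdot}$ restricted to the even subalgebra acts by the ordinary Leibniz rule: in~\eqref{EqSchoutenOnProduct} the sign $(-)^{(\GH(\mathcal{O})-1)\GH(G)}$ is $+1$ whenever $G$ is even, so $\schouten{\mathcal{O},G\cdot H}=\schouten{\mathcal{O},G}\cdot H+G\cdot\schouten{\mathcal{O},H}$ for even $G,H$. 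Using moreover that $\schouten{\mathcal{O},S^\hbar}$, which is ghost parity\/-\/odd, graded\/-\/commutes with the even powers of $S^\hbar$, an induction gives $\schouten{\mathcal{O},(S^\hbar)^n}=n\,(S^\hbar)^{n-1}\schouten{\mathcal{O},S^\hbar}$; summing the series $\sum_n\tfrac{1}{n!}(\tfrac{\boldi}{\hbar})^n$ then factors out $E$ and yields the chain rule. Here I would be careful that the manipulation is purely formal, valid coefficientwise in $\hbar$, and that $E$ is invertible as a power series, so that it may be cancelled at the next step.

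Substituting the chain rule into $0=\Delta\mathcal{O}\cdot E+\schouten{\mathcal{O},E}$ and dividing by the invertible factor $E$ produces $\Delta\mathcal{O}+\tfrac{\boldi}{\hbar}\schouten{\mathcal{O},S^\hbar}=0$. Multiplying through by $-\boldi\hbar$ and using $-\boldi\hbar\cdot\tfrac{\boldi}{\hbar}=-\boldi^2=+1$ gives $-\boldi\hbar\,\Delta\mathcal{O}+\schouten{\mathcal{O},S^\hbar}=0$. It then remains only to replace $\schouten{\mathcal{O},S^\hbar}$ by $\schouten{S^\hbar,\mathcal{O}}$: since both arguments are even, the shifted\/-\/graded skew\/-\/symmetry~\eqref{EqSchoutenSkew} reads $\schouten{\mathcal{O},S^\hbar}=-(-)^{(\GH(\mathcal{O})-1)(\GH(S^\hbar)-1)}\schouten{S^\hbar,\mathcal{O}}=\schouten{S^\hbar,\mathcal{O}}$, because $(\GH(\mathcal{O})-1)(\GH(S^\hbar)-1)=1$ and the two minus signs cancel. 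This turns the identity into $-\boldi\hbar\,\Delta\mathcal{O}+\schouten{S^\hbar,\mathcal{O}}=0$, i.e.\ $\Omega^\hbar(\mathcal{O})=0$, as required.
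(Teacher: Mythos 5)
Your proposal is correct and follows essentially the same route as the paper's own proof in~\ref{AppProveQME}: expand $\Delta\bigl(\mathcal{O}\exp(\tfrac{\boldi}{\hbar}S^\hbar)\bigr)$ via Theorem~\ref{ThLaplaceOnProduct}, kill the last term with $\Delta\exp(\tfrac{\boldi}{\hbar}S^\hbar)=0$, and factor $\schouten{\mathcal{O},\exp(\tfrac{\boldi}{\hbar}S^\hbar)}=\tfrac{\boldi}{\hbar}\schouten{\mathcal{O},S^\hbar}\exp(\tfrac{\boldi}{\hbar}S^\hbar)$ by the termwise power rule (the paper's Lemma~\ref{thm:SchoutenPower}, which you re-derive inline). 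Your closing sign bookkeeping --- cancelling the invertible exponential, multiplying by $-\boldi\hbar$, and invoking~\eqref{EqSchoutenSkew} to pass from $\schouten{\mathcal{O},S^\hbar}$ to $\schouten{S^\hbar,\mathcal{O}}$ --- is exactly what the paper leaves implicit in ``from which the assertion follows.''
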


\noindent%
We quote the standard proofs of Propositions~\ref{thm:QME} 
and~\ref{PropPrepareGaugeQME}
from~\cite{Laplace13} in
~\ref{AppProveQME}~---
yet now we gain a deeper insight on a construction of the quantum BV-\/differential~$\Omega^\hbar$.

\begin{rem}\label{EvenGhostParity}
A practical way 
to fix 
the signs which arise in the BV-\/Laplacian and Schouten bracket
from the ghost parity and a grading in the case of a superbundle
$\pi\colon E^{(m_0+n_0|m_1+n_1)}\to M^{(n_0|n_1)}$ of superfields
is by a re-derivation of the Laplace equation $\Delta(\cO\exp(\tfrac{\boldi}{\hbar}S^{\hbar}))=0$ upon 
an observable $\cO$ starting from the Schwinger\/--\/Dyson condition,
\begin{equation}\label{EqSchwDy}
\vec{\dd}^{\,(\bq^{\dagger})}_{\vec{\delta}\Psi/\delta\bq}\left(\int[D\bq]\,
\cO([\bq],[\bq^{\dagger}])
\exp\left(\tfrac{\boldi}{\hbar}S^{\hbar}\bigl([\bq],[\bq^{\dagger}]\bigr)\right)\right)=0,
\end{equation}
which postulates the Feynman path integral's independence of the non\/-\/physical BV-\/coordinates~$\bq^{\dagger}$ with odd ghost parity.
Note that the measure in the path integral involves only ghost parity-even objects (whatever be their $\BBZ_2$-\/grading).
\end{rem}

\begin{theor}\label{PropQBVDifferential}
Let $\mathcal{O} \in \overline{\mathfrak{N}}^n(\pi_\BV,T\pi_\BV)$ be a
functional and let the even functional 
$S^\hbar\in \overline{H}^n(\pi_\BV)$ satisfy quantum master\/-\/equation~\eqref{QME}. 
Then the operator $\Omega^\hbar$, defined in~\eqref{eq:def-Omega}, squares to zero\textup{:} 
\[ {(\Omega^\hbar)}^2(\mathcal{O}) = 0. \]
\end{theor}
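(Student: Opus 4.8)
The plan is to expand $(\Omega^\hbar)^2(\mathcal{O})$ by bilinearity of $\Delta$ and $\schouten{\,,\,}$ and to show that the resulting contributions either vanish individually or recombine into the Schouten bracket of the quantum master\/-\/equation's defect with $\mathcal{O}$. First I would write
\[
(\Omega^\hbar)^2(\mathcal{O}) = -\boldi\hbar\,\Delta\bigl(-\boldi\hbar\,\Delta\mathcal{O} + \schouten{S^\hbar,\mathcal{O}}\bigr) + \schouten{S^\hbar,\ -\boldi\hbar\,\Delta\mathcal{O} + \schouten{S^\hbar,\mathcal{O}}},
\]
which produces four pieces: a term $-\hbar^2\,\Delta^2\mathcal{O}$, a term $-\boldi\hbar\,\Delta\schouten{S^\hbar,\mathcal{O}}$, a term $-\boldi\hbar\,\schouten{S^\hbar,\Delta\mathcal{O}}$, and the double bracket $\schouten{S^\hbar,\schouten{S^\hbar,\mathcal{O}}}$.

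The first piece vanishes immediately by Theorem~\ref{ThBVDifferential} (the property $\Delta^2=0$). For the second piece I would invoke Theorem~\ref{ThLaplaceOnSchouten}, which states that $\Delta$ is a graded derivation of $\schouten{\,,\,}$; since $S^\hbar$ is even, $\GH(S^\hbar)=0$ and the sign $(-)^{\GH(S^\hbar)-1}$ equals $-1$, so that $\Delta\schouten{S^\hbar,\mathcal{O}} = \schouten{\Delta S^\hbar,\mathcal{O}} - \schouten{S^\hbar,\Delta\mathcal{O}}$. The attendant $+\boldi\hbar\,\schouten{S^\hbar,\Delta\mathcal{O}}$ then cancels the third piece $-\boldi\hbar\,\schouten{S^\hbar,\Delta\mathcal{O}}$, leaving only $-\boldi\hbar\,\schouten{\Delta S^\hbar,\mathcal{O}}$ from these two terms together. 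For the double bracket I would apply the shifted\/-\/graded Jacobi identity~\eqref{EqJacobiSchouten} of Theorem~\ref{ThSchoutenOnProduct}(iii) with $F=G=S^\hbar$ and $H=\mathcal{O}$; combining the three summands by means of the skew\/-\/symmetry~\eqref{EqSchoutenSkew} (and recalling $\GH(\schouten{S^\hbar,S^\hbar})=-1\equiv1$) reduces it to $\schouten{S^\hbar,\schouten{S^\hbar,\mathcal{O}}} = \tfrac12\schouten{\schouten{S^\hbar,S^\hbar},\mathcal{O}}$.

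Collecting everything yields
\[
(\Omega^\hbar)^2(\mathcal{O}) = \schouten{-\boldi\hbar\,\Delta S^\hbar + \tfrac12\schouten{S^\hbar,S^\hbar},\ \mathcal{O}},
\]
whose first argument is precisely the defect of the quantum master\/-\/equation~\eqref{QME} and therefore vanishes, giving $(\Omega^\hbar)^2(\mathcal{O})=0$. Because all of Theorems~\ref{ThLaplaceOnProduct}--\ref{ThBVDifferential} already hold on the full space $\ov{\gN}^n(\pi_\BV,T\pi_\BV)$, no separate extension from building blocks by the graded Leibniz rule is required. I expect the main obstacle to be the sign bookkeeping: tracking the exponents $(-)^{\GH(\cdot)-1}$ consistently through the derivation property and, above all, extracting the factor $\tfrac12$ from the Jacobi identity with the correct overall sign after the two uses of graded skew\/-\/symmetry. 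It is the evenness of $S^\hbar$ that collapses these signs and renders the cancellations transparent.
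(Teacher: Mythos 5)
Your algebraic expansion coincides with the paper's own proof step for step: the same four terms, the same cancellation of the two $\pm\boldi\hbar\,\schouten{S^\hbar,\Delta\mathcal{O}}$ contributions via Theorem~\ref{ThLaplaceOnSchouten} with the sign $(-)^{\GH(S^\hbar)-1}=-1$, the vanishing of $(\boldi\hbar)^2\Delta^2\mathcal{O}$ by Theorem~\ref{ThBVDifferential}, and the Jacobi identity collapsing the double bracket to $\tfrac12\schouten{\schouten{S^\hbar,S^\hbar},\mathcal{O}}$, so that everything collects into $\schouten{-\boldi\hbar\,\Delta S^\hbar+\tfrac12\schouten{S^\hbar,S^\hbar},\mathcal{O}}$. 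Up to that point you and the paper are in lockstep, and your sign bookkeeping is correct.

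The gap is in your last sentence. In this formalism the inference ``the first argument is the defect of the quantum master\/-\/equation and therefore the bracket vanishes'' is exactly the step the paper singles out as \emph{the crucial moment in the entire proof} and refuses to make directly. The reason is the paper's notion of synonyms: a composite object such as $-\boldi\hbar\,\Delta S^\hbar+\tfrac12\schouten{S^\hbar,S^\hbar}$ can be the zero \emph{map} $\Gamma(\pi_{\BV})\to\Bbbk$ and yet contribute nontrivially when inserted as an argument of $\schouten{\,,\,}$ --- Example~\ref{Countercounterexample} exhibits precisely such a functional $\Delta G$ which is zero as a map while $\schouten{F,\Delta G}\neq0$, cf.\ Fig.~\ref{FigSynonyms}. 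The paper closes this gap by passing from the Schouten bracket to its operator realization $\vec{\bQ}^{-\boldi\hbar\,\Delta S^\hbar+\frac12\schouten{S^\hbar,S^\hbar}}(\mathcal{O})$: the transition to an evolutionary vector field chops the multiplication of base copies by uniquely fixing the field's generating section, and it is that generating \emph{section} which vanishes by~\eqref{QME}, whence the image of $\mathcal{O}$ under this map is zero. Without this step (or an equivalent argument that the master\/-\/equation holds at the level of objects in $\ov{\gN}^n(\pi_{\BV},T\pi_{\BV})$ rather than merely of maps to $\Bbbk$), your concluding ``therefore vanishes'' is precisely the kind of interrupted, step\/-\/by\/-\/step evaluation the paper's framework forbids.
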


\begin{proof}
We calculate, using Theorem~\ref{ThLaplaceOnSchouten},
\begin{align*}
{(\Omega^\hbar)}^2(\mathcal{O}) 
&= \schouten{S^\hbar,\schouten{S^\hbar,\mathcal{O}} - \boldi\hbar\,\Delta \mathcal{O}} - \boldi\hbar\,\Delta\big(\schouten{S^\hbar,\mathcal{O}} - \boldi\hbar\,\Delta \mathcal{O}\big) \\
&= \schouten{S^\hbar,\schouten{S^\hbar,\mathcal{O}}} - \boldi\hbar\,\schouten{S^\hbar,\Delta \mathcal{O}} - \boldi\hbar\,\schouten{\Delta S^\hbar,\mathcal{O}} + \boldi\hbar\,\schouten{S^\hbar,\Delta \mathcal{O}} + (\boldi\hbar)^2\Delta^2\mathcal{O}.\\
\intertext{The last term vanishes identically by Theorem~\ref{ThBVDifferential}, while the second term cancels against the fourth term. Using Jacobi's identity~\eqref{EqJacobiSchouten} for the Schouten bracket on the first term, we obtain:}
{(\Omega^\hbar)}^2(\mathcal{O}) 
&= - \boldi\hbar\,\schouten{\Delta S^\hbar, \mathcal{O}} +
{\textstyle\frac12}\schouten{\schouten{S^\hbar,S^\hbar},\mathcal{O}} 
= \schouten{-\boldi\hbar\,\Delta S^\hbar+{\textstyle\frac12}\schouten{S^\hbar,S^\hbar}, \mathcal{O}}.\\ 
\intertext{
Now is the crucial moment in the entire proof. By the logic of our reasoning's objective, the theorem's claim is that the operator~$(\Omega^\hbar)^2$ yields zero whenever acting on a functional~$\mathcal{O}$. We accordingly transform the variational Schouten bracket of two terms to the operator realization,}
&\cong \vec{\bQ}^{-\boldi\hbar\,\Delta S^\hbar + \frac{1}{2}\lshad S^\hbar,S^\hbar\rshad} (\mathcal{O}),
\end{align*}
with the evolutionary derivation now acting on the argument. Let us emphasize that a transition from the variational Schouten bracket --\,which increases the number of bases $M\times\ldots\times M$ by construction\,-- to the evolutionary vector field chops a multiplication of geometries by uniquely fixing the field's generating section.\footnote{It might happen otherwise that
a co\/-\/multiple of~$\mathcal{O}$ under~$\lshad\,,\,\rshad$ looks like zero as a map of the space~$\Gamma(\pi_\BV)$ yet the bracket with it could still be nonzero, see, e.\,g., $\Delta G$ on p.~\pageref{pEndCountercounterexample} in Example~\ref{Countercounterexample}.}
But by our initial assumption, this generating section is zero by virtue of~\eqref{QME}. Therefore the image of~$\mathcal{O}$ under such map vanishes, which proves the assertion.
\end{proof}


\subsection{Gauge automorphisms of quantum BV-\/cohomology groups}\label{SecGauge}
By using the quantum BV-differential $\Omega^{\hbar}$, let us construct a closed algebra of infinitesimal gauge symmetries
for the quantum master-equation~\eqref{QME}.

\begin{proposition}\label{PropGaugeQME}
Let $F\in\ov{\gN^n}(\pi_\BV,T\pi_\BV)$ be an arbitrary odd-parity functional and $S^{\hbar}$ the quantum master-action
satisfying~\eqref{QME}. Then the infinitesimal shift of the functional~$S^{\hbar}$,
\begin{equation}\label{EqQMEGaugeSym}
\dot S^{\hbar}=\Omega^{\hbar}(F)\quad\Longleftrightarrow\quad S^{\hbar}\mapsto 
S^{\hbar}(\veps)=S^{\hbar}+\veps\cdot\Omega^{\hbar}(F)+\ov{o}(\veps),\ \veps\in\BBR,
\end{equation}
is a symmetry of~\eqref{QME} so that $\Delta\left(\exp\left(\tfrac{\boldi}{\hbar}S^{\hbar}(\veps)\right)\right)=\ov{o}(\veps)$
in Peano's notation.

\noindent$\bullet$ The algebra of infinitesimal gauge symmetries~\eqref{EqQMEGaugeSym} of the quantum master-equation is closed,
\begin{equation}\label{EqCommutQMEGauge}
\left.\left(\frac{\Id}{\Id\veps_1}\circ\frac{\Id}{\Id\veps_2}-
\frac{\Id}{\Id\veps_2}\circ\frac{\Id}{\Id\veps_1}\right)\right|_{\veps_i=0}
(S^{\hbar})=\Omega^{\hbar}\bigl(\lshad F_1,F_2\rshad\bigr),
\end{equation}
i.e., the commutator of two even-parity symmetries with respective generators $F_1$ and~$F_2$ is the infinitesimal gauge
symmetry whose generator is the odd Poisson bracket of $F_1$ and~$F_2$.
\end{proposition}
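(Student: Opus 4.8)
The plan is to deduce both claims from the three structural results already in hand: that $\Delta$ is a differential (Theorem~\ref{ThBVDifferential}), that $\Delta$ is a graded derivation of the antibracket (Theorem~\ref{ThLaplaceOnSchouten}), and that $(\Omega^\hbar)^2=0$ on solutions of the master-equation (Theorem~\ref{PropQBVDifferential}). The computational engine throughout is one identity obtained by feeding powers of the even action into the deviation-from-derivation formula~\eqref{EqLaplacianOnSchouten}: writing $a=\boldi/\hbar$, an induction gives $\Delta\bigl(\exp(aS^\hbar)\bigr)=a\bigl(\Delta S^\hbar+\tfrac{a}{2}\lshad S^\hbar,S^\hbar\rshad\bigr)\exp(aS^\hbar)$, so the exponential form $\Delta\bigl(\exp(\tfrac{\boldi}{\hbar}S^\hbar)\bigr)=0$ is \emph{equivalent} to~\eqref{QME}. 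The same induction, with the first-slot Leibniz rule coming from Theorem~\ref{ThSchoutenOnProduct}, yields $\lshad\exp(aS^\hbar),X\rshad=a\exp(aS^\hbar)\lshad S^\hbar,X\rshad$. I would record both identities at the outset.

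For the first assertion I would expand $\Delta\bigl(\exp(\tfrac{\boldi}{\hbar}S^\hbar(\veps))\bigr)$ to first order in $\veps$. Since $F$ is odd, $\Omega^\hbar(F)$ is even, hence $\exp(\tfrac{\boldi}{\hbar}S^\hbar(\veps))=W_0\bigl(1+\tfrac{\boldi}{\hbar}\veps\,\Omega^\hbar(F)\bigr)+\ov o(\veps)$ with $W_0=\exp(\tfrac{\boldi}{\hbar}S^\hbar)$. Applying~\eqref{EqLaplacianOnSchouten} to $W_0\cdot\Omega^\hbar(F)$, using $\Delta W_0=0$ together with the two recorded identities, the $\veps$-linear term collapses to $\tfrac{\boldi}{\hbar}W_0\bigl(\lshad S^\hbar,\Omega^\hbar F\rshad-\boldi\hbar\,\Delta\Omega^\hbar F\bigr)=\tfrac{\boldi}{\hbar}W_0\,(\Omega^\hbar)^2(F)$, which vanishes by Theorem~\ref{PropQBVDifferential}. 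Thus $\Delta\bigl(\exp(\tfrac{\boldi}{\hbar}S^\hbar(\veps))\bigr)=\ov o(\veps)$, which is exactly the asserted symmetry~\eqref{EqQMEGaugeSym} of~\eqref{QME}.

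For closure I would read the gauge shift as the affine vector field $X_F(S)=-\boldi\hbar\,\Delta F+\lshad S,F\rshad$ on the space of actions and observe that the left-hand side of~\eqref{EqCommutQMEGauge} is, to lowest order, the value at $S^\hbar$ of the Lie bracket $[X_{F_1},X_{F_2}]$. As only the summand $\lshad S,F_i\rshad$ depends on $S$ (and linearly), this bracket evaluates to the cross term $\lshad\Omega^\hbar F_1,F_2\rshad-\lshad\Omega^\hbar F_2,F_1\rshad$; the essential point is that the \emph{second} shift is generated by $\Omega^\hbar$ built from the already-shifted action, which is precisely what produces the $\veps_1\veps_2$ coefficient. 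It then remains to prove the derivation identity $\lshad\Omega^\hbar F_1,F_2\rshad-\lshad\Omega^\hbar F_2,F_1\rshad=\Omega^\hbar\bigl(\lshad F_1,F_2\rshad\bigr)$. Splitting $\Omega^\hbar=-\boldi\hbar\,\Delta+\lshad S^\hbar,\cdot\rshad$, the $\Delta$-part reassembles into $-\boldi\hbar\,\Delta\lshad F_1,F_2\rshad$ by Theorem~\ref{ThLaplaceOnSchouten} and skew-symmetry~\eqref{EqSchoutenSkew}, while the $\lshad S^\hbar,\cdot\rshad$-part reassembles into $\lshad S^\hbar,\lshad F_1,F_2\rshad\rshad$ by the graded Jacobi identity~\eqref{EqJacobiSchouten} (taken with $F=S^\hbar$, $G=F_1$, $H=F_2$) together with~\eqref{EqSchoutenSkew}. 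Summing the two parts gives $\Omega^\hbar\bigl(\lshad F_1,F_2\rshad\bigr)$, as required.

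The routine parts are the inductive exponential identity and the two Leibniz computations; the real difficulty is bookkeeping rather than conceptual. Two places demand care. First, the $\BBZ_2$-graded signs: one must keep track of $\GH(F_i)=1$, $\GH(\Omega^\hbar F_i)=0$, and $\GH(\Delta F_i)=0$, so that the sign exponents in~\eqref{EqSchoutenSkew}, \eqref{EqJacobiSchouten}, and Theorem~\ref{ThLaplaceOnSchouten} all reduce correctly. Second, and more substantively, the action-dependence of $\Omega^\hbar$ along the flow: it is tempting to treat $\Omega^\hbar$ as a fixed operator, which would wrongly annihilate the commutator, whereas the correct reading is that each successive generator is recomputed with the current action, and it is exactly the linear $S$-dependence of $\lshad S,F\rshad$ that feeds the Jacobi identity and closes the algebra.
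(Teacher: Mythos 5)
Your proposal is correct and follows essentially the same route as the paper's own proof: the first claim is obtained by differentiating $\Delta\bigl(\exp(\tfrac{\boldi}{\hbar}S^{\hbar}(\veps))\bigr)$ along the flow and invoking $(\Omega^{\hbar})^{2}=0$ (Theorem~\ref{PropQBVDifferential}), and the closure identity is obtained from the cross terms $\lshad\Omega^{\hbar}F_1,F_2\rshad-\lshad\Omega^{\hbar}F_2,F_1\rshad$ reassembled via Theorem~\ref{ThLaplaceOnSchouten}, skew-symmetry~\eqref{EqSchoutenSkew}, and Jacobi identity~\eqref{EqJacobiSchouten}. Your sign bookkeeping and your emphasis on the linear $S$-dependence of $\lshad S,F_i\rshad$ (with $\dot F_j\equiv 0$) match the paper's argument exactly.
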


\begin{rem}
The odd-parity generators $F_i\in\ov{\gN^n}(\pi_\BV,T\pi_\BV)$ never evolve in the course of a transformation which is induced
by any generator $F_j$ on the quantum BV-action functional $S^{\hbar}$.
\end{rem}

\begin{proof}
Assuming a smooth dependence of $S^{\hbar}(\veps)$ on $\veps$, we obtain that\footnote{This 
proof is standard: it originates from the cohomological deformation theory for solutions
of the Maurer\/--\/Cartan equation (e.~g., of~\eqref{QME}), see~\cite{KontsevichSoibelman}.}
\begin{equation*}
\frac{\Id}{\Id\veps}\Delta\left(\exp\left(\tfrac{\boldi}{\hbar}S^{\hbar}(\veps)\right)\right)=
\tfrac{\boldi}{\hbar}\dot S^{\hbar}\cdot\Delta\left(\exp\left(\tfrac{\boldi}{\hbar}S^{\hbar}(\veps)\right)\right)+
\Omega^{\hbar}(\dot S^{\hbar})\cdot\exp\left(\tfrac{\boldi}{\hbar}S^{\hbar}(\veps)\right).
\end{equation*}
Because $(\Omega^{\hbar})^2=0$ by Theorem~\ref{PropQBVDifferential}, for $\dot S^{\hbar}$ to be an infinitesimal symmetry of the equation
$\Delta\left(\exp\left(\tfrac{\boldi}{\hbar}S^{\hbar}\right)\right)=0$ 
it is sufficient that $S^{\hbar}=\Omega^{\hbar}(F)$ for some odd\/-\/parity functional~$F$.

Second, let
$$
\frac{\Id}{\Id\veps_i}(S^{\hbar})=-\boldi\hbar\,\Delta F_i+\lshad S^\hbar,F_i\rshad\qquad\text{for }i=1,2,\qquad\veps_i\in\BBR,$$
and postulate that $\frac{\Id}{\Id\veps_i}(F_j)\equiv0$ for all $i$ and $j$. Then commutator~\eqref{EqCommutQMEGauge} 
of even-parity infinitesimal transformations~\eqref{EqQMEGaugeSym} generated by the functionals $F_1$ and $F_2$ is
\begin{multline*}
\lshad-\boldi\hbar\,\Delta F_1+\lshad S^\hbar,F_1\rshad,F_2\rshad
-\lshad-\boldi\hbar\,\Delta F_2+\lshad S^\hbar,F_2\rshad,F_1\rshad\\
{}=-\boldi\hbar\,\left(\lshad\Delta F_1,F_2\rshad-\lshad\Delta F_2,F_1\rshad\right)+
\left(\lshad\lshad S^{\hbar},F_1\rshad,F_2\rshad-\lshad\lshad S^{\hbar},F_2\rshad,F_1\rshad\right).
\end{multline*}
Because $F_1$ has odd parity, we swap the factors in $-\lshad\Delta F_2,F_1\rshad=\lshad F_1,\Delta F_2\rshad$;
likewise, $+\lshad F_1,\lshad S^{\hbar},F_2\rshad\rshad$ is the last term in the above expression. 
From our main Theorem~\ref{ThLaplaceOnSchouten}
and by Jacobi identity~\eqref{EqJacobiSchouten} we conclude that the commutator is equal to
$$
-\boldi\hbar\,\Delta\bigl(\lshad F_1,F_2\rshad\bigr)+\lshad S^{\hbar},\lshad F_1,F_2\rshad\rshad
=\Omega^{\hbar}\bigl(\lshad F_1,F_2\rshad\bigr),
$$
that is, the Schouten bracket of $F_1$ and $F_2$ is the new gauge symmetry generator.
\end{proof}

\begin{rem}(cf.\ \cite[\S5]{VTSh}).
The transformation $\exp\left(\tfrac{\boldi}{\hbar}S^{\hbar}\right)\mapsto\exp\left(\tfrac{\boldi}{\hbar}S^{\hbar}(\veps)\right)$
for a finite $\veps\in\BBR$ is determined by the operator $\exp(\veps[\Delta,F])$, where $[\ ,\ ]$ is the
anticommutator of two odd-parity objects. Indeed, by Theorem~\ref{ThLaplaceOnProduct} we have that
\begin{align*}
\Delta\bigl(&F\cdot\exp\left(\tfrac{\boldi}{\hbar}S^{\hbar}\right)\bigr)+
F\cdot\Delta\bigl(\exp\left(\tfrac{\boldi}{\hbar}S^{\hbar}\right)\bigr)\\
{}&=\Delta F\cdot\exp\left(\tfrac{\boldi}{\hbar}S^{\hbar}\right)-
\lshad F,\exp\left(\tfrac{\boldi}{\hbar}S^{\hbar}\right)\rshad-
F\cdot\Delta\left(\exp\left(\tfrac{\boldi}{\hbar}S^{\hbar}\right)\right)+
F\cdot\Delta\left(\exp\left(\tfrac{\boldi}{\hbar}S^{\hbar}\right)\right)\\
{}&=\tfrac{\boldi}{\hbar}(-\boldi\hbar\,\Delta F+\lshad S^{\hbar},F\rshad)\cdot\exp\left(\tfrac{\boldi}{\hbar}S^{\hbar}\right)=
\tfrac{\boldi}{\hbar}\dot S^{\hbar}\cdot\exp\left(\tfrac{\boldi}{\hbar}S^{\hbar}\right)=
\left.\frac{\Id}{\Id\veps}\right|_{\veps=0}\left(\exp\left(\tfrac{\boldi}{\hbar}S^{\hbar}\right)\right).
\end{align*}
Note that the Schouten bracket acts on $\exp\left(\tfrac{\boldi}{\hbar}S^{\hbar}\right)$ by the Leibniz rule 
(see Theorem~\ref{ThSchoutenOnProduct})
and we then use the equality $-\lshad F,\tfrac{\boldi}{\hbar}S^{\hbar}\rshad=\tfrac{\boldi}{\hbar}\lshad S^{\hbar},F\rshad$ 
which holds by Theorem~\ref{ThSchoutenOnProduct} again.
\end{rem}

Let us now regard 
the full quantum BV-\/action as the generating functional for ghost parity\/-\/even observables~$\cO$, 
see~\cite{Zinn-Justin:CriticalPhenomena}.

\begin{lemma}
There are no observables $\cO$, other than the identically zero functional, which would be ghost parity-odd.
\end{lemma}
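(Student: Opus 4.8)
The plan is to reason purely from the ghost parity of the codomain $\Bbbk$. By construction an observable $\cO$ is a functional whose role is to be averaged against the quantum weight and thereby to produce a number: its expectation value, and more generally each correlator $\langle\cO\cdot\cO_1\cdots\cO_k\rangle$ against admissible observables $\cO_i$, is required to be a well-defined element of $\Bbbk=\BBR$ or $\BBC$. So the first step is to record that the averaging operation $\cO\mapsto\int_{\Gamma(\bzeta^0)}[D\bq]\,\cO\exp\bigl(\tfrac{\boldi}{\hbar}S^{\hbar}\bigr)$ preserves the $\BBZ_2$-valued ghost parity $\GH(\,\cdot\,)$. Indeed the quantum BV-action $S^{\hbar}$ is even by hypothesis (as assumed in Proposition~\ref{thm:QME}), so the weight $\exp(\tfrac{\boldi}{\hbar}S^{\hbar})$ is even, and by Remark~\ref{EvenGhostParity} the Feynman measure $[D\bq]$ is built from ghost parity-even objects only.

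Next I would use that the target carries trivial ghost parity, $\GH(\Bbbk)=0$, so that its ghost parity-odd component is the zero subspace. A ghost parity-homogeneous observable $\cO$ of \emph{odd} parity can therefore be averaged only into the odd part of $\Bbbk$, which is $\{0\}$; the same holds verbatim for every insertion $\langle\cO\cdot\cO_1\cdots\cO_k\rangle$ against even observables, since such a product is again odd while the measure and weight stay even. Hence all numeric values that $\cO$ is meant to furnish vanish identically over $\Gamma(\pi_{\BV})$, that is, $\cO$ is the identically zero map $\Gamma(\pi_{\BV})\to\Bbbk$, which is exactly the assertion.

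The point that requires the most care~--- and which I expect to be the genuine obstacle~--- is to respect the paper's distinction between a functional as an object and the map it induces (the synonym phenomenon of Example~\ref{Countercounterexample}). I would therefore phrase the conclusion sharply: an odd-parity observable is \emph{identically zero as a map}, carrying no numeric content, but this does not assert that it is the zero object, since odd-parity synonyms of zero (analogues of $\Delta G$) may well exist. Accordingly the whole argument is kept at the level of evaluation and averaging, where parity-preservation together with $\GH(\Bbbk)=0$ forces the result; the defining relation $\Omega^{\hbar}(\cO)=0$ for observables is consistent with this but is not needed, the parity of $\Bbbk$ alone sufficing.
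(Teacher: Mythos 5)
Your argument is in substance the same as the paper's: the averaging over $\Gamma(\bzeta^0)$ preserves ghost parity because $S^{\hbar}$ and the measure are even, the output is a constant in the purely even field $\Bbbk$, and an odd\/-\/parity constant vanishes, whence an odd observable is the zero functional. The one claim you should retract is the closing one, namely that the defining relation $\Omega^{\hbar}(\cO)=0$ is ``not needed, the parity of $\Bbbk$ alone sufficing.'' The partial path integral $\int_{\Gamma(\bzeta^0)}[D\bq]\,\cO\exp\bigl(\tfrac{\boldi}{\hbar}S^{\hbar}\bigr)$ is a priori a \emph{function} of the odd section $\{s^{\dagger}_{\beta}\}\in\Gamma(\bzeta^1)$, and an odd\/-\/parity function of odd variables need not vanish (think of $q^{\dagger}$ itself); it is precisely the Schwinger\/--\/Dyson condition~\eqref{EqSchwDy} --- equivalently, the observable property --- that the paper invokes to make this function constant, hence an honest element of $\Bbbk$ to which the parity argument applies. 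Your opening premise, that an observable ``produces a number,'' already uses this property implicitly, so the proof stands once the final remark is dropped. Your care in distinguishing ``identically zero as a map'' from ``the zero object'' is consistent with the paper's synonym phenomenon and is a welcome sharpening rather than a deviation.
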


\begin{proof}
Indeed, Eq.~\eqref{EqSchwDy} implies that the path integral 
$$
I=\int_{\Gamma(\zeta^0)}[D\bq]\,\cO([\bq],[\bq^{\dagger}])\exp\left(\tfrac{\boldi}{\hbar}S^{\hbar}([\bq],[\bq^{\dagger}])\right)
$$
over the space of ghost parity\/-\/even BV-\/section components 
is effectively independent of the ghost parity-odd BV-variables $\bq^{\dagger}$.
Notice further that the ghost parity $\GH(I)$ of this constant function $I([\bq^{\dagger}])$ is equal to that of $\cO$;
the quantum master-action $S^{\hbar}$ is parity-even. Under a (speculative) assumption that an observable $\cO$
could be ghost parity-odd, we obtain an odd parity constant. Unless a possibility of their existence is postulated
by brute force, this odd-parity constant must be equal to zero, whence the
ghost parity\/-\/odd functional
$\cO\in\ov{H^n}(\pi_\BV)\subseteq\ov{\gN^n}(\pi_\BV,T\pi_\BV)$ itself is zero.
\end{proof} 

In what follows we accept for transparency that there is no grading in the initial geometry of physical fields, 
i.e., for sections of the bundle $\pi\colon E^{n+m}\to M^n$. Let us focus on the standard cohomological approach to 
quantum BV-models and to their gauge symmetries (cf.~\cite{KontsevichSoibelman}).

\begin{lemma}
Suppose that an infinitesimal shift $S^{\hbar}\mapsto S^{\hbar}+\lambda\cdot\cO+\ov{o}(\lambda)$ of the quantum BV-action
by using an even-parity functional $\cO\in\ov{H^n}(\pi_\BV)\subseteq\ov{\gN^n}(\pi_\BV,T\pi_\BV)$ does not destroy 
the 
quantum master\/-\/equation,
$$\left.\frac{\Id}{\Id\lambda}\right|_{\lambda=0}\Delta
\left(\exp\left(\tfrac{\boldi}{\hbar}(S^{\hbar}+\lambda\cdot\cO)\right)\right)=0.$$
Then the \textsl{observable} $\cO$ is $\Omega^{\hbar}$-closed: $-\boldi\hbar\,\Delta\cO+\lshad S^{\hbar},\cO\rshad=0.$
\end{lemma}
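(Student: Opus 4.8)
The plan is to differentiate the deformed exponential and reduce the hypothesis to the two Leibniz rules already established, after which a never\/-\/vanishing factor is cancelled. First I would use linearity of $\Delta$ together with $\frac{\Id}{\Id\lambda}\big|_{\lambda=0}\exp\bigl(\tfrac{\boldi}{\hbar}(S^\hbar+\lambda\cO)\bigr)=\tfrac{\boldi}{\hbar}\,\cO\cdot\exp\bigl(\tfrac{\boldi}{\hbar}S^\hbar\bigr)$ to rewrite the assumed vanishing as
\[
0=\tfrac{\boldi}{\hbar}\,\Delta\bigl(\cO\cdot\exp(\tfrac{\boldi}{\hbar}S^\hbar)\bigr).
\]
Here the standing assumption that the shift does not destroy the master\/-\/equation already encodes $\Delta\bigl(\exp(\tfrac{\boldi}{\hbar}S^\hbar)\bigr)=0$ at $\lambda=0$ (this is the $\lambda=0$ value of the deformed expression, equivalent to~\eqref{QME} via the computation underlying Proposition~\ref{thm:QME}); I would record this identity for use below.

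Next I would expand $\Delta\bigl(\cO\cdot\exp(\tfrac{\boldi}{\hbar}S^\hbar)\bigr)$ by the graded Leibniz rule of Theorem~\ref{ThLaplaceOnProduct}. Since $\cO$ is even, $\GH(\cO)=0$, so all prefactors $(-)^{\GH(\cO)}$ are trivial and the three terms read
\[
\Delta\cO\cdot\exp(\tfrac{\boldi}{\hbar}S^\hbar)+\lshad\cO,\exp(\tfrac{\boldi}{\hbar}S^\hbar)\rshad+\cO\cdot\Delta\exp(\tfrac{\boldi}{\hbar}S^\hbar).
\]
The last term vanishes by the recorded identity. For the middle term I would invoke that $\lshad\cO,\,\cdot\,\rshad$ is a graded derivation (Theorem~\ref{ThSchoutenOnProduct}(i)), so the bracket obeys the chain rule on the formal power series $\exp(\tfrac{\boldi}{\hbar}S^\hbar)$; because $S^\hbar$ is even, every intermediate sign $(-)^{(\GH(\cO)-1)\GH(S^\hbar)}$ is trivial and
\[
\lshad\cO,\exp(\tfrac{\boldi}{\hbar}S^\hbar)\rshad=\tfrac{\boldi}{\hbar}\,\lshad\cO,S^\hbar\rshad\cdot\exp(\tfrac{\boldi}{\hbar}S^\hbar).
\]

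Collecting these, the hypothesis becomes $\tfrac{\boldi}{\hbar}\bigl(\Delta\cO+\tfrac{\boldi}{\hbar}\lshad\cO,S^\hbar\rshad\bigr)\cdot\exp(\tfrac{\boldi}{\hbar}S^\hbar)=0$. Because $\exp(\tfrac{\boldi}{\hbar}S^\hbar)$ is invertible in the algebra of functionals, with inverse $\exp(-\tfrac{\boldi}{\hbar}S^\hbar)$, I would cancel it along with the nonzero scalar $\tfrac{\boldi}{\hbar}$ to get $\Delta\cO+\tfrac{\boldi}{\hbar}\lshad\cO,S^\hbar\rshad=0$. Multiplying by $-\boldi\hbar$ and then using shifted\/-\/graded skew\/-\/symmetry (Theorem~\ref{ThSchoutenOnProduct}(ii)), which for the two even functionals $\cO$ and $S^\hbar$ gives $\lshad\cO,S^\hbar\rshad=\lshad S^\hbar,\cO\rshad$ since $(-)^{(\GH(\cO)-1)(\GH(S^\hbar)-1)}=-1$, I would arrive at $-\boldi\hbar\,\Delta\cO+\lshad S^\hbar,\cO\rshad=0$, that is, $\Omega^\hbar(\cO)=0$ as defined in~\eqref{eq:def-Omega}. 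The computation itself is short; the only delicate point is the sign bookkeeping, and I expect the main obstacle to be justifying the chain rule for the antibracket on the exponential and confirming, via skew\/-\/symmetry, that the argument order in the final $\Omega^\hbar(\cO)$ is the stated one. Both are settled by the even parity of $\cO$ and $S^\hbar$, which trivializes every sign that could otherwise intervene.
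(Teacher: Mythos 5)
Your proof is correct and follows essentially the same route as the paper: the paper's own proof merely defers to Proposition~\ref{PropGaugeQME}, whose underlying computation --- differentiate the exponential, apply the Leibniz rule of Theorem~\ref{ThLaplaceOnProduct} to $\Delta\bigl(\cO\cdot\exp(\tfrac{\boldi}{\hbar}S^{\hbar})\bigr)$, kill the $\cO\cdot\Delta\exp$ term by the master\/-\/equation, and collapse the antibracket with the exponential via Lemma~\ref{thm:SchoutenPower} --- is exactly what you carry out. Your sign bookkeeping (every sign trivialized by the even parity of $\cO$ and $S^{\hbar}$, and skew\/-\/symmetry giving $\lshad\cO,S^{\hbar}\rshad=\lshad S^{\hbar},\cO\rshad$) reproduces the calculation spelled out for Proposition~\ref{PropPrepareGaugeQME} in~\ref{AppProveQME}.
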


\begin{proof}
The proof literally repeats that of Proposition~\ref{PropGaugeQME}.
\end{proof}

For a given odd-parity functional $F\in\ov{H^n}(\pi_\BV)$, we organize the infinitesimal shift~\eqref{EqQMEGaugeSym}
of the master-functional $S^{\hbar}$ as follows:
\begin{align*}
\dot S^{\hbar}&=-\boldi\hbar\,\Delta(F)+\lshad S^{\hbar},F\rshad,\\
\dot\cO&=\lshad\cO,F\rshad.
\end{align*}
Note that, unless one has that $\Delta F=0$ incidentally, the transformation of the integral \emph{functional}
$S^{\hbar}$ is not induced 
by any infinitesimal transformation of the BV-\emph{variables}, that is, by an evolutionary vector field on the horizontal infinite jet space at hand.
No earlier than the transformation law $S^{\hbar}\mapsto S^{\hbar}(\veps)$ 
is postulated, it becomes an act of will to think that the functional $F$ is the generator of parity-preserving
evolutionary vector field $\overleftarrow{Q}^{F}=\overrightarrow{Q}^F$ acting on the BV-variables so that 
$\dot\cO\cong\overrightarrow{Q}^F(\cO)$ for all observables $\cO$.

Furthermore, 
let us extend the deformation $\cO\mapsto\cO(\veps)$ of even-parity cocycles $\cO\in\ker\Omega^{\hbar}$ to 
the space of odd-parity functionals $\xi\in\ov{H^n}(\pi_\BV)\subseteq\ov{\gN^n}(\pi_\BV,T\pi_\BV)$ which produce the coboundaries
$\Omega^{\hbar}(\xi)$. Namely, we postulate that
$$\dot\xi=\lshad\xi,F\rshad$$
for all such functionals $\xi$; here we denote by the dot over $\xi$ its velocity in
the course of the transformation generated by a given $F$. Let us remember however that the law for evolution of the
odd-parity functionals $\xi$ which produce the $\Omega^{\hbar}$-coboundaries is different from our earlier postulate
(see Proposition~\ref{PropGaugeQME}) that the odd-parity generators $F_i$ of gauge symmetries do not evolve:
$dF_i/d\veps_j\equiv0$ or, in shorthand notation,
\begin{equation}\label{EqFStaysMotionless}
\dot F\equiv0.
\end{equation}
We claim that under these hypotheses, the structure of quantum BV-cohomology group remains intact in the course of gauge
symmetry transformations of the quantum master\/-\/action, $S^{\hbar}\mapsto S^{\hbar}(\veps)$,
even though the quantum BV-differential is modified, $\Omega^{\hbar}\mapsto\Omega^{\hbar}(\veps)$,
and the cocycles and coboundaries are also deformed.

\begin{theor}
An infinitesimal shift
of the quantum BV-cohomology classes induced by \eqref{EqQMEGaugeSym}, 
\eqref{EqFStaysMotionless}, and
\begin{align*}
\dot\cO&=\lshad\cO,F\rshad,& \cO&\in\ker\Omega^{\hbar},\\
\dot\xi&=\lshad\xi,F\rshad,& \xi&\in\ov{H^n}(\pi_\BV)\subseteq\ov{\gN^n}(\pi_\BV,T\pi_\BV),\ \xi \text{ odd},
\end{align*}
yields 
an isomorphism 
of the $\Omega^{\hbar}$-cohomology group\textup{:} under such mapping, every $\Omega^{\hbar}$-closed, even-parity 
$\Omega^{\hbar}$-cocycle 
$\cO$ becomes $\Omega^{\hbar}(\veps)$-closed, whereas the transformation of an even-parity coboundary
$\Omega^{\hbar}(\xi)$ produces an $\Omega^{\hbar}(\veps)$-coboundary\textup{:}
$(\Omega^{\hbar}(\xi))(\veps)=\Omega^{\hbar}(\veps)\bigl(\xi(\veps)\bigr)$.
\end{theor}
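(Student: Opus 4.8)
The plan is to reduce the claimed isomorphism to two infinitesimal statements~--- that even cocycles deform into $\Omega^\hbar(\veps)$-cocycles and that even coboundaries deform into $\Omega^\hbar(\veps)$-coboundaries~--- and to derive both from a single structural fact, namely that the quantum BV-differential $\Omega^\hbar(\,\cdot\,)=-\boldi\hbar\,\Delta(\,\cdot\,)+\lshad S^\hbar,\,\cdot\,\rshad$ is a graded derivation of the Schouten bracket. First I would record how the differential itself moves under the gauge shift~\eqref{EqQMEGaugeSym}: because $S^\hbar$ enters $\Omega^\hbar$ only through the summand $\lshad S^\hbar,\,\cdot\,\rshad$, the deformed operator is $\Omega^\hbar(\veps)(\,\cdot\,)=-\boldi\hbar\,\Delta(\,\cdot\,)+\lshad S^\hbar(\veps),\,\cdot\,\rshad$, with velocity $\dot\Omega^\hbar(\,\cdot\,)=\lshad\Omega^\hbar(F),\,\cdot\,\rshad$. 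The key lemma I would establish before anything else is the derivation identity
\[
\Omega^\hbar\lshad A,B\rshad=\lshad\Omega^\hbar A,B\rshad+(-)^{\GH(A)-1}\lshad A,\Omega^\hbar B\rshad,
\]
valid for ghost parity-homogeneous $A,B$. This assembles from two facts already in hand: for the $\Delta$-part it is precisely Theorem~\ref{ThLaplaceOnSchouten}, while for the $\lshad S^\hbar,\,\cdot\,\rshad$-part it is the Jacobi identity~\eqref{EqJacobiLeibniz} applied to $\lshad S^\hbar,\lshad A,B\rshad\rshad$, where the evenness of $S^\hbar$ collapses the Jacobi sign to exactly $(-)^{\GH(A)-1}$.

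For the cocycle statement I would differentiate $\Omega^\hbar(\veps)\bigl(\cO(\veps)\bigr)$ at $\veps=0$ for an even $\Omega^\hbar$-cocycle $\cO$ with $\dot\cO=\lshad\cO,F\rshad$. The Leibniz expansion yields $\Omega^\hbar\lshad\cO,F\rshad+\lshad\Omega^\hbar(F),\cO\rshad$; applying the derivation lemma (with $\GH(\cO)=0$) gives $\lshad\Omega^\hbar\cO,F\rshad-\lshad\cO,\Omega^\hbar F\rshad+\lshad\Omega^\hbar(F),\cO\rshad$, and the shifted skew-symmetry~\eqref{EqSchoutenSkew} turns $\lshad\Omega^\hbar(F),\cO\rshad$ into $+\lshad\cO,\Omega^\hbar(F)\rshad$ for the two even arguments, cancelling the middle term. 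Everything collapses to $\lshad\Omega^\hbar\cO,F\rshad$, which vanishes because $\cO$ is closed; hence $\cO(\veps)$ is $\Omega^\hbar(\veps)$-closed to first order.

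For the coboundary statement the target equality $\bigl(\Omega^\hbar(\xi)\bigr)(\veps)=\Omega^\hbar(\veps)\bigl(\xi(\veps)\bigr)$ is, at first order, the identity
\[
\lshad\Omega^\hbar\xi,F\rshad=\lshad\Omega^\hbar(F),\xi\rshad+\Omega^\hbar\lshad\xi,F\rshad,
\]
whose left side is the cocycle-deformation of the even coboundary $\Omega^\hbar\xi$ and whose right side is the velocity of $\Omega^\hbar(\veps)\,\xi(\veps)$ assembled from $\dot\Omega^\hbar$ and $\dot\xi=\lshad\xi,F\rshad$. Expanding $\Omega^\hbar\lshad\xi,F\rshad$ by the derivation lemma for odd $\xi$ produces $\lshad\Omega^\hbar\xi,F\rshad+\lshad\xi,\Omega^\hbar(F)\rshad$, after which the two stray terms $\lshad\Omega^\hbar(F),\xi\rshad+\lshad\xi,\Omega^\hbar(F)\rshad$ cancel by skew-symmetry~\eqref{EqSchoutenSkew} (here $\Omega^\hbar(F)$ is even and $\xi$ is odd), leaving exactly $\lshad\Omega^\hbar\xi,F\rshad$ as required.

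Finally I would observe that the two infinitesimal statements make the induced map on $\Omega^\hbar$-cohomology well-defined, and that, since $\dot F\equiv0$ by~\eqref{EqFStaysMotionless}, the generator $\lshad\,\cdot\,,F\rshad$ is $\veps$-independent, so the whole construction is the first-order part of the flow it generates; this flow integrates to a one-parameter group whose inverse is the flow for $-\veps$, which delivers the asserted isomorphism. The main obstacle will be twofold: keeping the graded signs bookkept consistently, and~--- conceptually~--- reading $\bigl(\Omega^\hbar(\xi)\bigr)(\veps)$ correctly as the cocycle-deformation of the coboundary rather than as $\Omega^\hbar(\veps)$ applied to an already-deformed $\xi$. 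Once that distinction is pinned down, the derivation lemma together with skew-symmetry does all the work, while Theorem~\ref{PropQBVDifferential}, asserting $(\Omega^\hbar)^2=0$, guarantees that the notion of cohomology itself is meaningful throughout.
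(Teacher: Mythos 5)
Your proposal is correct and follows essentially the same route as the paper's proof: differentiate $\Omega^{\hbar}(\veps)\bigl(\cO(\veps)\bigr)$ and $\Omega^{\hbar}(\veps)\bigl(\xi(\veps)\bigr)$ at $\veps=0$, then reduce everything to $\lshad\Omega^{\hbar}(\cO),F\rshad$ and $\lshad\Omega^{\hbar}(\xi),F\rshad$ using Theorem~\ref{ThLaplaceOnSchouten} together with Jacobi identity~\eqref{EqJacobiSchouten} and skew-symmetry~\eqref{EqSchoutenSkew}. Your only (harmless) repackaging is to state the fact that $\Omega^{\hbar}$ is a graded derivation of $\lshad\,,\,\rshad$ as a standalone lemma before applying it, whereas the paper performs the same two expansions inline in each computation.
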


\begin{proof}
Let $\cO\in\ker\Omega^{\hbar}$ be an even-parity observable and $F$ an odd-parity generator of gauge transformation.
Consider the equation
$\Omega^{\hbar}(\veps)(\cO(\veps))=0$
which states that the transformed functional $\cO(\veps)$ remains a coboundary. The term which is proportional to $\veps$
in this equation's left-hand side is equal to
\begin{equation*}
\left.\frac{\Id}{\Id\veps}\right|_{\veps=0}\left(-\boldi\hbar\,\Delta\cO(\veps)+
\lshad S^{\hbar}(\veps),\cO(\veps)\rshad\right)
=\Omega^{\hbar}(\dot\cO)+\lshad\dot S^{\hbar},\cO\rshad=
\Omega^{\hbar}(\lshad\cO,F\rshad)+\lshad\Omega^{\hbar}(F),\cO\rshad;
\end{equation*}
recalling once again that $\Omega^{\hbar}=-\boldi\hbar\,\Delta+\lshad S^{\hbar},\,\cdot\,\rshad$,
we continue the equality
$$
=-\boldi\hbar\,\Delta(\lshad\cO,F\rshad)+\lshad S^{\hbar},\lshad\cO,F\rshad\rshad+\lshad-\boldi\hbar\,\Delta F+
\lshad S^{\hbar},F\rshad,\cO\rshad.\mbox{\hbox to 42mm {{ }\hfil { }}}
$$
Now by Theorem~\ref{ThLaplaceOnSchouten} 
we obtain that, the observable $\cO$ being parity-even,
\begin{multline*}
=-\boldi\hbar\,\lshad\Delta\cO,F\rshad+\boldi\hbar\,\lshad\cO,\Delta F\rshad+\lshad S^{\hbar},\lshad\cO,F\rshad\rshad-
\boldi\hbar\,\lshad\Delta F,\cO\rshad+\lshad\cO,\lshad S^{\hbar},F\rshad\rshad=\\=
\lshad\Omega^{\hbar}(\cO),F\rshad
\cong -\vec{\bQ}^F\bigl(\Omega^\hbar(\cO)\bigr) =0,
\end{multline*}
because 
$\lshad S^{\hbar},\lshad\cO,F\rshad\rshad=\lshad\lshad S^{\hbar},\cO\rshad,F\rshad-\lshad\cO,\lshad S^{\hbar},F\rshad\rshad$
by Jacobi identity~\eqref{EqJacobiSchouten}, because we are inspecting the $\varepsilon$-\/linear term in the 
operator $\Omega^\hbar(\veps)\circ\bigl(\veps=0\longmapsto\veps\neq0\bigr)$ applied to~$\cO$, and $\cO$~is an $\Omega^{\hbar}$-\/cocycle. 
Therefore, the zero initial condition
$\Omega^{\hbar}(\cO)=0$ evolves at zero velocity to the $\Omega^{\hbar}(\veps)$-cocycle equation 
$\Omega^{\hbar}(\veps)\bigl(\cO(\veps)\bigr)=0$ upon $\cO(\veps)$.

Likewise, let $\Omega^{\hbar}(\xi)$ be a coboundary for some odd-parity functional $\xi$ which evolves by 
$\dot\xi=\lshad\xi,F\rshad$. Then the even-parity observable $\Omega^{\hbar}(\xi)\in\ker\Omega^{\hbar}$ evolves as fast as 
$\lshad\Omega^{\hbar}(\xi),F\rshad$ but simultaneously we have that the mapping $\Omega^{\hbar}$ and its argument $\xi$
change. We claim that the two evolutions match so that $(\Omega^{\hbar}(\xi))(\veps)$ is $\Omega^{\hbar}(\veps)$-exact.
Indeed, we have that
\begin{multline*}
\left.\frac{\Id}{\Id\veps}\right|_{\veps=0}\bigl(\Omega^{\hbar}(\veps)(\xi(\veps))\bigr)=
\Omega^{\hbar}\bigl(\lshad\xi,F\rshad\bigr)+\lshad\Omega^{\hbar}(F),\xi\rshad\\{}
=-\boldi\hbar\,\lshad\Delta\xi,F\rshad\underline{{}-\boldi\hbar\,\lshad\xi,\Delta F\rshad}+\lshad S^{\hbar},\lshad\xi,F\rshad\rshad+\lshad
\underline{-\boldi\hbar\,\Delta F}+\lshad S^{\hbar},F\rshad,\underline{\xi}\rshad;
\end{multline*}
by cancelling out the underlined Schouten brackets and then using the Jacobi identity 
we obtain
\begin{equation*}
=\lshad-\boldi\hbar\,\Delta\xi,F\rshad
+\lshad \lshad S^{\hbar},\xi\rshad, F\rshad
+\lshad\xi,\lshad S^{\hbar},F\rshad\rshad-
\lshad\xi,\lshad S^{\hbar},F\rshad\rshad=
\lshad\Omega^{\hbar}(\xi),F\rshad,
\end{equation*}
which proves our claim.

Summarizing, we see that gauge symmetries of the quantum master-equation induce automorphisms of the 
$\Omega^{\hbar}$-cohomology group.
\end{proof}

We conclude that it would be conceptually incorrect to say that the infinitesimal gauge transformations of all
functionals in a quantum BV-model are induced by a canonical transformation, determined by the evolutionary vector field
$\overrightarrow{Q}^F$ acting on the BV-variables. Let us remember that the even-parity quantum master-action 
$S^{\hbar}\in\ov{H^n}(\pi_\BV)$ and its descendants, the observables $\cO$ evolve by
\begin{align*}
\dot S^{\hbar}&=-\boldi\hbar\,\Delta F+\lshad S^{\hbar},F\rshad=\Omega^{\hbar}(F),\qquad
F\in\ov{H^n}(\pi_\BV)\subseteq\ov{\gN^n}(\pi_\BV,T\pi_\BV),\quad F\text{ odd},\\
\intertext{and}
\dot\cO&=\lshad\cO,F\rshad.\\
\intertext{%
We note that the evolution of the generating functional $S^{\hbar}_{\BV}$ is \textbf{not} determined by a vector field
on the space of BV-variables. Likewise, we recall that the odd-parity arguments $\xi$ of $\Omega^{\hbar}$ for the
coboundaries $\Omega^{\hbar}(\xi)\sim0$ do evolve,}
\dot\xi&=\lshad\xi,F\rshad,\\
\intertext{%
whereas the generators $F$ of gauge symmetries for~\eqref{QME} never change: symbolically,}
\dot F&=0
\end{align*}
(see Eq.~\eqref{EqFStaysMotionless} above). In fact, one may think that each $F$ determines 
a parity-preserving evolutionary vector field $\overrightarrow{Q}^F$ on the space of BV-variables, but it is not the objects
$\overrightarrow{Q}^F$ but the full systems of four distinct evolution equations which encode the deformation of respective
functionals. Neither the functionals' attribution to the space of building blocks
$\ov{H^n}(\pi_\BV)\ni S^{\hbar}$,\ $\cO$,\ $F$  nor a functional's parity, $\GH(S^{\hbar})=\GH(\cO)$ and $\GH(F)=\GH(\xi)$,
completely determines their individual 
transformation laws.

\begin{rem}
The supports of test shifts $\delta\bolds$ can be arbitrarily small%
\footnote{We recall that the smoothness class of variations $\delta\bolds$ is determined by smoothness of the frame fields
$\vec{e}_i(\bx),\ \vec{e}^{{}\,\dagger i}(\bx)$ and coefficient functions $\delta s^i(\bx),\ \delta s_i^{\dagger}(\bx)$.
}
and they can be chosen in such a way that all boundary terms vanish in the course of integration by parts within equivalence
classes from the horizontal cohomology groups $\ov{H}^{n(1+k)}(\pi_{\BV}\times T\pi_{\BV}\times\ldots\times T\pi_{\BV})$.
Let us note also that these integrations by parts (see section~\ref{SecByParts}) transport the derivatives from one copy
of the base manifold $M^n$ to another copy; this reasoning stays local with respect to base points $\bx$ and local volume
elements $\dvol(\bx)$ because the geometric mechanism of locality yields the diagonal in powers of the base manifold.
However, an integration by parts in functionals from $\ov{H}^n(\pi_{\BV})$ is a different issue. In fact, it refers to the
topology of $M^n$ or to a choice of the class $\Gamma(\pi_{\BV})$ of admissible sections (so that there appear no boundary
terms as well). Let us recall that the only place where such global, de~Rham cohomology aspect is explicitly used is the
proof of 
Jacobi's identity for the variational Schouten bracket (see~\cite{Lorentz12}).
In turn, Theorems~\ref{ThBVDifferential} and~\ref{PropQBVDifferential} 
relate these properties of the bracket $\lshad\,,\,\rshad$ to cohomology generators $\Delta^2=0$ and $(\Omega^{\hbar})^2=0$.
(The converse is also true: Jacobi's identity for $\lshad\,,\,\rshad$ stems from $\Delta^2=0$.) This motivates why the
de~Rham and quantum BV-cohomologies are interrelated (cf.~\cite{BarnichBrandtHenneaux1995}).
\end{rem}

\section*{Conclusion}
\noindent%
Mathematical models are designed for description of phenomena of Nature\,; a construction of the models' objects is not the
same as their evaluation at given configurations of the models, which would associate $\Bbbk$-numbers to physical fields
$\phi\in\Gamma(\pi)$ in terms of such objects. Namely, consider an Euler\/--\/Lagrange model whose primary element is the
action functional $S\colon\Gamma(\pi)\to\Bbbk$. By definition, derivative objects are obtained from $S$ by using natural
operations such as $\smash{\vec{\delta}}$ or
$\lshad\,,\,\rshad$ and~$\Delta$. The derivative objects' geometric complexity is greater than that of~$S$
because they absorb the domains of definition for test shifts $\delta s_1,\,\ldots\,,\delta s_k$ of field configurations.
We emphasize that such composite structure objects do not yet become maps $\Gamma(\pi)\to\Bbbk$ which would suit well for their
evaluation at sections $\bolds\in\Gamma(\pi)$ yielding $\Bbbk$-numbers. The intermediate objects can rather be used as
arguments of $\lshad\,,\,\rshad$ or $\Delta$ in a construction of larger, logically and geometrically more complex objects\,;
we illustrate by Fig.~\ref{FigExpandShrink} 
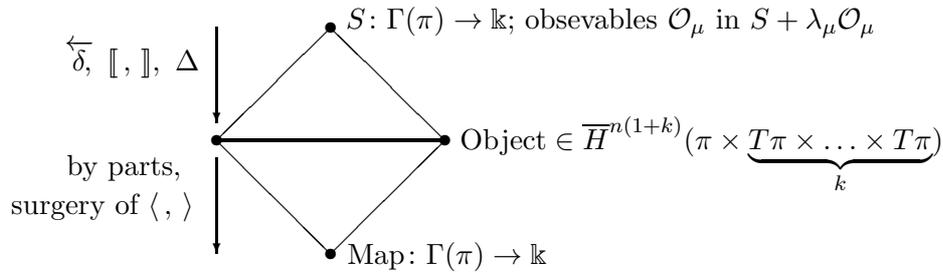
\begin{figure}[htb]
\begin{center}
\unitlength=1mm
\linethickness{0.4pt}
\begin{picture}(102.00,35.67)
{\linethickness{1.0pt}
\put(20.00,20.00){\line(1,0){30.00}}
}
\put(50.00,20.00){\line(-1,1){15.00}}
\put(35.00,35.00){\line(-1,-1){15.00}}
\put(20.00,20.00){\line(1,-1){15.00}}
\put(35.00,5.00){\line(1,1){15.00}}
\put(50.00,20.00){\circle*{1.33}}
\put(20.00,20.00){\circle*{1.33}}
\put(35.00,35.00){\circle*{1.33}}
\put(35.00,5.00){\circle*{1.33}}
\put(37.00,33.67){\makebox(0,0)[lb]{$S\colon\Gamma(\pi)\to\Bbbk$; obsevables $\cO_{\mu}$ in $S+\lambda_{\mu}\cO_{\mu}$}}
\put(52.00,13.00){\makebox(0,0)[lb]{$\text{Object}\in\ov{H}^{n(1+k)}(\pi\times\underbrace{T\pi\times\ldots\times T\pi}_k)$}}
\put(37.17,2.67){\makebox(0,0)[lb]{$\text{Map}\colon\Gamma(\pi)\to\Bbbk$}}
\put(20.00,17.67){\vector(0,-1){12.67}}
\put(20,35.00){\vector(0,-1){12.67}}
\put(0,28.33){\makebox(0,0)[lb]{$\overleftarrow{\delta}\!\!,\ \lshad\,,\,\rshad,\ \Delta$}}
\put(0.33,14.67){\makebox(0,0)[lb]{by parts,}}
\put(-7,9.33){\makebox(0,0)[lb]{surgery of $\langle\,,\,\rangle$}}
\end{picture}
\end{center}
\caption{The action~$S$ as a generator of observables, building blocks of derivative objects as horizontal cohomology classes in products of bundles over $M\times M\times\ldots\times M$, and resulting mappings as the objects' contractions over Whitney's sum of bundles.}\label{FigExpandShrink}
\end{figure}
the expansion of analytic structures and their shrinking in the course of 
integration by parts and multiplication of normalized test shifts in reconfigured couplings. Indeed, the derivative objects
become multi\/-\/linear maps with respect to $k$-\/tuples of the variations $\delta s_1$,\ $\ldots$,\ $\delta s_k\in\Gamma(T\pi)$
only when the integrations by parts carry all derivatives away from the test shifts, channelling the derivations to
densities of the object's constituent blocks such as the Lagrangian in the action functional. A surgery of couplings then
contracts the values of normalized test shifts by virtue of~\eqref{EqNormalize} at every point of the base manifold. This is
how maps $\Gamma(\pi)\to\Bbbk$ are obtained.

We conclude that 
a calculation of composite\/-\/structure object 
may not be interrupted ahead of time. Otherwise speaking, every
calculation stretches from its input data to the end value at $\bolds$\,; independently existing values at $\bolds$ for the
resulting object's constituent elements not always contribute to the sought-for value of the large structure (e.\,g.,
consider~\eqref{EqZimes} on p.~\pageref{EqZimes} and Example~\ref{Countercounterexample} on p.~\pageref{Countercounterexample}
and try to calculate consecutively
the objects $\Delta F$, $\Delta G$, and their Schouten brackets with $G$ and $F$, respectively, for that example's
functionals $F$ and $G$). Summarizing, it is illegal to construct composite
objects step by step, redundantly inspecting the elements' values at field con\-fi\-gu\-ra\-ti\-ons. One must not deviate from a way towards the appointed end of 
logical reasoning.

In fact, it is us but not Nature who calculates (e.\,g., the left\/-\/hand sides of equations of motion): 
Nature neither calculates nor evaluates\,; for there is no
built\/-\/in mechanism for doing that.
\footnote{The probabilistic
approach to evolution of Nature suggests that maxima of transition (and correlation) functions concentrate near the zero
loci of such deterministic equations' left-hand sides. At the same time, Noether symmetries of the action~$S$ are abundant
in the models. Not referring to any actual transformation of a system's components, such symmetries reflect the model's
geometry. The analytic machinery of self-regularizing structures yields the invariants --\,e.\,g., cohomology classes as 
in section~\ref{SecGauge}
\,-- which constrain the probabilistic laws of evolution.}
This implies that there is no ever\/-\/growing logical complexity in a description of the Universe\,; 
the flow of local, observer\/-\/dependent time does not require any perpetual increase of the number $k\geqslant0$ of factors in the product\/-\/bundle location of objects over $k+1$ copies of the space\/-\/time. Conversely, there always remains a unique copy of the space\/-\/time for all local functionals.

The space\/-\/time geometry of information transfer is very restrictive: its pointwise locality of events of couplings between dual objects
is an absolute principle\,; by weakening this hypothesis one could create a source of difficulties through
causality violation. 
Consequently, a count of space\/-\/time points where the couplings with a given (co)\/vector occur makes the
formalism of singular linear integral operators truly adequate in mathematical models of physical phenomena.\footnote{We recall from Remark~\ref{RemRolesInt} on p.~\pageref{RemRolesInt} that 
the volume elements $\dvol\bigl(\bx,\phi(\bx)\bigr)=
\sqrt{|\det(g_{\mu\nu})|}\,\Id\bx$ are present in the building blocks of 
composite\/-\/struc\-tu\-re objects.
Let us note further that an association of the weight factors
$\dvol(\bx)$ with point $\bx\in M^n$ is intrinsically related to the structure of space\/-\/time~$M^n$ as topological manifold (cf.~\cite{Protaras2012}).
It is readily seen that a discrete tiling of space\/-\/time converts the integrations over a measure on it to weighted
sums over the points which mark the quantum domains. This links the concept with loop quantum gravity 
(see e.\,g.~\cite{GambiniPullinQG,RovelliQG,ThiemannQG}).
}

We finally remark that the product-base approach of bundles $\pi\times T\pi\times\ldots\times T\pi$ over
$M\times M\times\ldots\times M$ to the geometry of variations highlights the concept of physical field as 
infinite-dimensional system with degrees of freedom which are attached at every point of space-time. The locality principle
for (co)vector interaction is the mechanism which distinguishes between space-time points with respect to its 
(non)Hausdorff topology.

\subsection*{Discussion}
Let us finally address two logical aspects of the geometry of variations.

\subsubsection*{Linear vector space structures}
Nature is essentially nonlinear\,; for there is no mechanism which would realize --\,under a uniform time bound\,--
an arbitrarily large number of replications 
of an object. 
This is tautological 
for those physical fields~$\phi$ which
take values in spaces without any linear structure. Moreover, even if there is a brute force labelling of 
Euler\/--\/Lagrange equations by using the fields $\phi$, a linear vector space pattern of the equations of motion is not
utilized (the same is true for the equations' descendants such as the antifields $\phi^{\dagger}$ or (anti)ghosts). Indeed,
it is only their the \emph{tangent} spaces whose linear structure is used, in particular, in order to split the variations in ghost 
parity-homogeneous components. Objects are linearized only in the course of variations under infinitesimal test shifts.
For example, this determines the distinction between finite offsets $\Delta\bx$ so that
$(\bx,\bx+\Delta\bx)\in M\times M$ and infinitesimal test shifts $\left.\mathstrut\delta\bx\right|_{\bx}\in T_{\bx}M$
which are mapped to the number field $\Bbbk$ by covectors $\left.\mathstrut\Id\bx\right|_{\bx}\in T^*_{\bx}M$.

\subsubsection*{Annual reproduction rate for interspecimen breeding of cats and whales}
An immediate comment on the title of this paragraph is as follows. One could proclaim that the annual reproduction rate for
interspecimen breeding of --\,without loss of generality\,-- cats and whales is equal to zero for a given year. 
Alternatively, one should understand that such events never happen (not that a given year brought no brood).

This grotesque illustration works equally well for the (co)tangent spaces to fibres of the BV-zoo or, in broad terms, for
a definition of Kronecker's symbol $\boldsymbol{\delta}_i^j$ by zero whenever the indices $i\ne j$ do not match so that the
couplings in~\eqref{EqChoiceSign} do not eventuate. We argue that, on top of the absolute pointwise locality for 
couplings~\eqref{EqTwoCouplings}, a superficial definition of $\langle\,,\,\rangle$ by zero for mismatching elements $\vec{e}_i$ and 
$\vec{e}^{{}\,\dagger j}$ of dual bases is a mere act of will\,; in reality those evaluations do not occur. Consequently,
the geometry dictates that
\[
\log\left\langle\vec{e}_i(\bx),{}^{\dagger}(\vec{e}_j)(\bx)\right\rangle=\log1=0
\quad\text{and}\quad
\log\left\langle\vec{e}^{{}\,\dagger j}(\bx),{}^{\dagger}(\vec{e}^{{}\,\dagger i})(\bx)\right\rangle=\log1=0.
\]
Combined with the geometric locality principle~\eqref{EqLocality} 
realized by singular linear integral operators~\eqref{EqDefVariations}, 
this argument finally resolves the paradoxical, \textsl{ad hoc} conventions ${\boldsymbol{\delta}(0)=0}$ and 
${\log\boldsymbol{\delta}(0)=0}$ for Dirac's distribution.

\ack
The author thanks the Organizing committee of XXI International conference `Integrable systems \& quantum symmetries'
(June 11\,--\,16, 2013; CVUT Prague, Czech Republic) for 
cooperation and warm atmosphere during the meeting. 
These notes follow the lecture course which was read by the author in October 2013 at
the Taras Shevchenko National University and Bogolyubov Institute for Theoretical
Physics in Kiev, Ukraine; the author is grateful to BITP for hospitality.
The author thanks M.~A.~Vasiliev and A.~G.~Nikitin for helpful discussions and constructive
criticisms.

This research was supported in part 
by 
JBI~RUG project~103511 (Groningen). 
A~part of this research was done while the author was visiting at 
the $\smash{\text{IH\'ES}}$ (Bures\/-\/sur\/-\/Yvette); 
the financial support and hospitality of this institution are gratefully acknowledged.

\setcounter{section}{1}
\appendix
\section{Proof of Propositions~\protect{\ref{thm:QME}}
and~\protect{\ref{PropPrepareGaugeQME}}}\label{AppProveQME}
\noindent We need the following two lemmas.

\begin{lemma}\label{thm:SchoutenPower}
Let $F \in \overline{H}^n(\pi_\BV)$ be an even integral functional, let $G \in \overline{\mathfrak{N}}^n(\pi_\BV,T\pi_\BV)$ be another functional, and let $n \in \mathbb{N}_{\geq1}$. Then
\[
\schouten{G,F^n} = n\schouten{G,F}F^{n-1}.
\]
\end{lemma}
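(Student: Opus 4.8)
The plan is to prove the statement by induction on $n$, using the Leibniz-type property~\eqref{EqSchoutenOnProduct} of the Schouten bracket established in Theorem~\ref{ThSchoutenOnProduct}(i). The base case $n=1$ is the tautology $\schouten{G,F}=1\cdot\schouten{G,F}\cdot F^0$, reading $F^0$ as the multiplicative unit. The whole argument hinges on a careful bookkeeping of the sign prefactors $(-)^{(\GH(F)-1)\GH(G)}$ that~\eqref{EqSchoutenOnProduct} produces, and on exploiting the hypothesis that $F$ is \emph{even}, i.e.\ $\GH(F)=0$ in $\BBZ_2$.

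First I would record what the parity hypothesis buys us. Since $\GH(F)=0$, the sign in~\eqref{EqSchoutenOnProduct} reduces: for the splitting $F^n=F\cdot F^{n-1}$ we get the exponent $(\GH(G)-1)\cdot\GH(F)=0$, so that prefactor is $+1$. Equally important, $F$ even means $F$ commutes with every factor in the associative product, so powers $F^{n-1}$ are unambiguous and the factor $F^{n-1}$ may be freely collected on the right without extra signs. This is exactly the reason the clean formula $\schouten{G,F^n}=n\schouten{G,F}F^{n-1}$ holds without any Koszul decorations; were $F$ odd, $F^2$ would already be suspect and the statement would need restating.

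For the inductive step, assume the formula for $n-1$. Write $F^n=F^{n-1}\cdot F$ and apply~\eqref{EqSchoutenOnProduct} with second argument the product $G'\mapsto F^{n-1}\cdot F$:
\begin{equation*}
\schouten{G,F^{n-1}\cdot F}=\schouten{G,F^{n-1}}\cdot F+(-)^{(\GH(F^{n-1})-1)\GH(G)}\,F^{n-1}\cdot\schouten{G,F}.
\end{equation*}
Because $F$ is even, $\GH(F^{n-1})=0$, so the surviving sign is $(-)^{-\GH(G)}=(-)^{\GH(G)}$; however, the term it decorates is $F^{n-1}\cdot\schouten{G,F}$, and since $F^{n-1}$ is even it commutes past $\schouten{G,F}$ without sign, giving $\schouten{G,F}\,F^{n-1}$. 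The inductive hypothesis turns the first summand into $(n-1)\schouten{G,F}F^{n-1}$, and the two contributions combine to $n\schouten{G,F}F^{n-1}$, completing the induction. The main obstacle I expect is not conceptual but notational: one must verify that every sign factor genuinely collapses to $+1$ under $\GH(F)=0$, and that the evenness of $F$ (hence of all its powers) legitimately lets $F^{n-1}$ be moved across $\schouten{G,F}$ in the graded-commutative product; once these two sign/parity checks are dispatched, the computation is routine.
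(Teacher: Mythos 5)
Your overall strategy is exactly the paper's: induction on $n$ via the Leibniz property~\eqref{EqSchoutenOnProduct} of Theorem~\ref{ThSchoutenOnProduct}(i), with the evenness of $F$ guaranteeing that all Koszul signs collapse; the only cosmetic difference is that you split $F^n=F^{n-1}\cdot F$ where the paper splits $F^{n+1}=F\cdot F^{n}$.

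However, the displayed inductive step contains a sign error that, as written, makes the argument fail for odd $G$. In~\eqref{EqSchoutenOnProduct}, written for $\schouten{A,B\cdot C}$, the exponent is $(\GH(A)-1)\cdot\GH(B)$ --- the parity of the bracket's \emph{first argument} minus one, times the parity of the product's \emph{first factor}. Applied to $\schouten{G,F^{n-1}\cdot F}$ this gives $(\GH(G)-1)\cdot\GH(F^{n-1})=0$, so the sign is $+1$ outright. You instead wrote $(-)^{(\GH(F^{n-1})-1)\GH(G)}=(-)^{\GH(G)}$, i.e.\ you swapped the two parities in the exponent. The subsequent remark that the even factor $F^{n-1}$ ``commutes past $\schouten{G,F}$ without sign'' is true but cannot repair this: moving an even object across anything costs no sign, so it does not absorb the spurious $(-)^{\GH(G)}$. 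As written, your two contributions sum to $\bigl((n-1)+(-)^{\GH(G)}\bigr)\schouten{G,F}F^{n-1}$, which equals $n\schouten{G,F}F^{n-1}$ only when $G$ is even --- but $G$ is an arbitrary functional in the hypothesis. The fix is simply to quote the exponent of~\eqref{EqSchoutenOnProduct} correctly (as you in fact do in your preliminary paragraph for the splitting $F\cdot F^{n-1}$); then the sign is $+1$ because the first factor $F^{n-1}$ is even, and the induction closes for all~$G$.
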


\begin{proof}
We use induction on Theorem~\ref{ThSchoutenOnProduct}. Note that all signs vanish since $F$ is even, 
meaning that whenever $F$ is multiplied with any other integral functional, the factors may be freely swapped without 
this resulting in minus signs. For $n=1$ the statement is trivial. Suppose the formula holds for 
some $n \in \mathbb{N}_{>1}$, then~$\schouten{G,F^{n+1}}={}$
\[
\schouten{G,F\cdot F^n}
       = \schouten{G,F}\cdot F^n + F\cdot\schouten{G,F^n}
       = \schouten{G,F}\cdot F^n + nF\cdot\schouten{G,F}F^{n-1}
       = (n+1)\schouten{G,F}\cdot F^n,
\]
so that the statement also holds for $n+1$.
\end{proof}

\begin{lemma}
Let $F \in \overline{H}^n(\pi_\BV)$ be an even integral functional, and let $n \in \mathbb{N}_{\geq2}$. Then
\[
\Delta(F^n) = n(\Delta F)\cdot F^{n-1} + \tfrac{1}{2}n(n-1)\schouten{F,F}\cdot F^{n-2}.
\]
\end{lemma}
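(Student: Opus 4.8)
The plan is to prove the formula $\Delta(F^n) = n(\Delta F)\cdot F^{n-1} + \tfrac{1}{2}n(n-1)\schouten{F,F}\cdot F^{n-2}$ by induction on $n$, using as the engine the master deviation-from-derivation identity~\eqref{EqLaplacianOnSchouten} from Theorem~\ref{ThLaplaceOnProduct}, namely $\Delta(F\cdot G)=\Delta(F)\cdot G+(-)^{\GH(F)}\schouten{F,G}+(-)^{\GH(F)}F\cdot\Delta G$. Since $F$ is even, every sign factor $(-)^{\GH(F)}$ collapses to $+1$, and furthermore $F$ commutes with all the functionals appearing (so products may be reordered freely); this is the same simplification already exploited in Lemma~\ref{thm:SchoutenPower}. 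Thus the whole argument reduces to bookkeeping of the two lower-order terms produced at each inductive step.

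First I would dispose of the base cases. For $n=1$ both sides equal $\Delta F$, the second and third summands being empty (the coefficients $n(n-1)$ and the like vanish or the terms are conventionally absent). For $n=2$ I would apply~\eqref{EqLaplacianOnSchouten} directly to $F\cdot F$: since $F$ is even this gives $\Delta(F^2)=\Delta F\cdot F+\schouten{F,F}+F\cdot\Delta F = 2(\Delta F)\cdot F+\schouten{F,F}$, which matches the claimed formula at $n=2$. This $n=2$ computation is essentially the content of the whole statement and makes transparent where the factor $\tfrac12 n(n-1)$ will come from in general, namely as a count of the $\binom{n}{2}$ ways that two factors of $F$ can be paired by a single Schouten bracket.

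For the inductive step, assuming the formula for $n$, I would write $\Delta(F^{n+1})=\Delta(F\cdot F^n)$ and expand once by~\eqref{EqLaplacianOnSchouten}, obtaining $\Delta F\cdot F^n+\schouten{F,F^n}+F\cdot\Delta(F^n)$. Into the middle term I substitute Lemma~\ref{thm:SchoutenPower}, giving $\schouten{F,F^n}=n\schouten{F,F}\cdot F^{n-1}$; into the last term I substitute the inductive hypothesis for $\Delta(F^n)$. Collecting the resulting contributions and using that $F$ is even (so all reorderings are sign-free), the $(\Delta F)\cdot F^n$ terms combine as $1+n=n+1$ copies, and the $\schouten{F,F}\cdot F^{n-1}$ terms combine as $n+\tfrac12 n(n-1)=\tfrac12(n+1)n$ copies. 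This yields exactly $\Delta(F^{n+1})=(n+1)(\Delta F)\cdot F^n+\tfrac12(n+1)n\schouten{F,F}\cdot F^{n-1}$, completing the induction.

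The main obstacle I anticipate is not analytic but combinatorial-cum-notational: one must be scrupulous that the inductive hypothesis is applied to $F\cdot\Delta(F^n)$ and the Lemma to $\schouten{F,F^n}$ \emph{consistently}, and that the commutativity of the even functional $F$ with $\Delta F$, with $\schouten{F,F}$, and with itself is invoked to move all factors of $F$ to the right before collecting coefficients. A subtler point worth a remark is that $\schouten{F,F}$ is genuinely nonzero in general even for an even functional $F$ (the Schouten bracket is parity-\emph{shifted} skew, so $\schouten{F,F}$ need not vanish), so the quadratic term must be retained; this is precisely the phenomenon underlying the quantum master-equation~\eqref{QME}, where $\tfrac12\schouten{S^\hbar,S^\hbar}$ appears. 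Once the even-parity reorderings are handled with care, the coefficient matching $n+\tfrac12 n(n-1)=\tfrac12 n(n+1)$ is a one-line check and the proof closes.
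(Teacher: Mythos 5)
Your proof is correct and follows essentially the same route as the paper's: induction on $n$ with base case $n=2$ from Theorem~\ref{ThLaplaceOnProduct}, expanding $\Delta(F\cdot F^n)$ via~\eqref{EqLaplacianOnSchouten}, substituting Lemma~\ref{thm:SchoutenPower} into the bracket term and the inductive hypothesis into $F\cdot\Delta(F^n)$, and matching coefficients $n+\tfrac12 n(n-1)=\tfrac12 n(n+1)$. The additional remarks on the evenness of $F$ and the non-vanishing of $\schouten{F,F}$ are accurate but not needed beyond what the paper already does.
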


\begin{proof}
We use induction and the previous lemma. For $n=2$ the formula clearly holds by Theorem~\ref{ThLaplaceOnProduct}. 
Suppose that it holds for some $n \in \mathbb{N}_{>2}$, then
\begin{align*}
      \Delta(F^{n+1}) &= \Delta(F\cdot F^n) 
       = (\Delta F)\cdot F^n + \schouten{F,F^n} + F\cdot\Delta(F^n) \\
      &= (\Delta F)\cdot F^n + n\schouten{F,F}\cdot F^{n-1} + F\cdot n(\Delta F)F^{n-1} + \tfrac{1}{2}n(n-1)F\cdot\schouten{F,F}F^{n-2} \\
      &= (n+1)(\Delta F)\cdot F^n + \tfrac{1}{2}(n+1)n\,\schouten{F,F}\cdot F^{n-1},
\end{align*}
so that the statement also holds for $n+1$.
\end{proof}

\begin{proof}[Proof of Proposition~\textup{\ref{thm:QME}}]
For convenience, we denote $F = \frac{\boldi}{\hbar}S^\hbar$. Then
\begin{align*}
      0 &= \Delta(\exp F) 
         = \Delta\left(\sum_{n=0}^\infty \frac1{n!}F^n\right)
         = \sum_{n=0}^\infty \frac1{n!}\Delta(F^n) \\
        &= \sum_{n=1}^\infty \frac{n}{n!}(\Delta F)\cdot F^{n-1} + \sum_{n=2}^\infty\frac{1}{2n!}n(n-1)\schouten{F,F}\cdot F^{n-2} \\
        &= (\Delta F)\cdot\sum_{n=1}^\infty \frac{1}{(n-1)!}F^{n-1} + \frac12\schouten{F,F}\cdot\sum_{n=2}^\infty\frac{1}{(n-2)!}F^{n-2} \\
        &= \left(\Delta F + \tfrac12\schouten{F,F}\right)\cdot\exp F 
         = \left(\frac{\boldi}{\hbar}\Delta S^\hbar - \frac1{2\hbar^2}\schouten{S^\hbar,S^\hbar}\right)\cdot\exp\left(\tfrac{\boldi}{\hbar}S^\hbar\right),
\end{align*}
from which the result follows.
\end{proof}

\begin{proof}[Proof of Proposition~\textup{\ref{PropPrepareGaugeQME}}
\textup{(}cf.\ Proposition~\textup{\ref{PropGaugeQME}} on p.~\textup{\pageref{PropGaugeQME})}]
Again, 
let us set $F = \frac{\boldi}{\hbar}S^\hbar$. 
We first calculate, using Lemma~\ref{thm:SchoutenPower}, 
   \[
      \schouten{\mathcal{O},\exp F}
      = \sum_{n=0}^\infty\frac1{n!}\schouten{\mathcal{O},F^n}
      = \sum_{n=1}^\infty\frac{n}{n!}\schouten{\mathcal{O},F}F^{n-1}
      = \schouten{\mathcal{O},F}\exp F.
   \]
Then
   \begin{align*}
      0 &= \Delta(\mathcal{O}\exp F) 
         = (\Delta \mathcal{O})\exp F + \schouten{\mathcal{O}, \exp F} + \mathcal{O}\cdot\Delta(\exp F) \\
        &= \big(\Delta \mathcal{O} + \schouten{\mathcal{O}, F}\big)\exp F 
         = \left(\Delta \mathcal{O} + \tfrac{\boldi}{\hbar}\schouten{\mathcal{O}, 
S^\hbar}\right)\exp\left(\tfrac{\boldi}{\hbar}S^\hbar\right),
   \end{align*}
   from which the assertion follows.
\end{proof}

\section*{References}


\begin{thebibliography}{77}
\bibitem{AKZS}
\by{Alexandrov M., Schwarz A., Zaboronsky O., Kontsevich M.} (1997)
The geometry of the master equation and topological quantum field
theory, \jour{Int.~J.\ Modern Phys.} \vol{A12}:7, 1405--1429.


\bibitem{MMKM}
\by{Arnol'd V. I.} (1996)
\book{Mathematical methods of classical mechanics}.
Grad.\ Texts in Math.~\vol{60}, Springer\/--\/Verlag, NY.

\bibitem{BarnichHabilitation}
\by{Barnich G.} (2000) Classical and quantum aspects of the extended antifield formalism, \book{These d'agregation ULB}, Proc.\ Spring School `QFT and Hamiltonian systems' (May 2--7, 2000; Calimanesti, Romania), 94~p.
\texttt{arXiv:hep-th/0011120}

\bibitem{BarnichBialowieza}
\by{Barnich G.} (2010)
A note on gauge systems from the point of view of Lie algebroids, AIP Conf.\ Proc.~\vol{1307}
XXIX Workshop on Geometric Methods in Physics (June~27 -- July~3, 2010; Bia\l owie\.za, Poland),
7--18.\ \texttt{arXiv:math-ph/1010.0899}

\bibitem{BarnichBrandtHenneaux1995}
\by{Barnich G., Brandt F., Henneaux M.} (1995)
Local BRST cohomology in the antifield formalism. 
I., 
II. 
\jour{Commun.\ Math.\ Phys.} 
  \vol{174}:1, 57--91, 93--116.\ %
\texttt{arXiv:hep-th/9405109}

\bibitem{BarnichBrandtHenneauxPhysRep}
\by{Barnich G., Brandt F., Henneaux M.} (2000)
Local BRST~cohomology in gauge theories,
\jour{Phys.\ Rep.} \vol{338}:5, 439--569.
\texttt{arXiv:hep-th/0002245}

\bibitem{BV}
\by{Batalin I., Vilkovisky G.} (1981)
Gauge algebra and quantization, \jour{Phys.\ Lett.} \vol{B102}:1, 27--31.
\nonum
\by{Batalin I. A., Vilkovisky G. A.} (1983)
Quantization of gauge theories with linearly dependent generators, \jour{Phys.\ Rev.} \vol{D29}:10, 2567--2582. 

\bibitem{BRST}
\by{Becchi C., Rouet A., Stora R.} (1976) Renormalization of gauge theories, \jour{Ann.\ Phys.} \vol{98}:2, 287--321.
\nonum
\by{Tyutin I.~V.} (1975) Gauge invariance in field theory and statistical mechanics, \jour{Preprint} Lebedev FIAN no.~39.

\bibitem{BeilinsonDrinfeld}
\by{Beilinson A., Drinfeld V.} (2004)
\book{Chiral algebras.}
AMS Colloq.\ Publ.~\vol{51}, AMS, Providence, RI. 

\bibitem{BerezinAA}
\by{Berezin F.\,A.} (1987)
\book{Introduction to superanalysis.}
Math.\ Phys.\ \& Appl.\ Math.~\vol{9}. D.~Reidel Publ.\ Co., Dordrecht.

\bibitem{CattaneoFelderCMP2000}
\by{Cattaneo A. S., Felder G.} (2000) 
A~path integral approach to the Kontsevich quantization formula, \jour{Commun.\ Math.\ Phys.}
\vol{212}:3, 591--611.

\bibitem{Costello2007}
\by{Costello K. J.} (2007)
Renormalisation and the Batalin\/--\/Vilkovisky formalism, 88~p.
\jour{Preprint} \texttt{arXiv:0706.1533} [math.QA]

\bibitem{CostelloBook}
\by{Costello K.} (2011)
Renormalization and effective field theory.
Math.\ Surveys and Monographs~\vol{170}, AMS, Providence, RI.

\bibitem{Dorfman}
\by{Dorfman I. Ya.} (1993) \book{Dirac structures}, J.~Whiley \& Sons.

\bibitem{Ehresmann} 
\by{Ehresmann~C.} (1953)
Introduction a la th\'eorie des structures infinit\'esimales et des
pseudo\/-\/groupes de Lie,
\jour{Geometrie Differentielle}, Colloq.\ Inter.\ du Centre
Nat.\ de la Recherche Scientifique, Strasbourg, 97--110.

\bibitem{Eisenbud}
\by{Eisenbud D.} (2005)
\book{The geometry of syzygies}. A second course in commutative algebra and algebraic geometry. Grad.\ Texts in Math.~\vol{229}, Springer\/--\/Verlag, 
NY. 

\bibitem{GambiniPullinQG}
\by{Gambini R., Pullin J.} (1996)
\book{Loops, knots, gauge theories and quantum gravity}. 
Cambridge Monographs on Math.\ Phys., CUP, Cambridge. 

\bibitem{GelfandDorfman}
\by{Gel'fand I. M., Dorfman I. Ja.} (1981)
Schouten bracket and Hamiltonian operators,
\jour{Functional Anal.\ Appl.} \vol{14}:3, 223--226.

\bibitem{GelfandShilov}
\by{Gel'fand I. M., Shilov G. E.} (1964) 
\book{Generalized functions.}~\vol{1}.
Properties and operations. 
Academic Press, 
NY\/--\/London.
\nonum
\by{Gel'fand I. M., Shilov G. E.} (1968) 
\book{Generalized functions.}~\vol{2}.
Spaces of fundamental and generalized functions. 
Academic Press, 
NY\/--\/London. 

\bibitem{GitmanTyutin}
\by{Gitman D. M., Tyutin I. V.} (1990)
\book{Quantization of fields with constraints}.
Springer Ser.\ Nucl.\ Part.\ Phys., 
Springer\/-\/Verlag, Berlin. 

\bibitem{GomisParisSamuel}
\by{Gomis J., Par\'\i s J., Samuel S.} (1995)
Antibracket, antifields and gauge\/-\/theory quantization,
\jour{Phys.\ Rep.} \vol{259}:1-2, 1--145.

\bibitem{HenneauxTeitelboim}
\by{Henneaux M., Teitelboim C.} (1992)
\book{Quantization of gauge systems}. 
Princeton University Press, Princeton, NJ.


\bibitem{Memorandum1641}
\by{Igonin S., Verbovetsky A., Vitolo R.} (2002)
On the formalism of local variational differential operators,
\jour{Memorandum}~1641 (Faculty of Math.\ Sci., Univ.\ Twente, Enschede, The Netherlands). 

\bibitem{Kiev2003}
\by{Igonin S., Verbovetsky A., Vitolo R.} (2004)
Variational multivectors and brackets in the geometry of jet spaces,
\book{Symmetry in nonlinear mathematical physics (Kiev, 2003)}.
\jour{Pr.\ Inst.\ Mat.\ Nats.\ Akad.\ Nauk Ukr.\ Mat.\ Zastos.} \vol{50},
1335--1342.

\bibitem{HidingAnomalies}
\by{de Jonghe F., Siebelink R., Troost W.} (1993) Hiding anomalies,
\jour{Phys.\ Lett.} \vol{B306}:3--4, 295--300.

\bibitem{Lstar}
\by{Kersten P., Krasil'shchik I., Verbovetsky A.} (2004)
Hamiltonian operators and $\ell^*$-coverings, 
\jour{J.~Geom.\ Phys.} \vol{50}:1--4, 273--302.

\bibitem{VoronovKhudaverdian2013}
\by{Khudaverdian H. M., Voronov Th. Th.} (2013)
Geometry of differential operators of second order, the algebra of densities, and groupoids, \jour{J.~Geom.\ Phys.} \vol{64}, 31--53.

\bibitem{TwelveLectures}
\by{Kiselev A.\,V.} (2012)
The twelve lectures in the (non)\/commutative geometry of 
differential equations, 
\jour{Preprint} $\smash{\text{IH\'ES}}$/M/12/13 (Bures\/-\/sur\/-\/Yvette, 2012), 140~p.

\bibitem{SQS11}
\by{Kiselev A. V.} (2012) On the variational noncommutative Poisson geometry,
\jour{Physics of 
Particles and 
Nuclei} 
\vol{43}:5, 663--665.\ \texttt{arXiv:1112.5784} [math-ph]

\bibitem{Praha11}
\by{Kiselev A. V.} (2012)
Homological evolutionary vector fields in Korteweg\/--\/de Vries,
Liouville, Maxwell, and several other models,
\jour{J.~Phys.\ Conf.\ Ser.} \vol{343}, Proc.\ 7th Int.\ workshop QTS-7
`Quantum Theory and Symmetries' (August 7--13, 2011; CVUT Prague,
Czech Republic), 012058, 20~p.
\texttt{arXiv:1111.3272} [math-ph]

\bibitem{Protaras2012}
\by{Kiselev A.\,V.} (2013) Towards an axiomatic noncommutative geometry of quantum space and time,
Proc.\ 6th Int.\ workshop `Group analysis of differential equations and integrable systems'
(June 18--20, 2012; Protaras, Cyprus), 111--126.
\texttt{arXiv:1304.0234} [math-ph]

\bibitem{Lorentz12}
\by{Kiselev A.\,V.} (2012)
The calculus of multivectors on noncommutative jet spaces,
16~p.\ \texttt{arXiv:1210.0726} [math.DG]

\bibitem{Norway13}
\by{Kiselev A.\,V., Krutov A.\,O.} (2013)
Non\/-\/Abelian Lie algebroids over jet spaces,
23~p.\ \texttt{arXiv:1305.4598} [math.DG]

\bibitem{Galli10}
\by{Kiselev A. V., van de Leur J. W.} (2011)
Variational Lie algebroids and homological evolutionary vector fields,
\jour{Theor.\ Math.\ Phys.} \vol{167}:3, 772--784.
\texttt{arXiv:1006.4227} [math.DG]

\bibitem{Laplace13}
\by{Kiselev A.\,V., Ringers S.} (2013)
On the geometry of the Batalin\/--\/Vilkovisky Laplacian,
30~p.\ \texttt{arXiv:1302.4388} [math.DG]

\bibitem{KontsevichCyclic}
\by{Kontsevich M.} (1993)
Formal (non)\/commutative symplectic geometry, 
\book{The Gel'fand Mathematical Seminars, 1990-1992}
(L.~Corwin, I.~Gelfand, and J.~Lepowsky, eds),
Birk\-h\"au\-ser, Boston MA, 173--187.

\bibitem{KontsevichSoibelman}
\by{Kontsevich M., Soibelman Y.} (2009)
Notes on $A_\infty$-\/algebras, $A_\infty$-\/categories and non\/-\/commutative geometry.
\book{Homological mirror symmetry.\ New developments and perspectives}, Lect.\ Notes in Phys. \vol{757}
(A.~Kapustin, M.~Kreuzer, and K.-G.~Schlesinger, eds),
Springer, Berlin, 153--219.

\bibitem{KontsevichVishikLong}
\by{Kontsevich M., Vishik S.} (1994)
Determinants of elliptic pseudo\/-\/differential operators,
155~p., \texttt{arXiv:hep-th/9404046}

\bibitem{YKS2008SIGMA}
\by{Kosmann\/-\/Schwarzbach Y.} (2008) Poisson manifolds, Lie algebroids, modular classes: a survey, \jour{SIGMA} 
\vol{4}:5, 30~p.\ \texttt{arXiv:0710.3098} [math.SG] 

\bibitem{ClassSym}
\by{Krasil'shchik I. S., Vinogradov A. M.}, eds.\ (1999)
\book{Symmetries and conservation laws for differential equations of mathematical physics}. (
{Bocharov A. V., Chetverikov V. N., Duzhin S. V.  \textit{et al}.})
AMS, Providence, RI.

\bibitem{KuperCotangent}
\by{Kupershmidt B. A.} (1980)
Geometry of jet bundles and the structure of Lagrangian and
Hamiltonian formalisms. \book{Geometric methods in mathematical
physics}
(Proc.\ NSF\/--\/CBMS Conf., Univ. Lowell, Mass., 1979),
Lecture Notes in Math.\ \vol{775}, Springer, Berlin, 162--218.

\bibitem{Magri}
\by{Magri F.} (1978) A simple model of the integrable equation,
\jour{J.~Math.\ Phys.} \vol{19}:5, 1156--1162.

\bibitem{McCloud}
\by{McCloud P.} (1994)
Jet bundles in quantum field theory: the BRST\/-\/BV method, 
\jour{Class.\ Quant. Grav.} \vol{11}:3, 567--587.

\bibitem{MerkulovWillwacher}
\by{Merkulov S., Willwacher Th.} (2010) 
Grothendieck\/--\/Teichm\"uller and Batalin\/--\/Vilkovisky, 6~p.,
\texttt{arXiv:1012.2467} [math.QA]

\bibitem{Olver}
\by{Olver P. J.} (1993) \book{Applications of Lie groups to differential
equations}, Grad.\ Texts in Math.\ \vol{107} (2nd ed.), 
Springer\/--\/Verlag, NY. 

\bibitem{OlverSokolovCMP1997}
\by{Olver P. J., Sokolov V. V.} (1998) Integrable evolution
equations on associative algebras, \jour{Comm.\ Math.\ Phys.}
\vol{193}:2, 245--268.

\bibitem{RovelliQG}
\by{Rovelli C.} (2004)
\book{Quantum gravity}. 
Cambridge Monographs on Math.\ Phys., CUP, Cambridge. 

\bibitem{SchwarzBV}
\by{Schwarz A.} (1993)
Geometry of Batalin\/--\/Vilkovisky quantization,
\jour{Commun.\ Math.\ Phys.} \vol{155}:2, 249--260.

\bibitem{ThiemannQG}
\by{Thiemann Th.} (2007)
\book{Modern canonical quantum general relativity}.
Cambridge Monographs on Math.\ Phys., CUP, Cambridge. 

\bibitem{TroostNieuwProeyen}
\by{Troost W., van Nieuwenhuizen P., van Proeyen A.} (1990)
Anomalies and the Batalin\/--\/Vilkovisky Lagrangian formalism,
\jour{Nucl.\ Phys.} \vol{B333}:3, 727--770.

\bibitem{VTSh}
\by{Voronov B. L., Tyutin I. V., Shakhverdiev Sh.\ S.} (1999)
On local variational differential operators in field theory,
\jour{Theor.\ Math.\ Phys.} \vol{120}:2, 1026--1044.
\texttt{arXiv:hep-th/9904215}

\bibitem{Voronov2002}
\by{Voronov T.} (2002) Graded manifolds and Drinfeld doubles for Lie bialgebroids, in: \book{Quantization, Poisson brackets, and beyond}
(T.~Voronov, ed.) Contemp.\ Math.\ \vol{315}, AMS, Providence, RI,
131--168.\ \texttt{arXiv:math.DG/0105237}

\bibitem{WittenAntibracket}
\by{Witten E.} (1990)
A note on the antibracket formalism,
\jour{Modern Phys.\ Lett.} \vol{A5}:7, 487--494.

\bibitem{Zinn-Justin:CriticalPhenomena}
\by{Zinn\/-\/Justin J.} (1993)
\book{Quantum field theory and critical phenomena}, 2nd~ed., 
Int.\ Ser.\ of Monographs on Phys.~\vol{85}, Oxford Sci.\ Publ., The Clarendon Press~-- Oxford Univ.\ Press, NY. 

\bibitem{ZinnJustin1975}
\by{Zinn\/-\/Justin J.} (1975)
Renormalization of gauge theories.
\book{Trends in Elementary Particle Theory}
(Lect.\ Notes in Phys.~\vol{37}
H.~Rollnick and K.~Dietz eds), Springer, Berlin, 2--39.
\nonum
\by{Zinn\/-\/Justin J.} (1976)
\book{M\'ethodes en th\'eorie des champs / 
Methods in field theory.} 
(\'Ecole d'\'Et\'e de Physique Th\'eorique, Session XXVIII, tenue \`a Les Houches, 28 Juillet\,--\,6 Septembre 1975;
R.~Balian and J.~Zinn\/-\/Justin, eds),
North\/-\/Holland Publ.\ Co., 
Amsterdam~etc. 

\end{thebibliography}
\end{document}